\newtheorem{theorem}{Theorem}[section]
\newtheorem{corollary}[theorem]{Corollary}
\newtheorem{proposition}[theorem]{Proposition}
\newtheorem{remark}[theorem]{Remark}
\renewcommand{\theequation}{\arabic{section}.\arabic{equation}}
\newtheorem{Them}{Theorem}[section]
\newtheorem{Remark}{Remark}[section]
\newcommand{\E}{\mathrm{E}}
\renewcommand{\theequation}{\arabic{section}.\arabic {equation}}
\numberwithin{equation}{section}
\begin{document}
 \makeatletter
 \def\@setauthors{%
    \begingroup
    \def\thanks{\protect\thanks@warning}%
    \trivlist \centering\footnotesize \@topsep30\p@\relax
    \advance\@topsep by -\baselineskip
    \item\relax
    \author@andify\authors
    \def\\{\protect\linebreak}%
    {\authors}%
    \ifx\@empty\contribs \else ,\penalty-3 \space \@setcontribs
    \@closetoccontribs \fi
    \endtrivlist
    \endgroup } \makeatother
 \baselineskip 16pt
 \title[{{\tiny Optimal mix among PAYGO, EET and individual savings }}]
 {{\tiny Optimal mix among PAYGO, EET and individual savings }}
 \vskip 10pt\noindent
 \author[{Lin He, Zongxia Liang, Zhaojie Ren, Yilun Song}]
 {\tiny { \tiny  Lin He$^{a,\dag}$, Zongxia Liang$^{b,\ddag}$, Zhaojie Ren$^{b,*}$, Yilun Song$^{b,\S}$,}
    \vskip 10pt\noindent
    {\tiny ${}^a$School of Finance, Renmin University of China, Beijing
        100872, China \vskip 10pt\noindent\tiny ${}^b$Department of
        Mathematical Sciences, Tsinghua University, Beijing 100084, China }
    \footnote{\\
    	$ \dag$ email: helin@ruc.edu.cn\\
    	$ \ddag$ email: liangzongxia@mail.tsinghua.edu.cn\\
    	$*$ Corresponding author, email:  rzj20@mails.tsinghua.edu.cn\\
    	$\S$   email: songyl18@mails.tsinghua.edu.cn
    	   }}
 \maketitle
 \noindent
\begin{abstract}
   In order to deal with the aging problem, pension system is actively transformed into the funded scheme. However, the funded scheme does not completely replace PAYGO (Pay as You Go) scheme and there exist heterogeneous mixes among PAYGO, EET (Exempt, Exempt, Taxed) and individual savings in different countries. In this paper, we establish the optimal mix by solving a Nash equilibrium between the pension participants and the government. Given the obligatory PAYGO and EET contribution rates, the participants choose the optimal asset allocation of the individual savings and the consumption policies to achieve the objective. The results extend the ``Samuelson-Aaron" criterion to age-dependent preference orderings. And we identify three critical ages to distinguish the multiple outcomes of preference orderings based on heterogeneous characteristic parameters. The government is fully aware of the optimal feedback of the participants. It chooses the optimal PAYGO and EET contribution rates to maximize the overall utility of the participants weighted by each cohort's population. As such, the negative population growth rate leads to the decline of the PAYGO attractiveness as well as the increase of the older cohorts' weight in the government decision-making. The optimal mix is the comprehensive result of the two effects.
    \vskip 10 pt  \noindent
     JEL Classifications: G22,  C61, D81.
    \vskip 5pt \noindent
    2010 Mathematics Subject Classification:  91G05, 91B05.
    \vskip 5pt  \noindent
    Keywords:  Optimal mix; PAYGO pension; EET pension; Nash
    equilibrium; Shrinking population.
\end{abstract}
\vskip 15pt
\setcounter{equation}{0}
\section{{ {\bf Introduction}}}
Along with the decline of
fertility rate and the enlarge of life expectancy, the
sustainability and the return efficiency of PAYGO pension are facing
great challenges. Thus, the government starts to provide the funded
pension as an alternative option  (obligatorily or voluntarily). In order to improve the participation rate of the funded pension and guarantee the old-age welfare, preferential taxation policy is usually provided. EET pension is the one that contributions are exempt from tax, investment returns are exempt from tax, but the proceeds of pension savings are taxable. Accordingly, it exhibits tax saving properties. Usually, the pension system is composed of three pillars. The first pillar is typically PAYGO pension. The second pillar and part of the third pillar are EET pension. And the rest part is individual savings. Interestingly, we observe that the funded pension (even EET) does not completely replace PAYGO pension under the scenario of shrinking population and serious aging problem. And there exist
heterogeneous mixes of the pension schemes in different countries. For example, EET constitutes the majority of the pension in the U.S., the U.K. and the Netherlands. Meanwhile, PAYGO constitutes the
majority in China, Germany and France (cf. \cite{PC2003} and \cite{BCLM2017}).
\vskip 5pt
The optimal mix between the funded and unfunded pensions has been an
important and controversial topic for decades. According to the
classical ``Samuelson-Aaron" criterion in \citet{SAM1958} and
\citet{AAR1966}, there exists an exclusive optimal pension scheme
determined by the salary growth rate, population growth rate and
investment return. \citet{Merton1983} originally explores PAYGO
pension as a vehicle to make the labor capital exchangeable. Thus,
it contributes to improve the market completeness and the
individual's overall utility. Later, PAYGO pension is  widely
recognized as a ``quasi-asset". The mix of PAYGO pension and funded
pension helps to improve the participants' welfare due to risk
diversification and longevity risk sharing under the mean-variance
objective and the random death settings (cf.
\citet{HL1997}, \citet{DKO2000}, \cite{DMS2006},  \citet{BB2009}, \citet{CDP2011},
\citet{GLS2012} and \citet{BRV2012}).
In this paper, we assume that the major risks are perfectly positively correlated. Thus, we can eliminate the risk diversification effect and extend the ``Samuelson-Aaron" criterion accordingly. That is, each cohort has an age-dependent preference ordering among PAYGO, EET and individual savings. Based on this assumption, the existence of the optimal mix is formed by optimizing the government's weighted objectives of different cohorts. Thus, we establish a new explanation for the existence of the optimal mix.
\vskip 5pt
Inspired by \citet{PC2003} and \cite{HLSY2021}, we model the optimal
mix as the optimum of the Nash equilibrium between the participants
and the government. The Nash equilibrium is originally established
by \citet{NASH1950} and later studied by \cite{SEL1965}. 
The Nash equilibrium is used to depict the  
non-cooperative game between two sides.  And 
the subgame perfect Nash equilibrium is the Nash equilibrium that does not involve any non-credible threat. The settings of this paper perfectly fit the subgame perfect Nash equilibrium model, and we solve it by backward induction, starting at the end of the dynamic game and reasoning backwards step by step. Given the PAYGO and EET contribution
rates, the participants dynamically choose the optimal asset
allocation of the individual savings and the consumption policies to maximize the overall utility
of the future consumption. Moreover, 
the government is fully aware of the participants' optimal feedback functions with respect to any given contribution rates. The government chooses optimal PAYGO and
EET contribution rates to maximize the weighted utility of the
participants based on the optimal feedback functions. As such, the shrinking population has two effects on the optimal mix. One is that it reduces the attractiveness of PAYGO pension. The other is that it increases the older cohorts' weight in the government decision-making. The comprehensive impacts help to explain the coexistence of multiple pension mixes in different countries.
\vskip 5pt
The characteristic parameters are decisive in determining the
participants' preference ordering among the pension schemes. We first
establish the stochastic differential equations to depict the
characteristics of PAYGO, EET and individual savings in longevity
risk sharing, return efficiency and preferential taxation. For PAYGO
pension, it is collectively managed and we explore the two-stage
process as in \citet{BDOW2004}, \citet{JIN2010}, \citet{WLS2018}
and \cite{CHX2018} to depict the contribution before retirement and
the benefit after retirement. Obviously, the survival participants
could continuously receive the benefit and thus the longevity risk
is managed by intergenerational transfer. Because of the increasingly shrinking
population and serious aging problem, the comparative efficiency of this
``quasi-asset" declines. For EET pension, it is accumulated
in the personal account. Meanwhile, the investment is operated by
the professional institutions and the comparative efficiency is
relatively higher than the one of the individual savings. Particularly, EET
pension is only charged by a relatively lower tax rate after
retirement and thus it has preferential taxation properties. Because
EET pension is obligatorily provided by the government, it should
exhibit the longevity risk sharing properties. As such, we assume that EET
pension is mandatorily converted into life annuity at retirement. Thus,
EET pension has less flexibility and lower comparative efficiency after
retirement, compared with individual savings. Following the settings
in \cite{DKO2000}, \cite{DM2015} and \cite{HLSY2021}, we treat the private investment
as the individual savings. Besides, we assume that the sum of the PAYGO and EET contribution
rates can not exceed a maximum.
\vskip 5pt
For the participants, the closed-form value function  could be
established by using variational methods and  HJB
(Hamilton-Jacobi-Bellman) equations. Interestingly, the value
function is monotonous to the PAYGO and EET contribution rates. And
the signs of three coefficients which measure the comparative efficiencies of every two pensions
determine the preference orderings. Moreover, the comparative efficiencies among the pensions are
heterogeneous for different cohorts. For example, the younger
cohorts prefer EET pension for its effective investment return during the
long accumulation period. Meanwhile, the older cohorts prefer PAYGO
pension because that the rise of the PAYGO contribution rate
increases their benefit directly. Precisely, we establish three
critical ages as the boundaries to distinguish the preference in Theorems \ref{them2}-\ref{them4}, and
eventually set up each cohort's preference ordering based on the
characteristic parameters. According to the wide range of the
characteristic parameters, there are multiple outcomes of the
preference ordering, and thus we summarize it in the flowchart of Fig. \ref{flow}. Particularly, two outcomes well depict the scenarios in the U.S. and China.
These are the main contributions of this paper.
\vskip 5pt
For the government, it chooses the optimal
PAYGO and EET contribution rates to maximize the weighted utility of
all the participants. Naturally, it is the comprehensive result of
considering the heterogeneous preference orderings of the different cohorts.
Thus, the weight parameter plays an important role in determining
the optimal mix. \citet{HS1989} and \citet{MV1996} use
population as the weight parameter and find that aging leads to the
rise of the older participants' political power. Following the ideas,
we also take the population as the weight parameter. Echoing the
multiple outcomes of preference orderings and the complicated
impacts induced by the weight parameter (or the population growth
rate), there will be multiple optimal mixes. This result contributes
to explain the coexistence of multiple pension mixes in different
countries. Particularly, the negative population growth rate reduces
the attractiveness of PAYGO pension. However, it leads to the rise
of the older cohorts' weight parameter in the government
decision-making. And these eventually lead to the optimal mixes of half PAYGO and half EET in the U.S., and the majority of PAYGO in China. Thus, even considering the
shrinking population and serious aging problem, PAYGO pension is still an important part of the old-age security system. In addition, we consider the model of equal weighted objective. Under this objective, we obtain similar results and the optimal PAYGO contribution rate slightly shrinks. Furthermore, the sum of the PAYGO and EET contribution rates reaches the maximum in the two cases. In these circumstances, the welfare achieved by the obligatory pensions exceeds the one by the voluntary savings.
\vskip 5pt
We also provide two modifications of the model settings to depict the more general actual situations.
First, considering the fact that EET pension is voluntarily participated in
some countries, we also study this case. In the new Nash
equilibrium, the participants choose the asset allocation policies of individual savings,
consumption and a constant contribution rate of EET pension to achieve
the objective under the assumption that the PAYGO contribution rate
is given. Moreover, 
the government chooses optimal PAYGO
contribution rate to maximize the overall utility of the
participants. Fortunately, this is a
degenerate case of the previous optimization problem and the
closed-form solution can be obtained. Second, we explore a new demographic model with a one-off shock to depict the ``baby-boom" impacts during 1946$\sim$1964. Although there are no closed-form results, we can establish the similar preference orderings  and the optimal mix through numerical methods.
\vskip 5pt
The main contribution of this paper is threefold. First, we
originally establish the stochastic differential equations to depict
the characteristics of PAYGO, EET and individual savings. Particularly,  EET pension has special properties in preferential taxation, return efficiency and longevity risk sharing.
Moreover, these characteristic parameters are decisive in determining the
preference orderings among the three pensions. Under the perfect risk correlation assumption, the participants' value function is monotonous to the
contribution rates. Thus, we extend the ``Samuelson-Aaron" criterion to the age-dependent preference orderings. And each cohort's preference ordering is determined by the comparative efficiencies of the three pension schemes. Moreover, we identify three critical ages as the preference
boundaries and eventually obtain the multiple outcomes of the preference orderings based on the heterogeneous
characteristic parameters. Second, we establish
the Nash equilibrium between the participants and the government to
determine the optimal PAYGO and EET contribution rates. The optimal mix is the comprehensive result of each cohort's preference ordering and the political power. As such, we provide a reasonable explanation for the coexistence of multiple optimal mixes. The last, along with the fact of
shrinking population, the attractiveness of PAYGO pension and the
political power of the younger cohorts who do not prefer PAYGO
pension both decline.  The results show that as countries with serious aging problem, the United States should increase the share of PAYGO pension to meet the needs of the older cohorts, while China should increase the share of  EET pension to improve the overall return efficiency.
%
%
%
%
\vskip 5pt
The remainder of this paper is organized as follows. Section \ref{s2}
establishes the stochastic optimization problem of the participants
with the access of the obligatory PAYGO and EET pensions. And the closed-form value
function is obtained by variational methods. In
Section \ref{ordering}, based on the monotonicity of the value function to the
PAYGO and EET contribution rates, we establish three critical ages.
Moreover, we summarize the multiple outcomes of the age-dependent preference
orderings among the three pensions in the flowchart of Fig. \ref{flow}. Section \ref{government} establishes
the Nash equilibrium between the participants and the government.  In Section \ref{vgovernment}, we
add a new Nash equilibrium with the voluntary EET pension. In Section \ref{s6}, we do some numerical simulations, and Section \ref{s7} concludes this paper.


 \vskip 15pt
\setcounter{equation}{0}

\section{{ {\bf The optimization problem of the pension participants}}}\label{s2}
We consider a two-stage game of complete and perfect information between the participants and the government. In the first stage, the government determines the PAYGO and EET contribution rates. In the second stage, given the obligatory PAYGO and EET contribution rates, the participants choose the optimal asset allocation of the individual savings and the consumption policies to maximize the overall utilities. Using backward induction, we first solve the optimization problem of the participants in the second stage.
 This can be formulated as a stochastic control problem in Subsection \ref{sec2.3} and solved in Subsection \ref{sec2.4}. Using the variational method, we obtain the closed-form solution and the value function of the optimization problem.

First, considering the access of PAYGO and EET pensions, we will establish a unified stochastic differential equation that the participant's wealth process satisfies. Particularly, we assume that the major risks are perfectly positively correlated. Thus, we can eliminate the risk diversification effect and extend the ``Samuelson-Aaron" criterion accordingly. The unified risk is described as the following standard Brownian motion.
\vskip 5pt
Let $(\Omega,\mathcal{F},\{\mathcal{F}_t\}_{0\leq t\leq T},P)$ be a filtered complete probability space satisfying the usual conditions. The filtration $\{\mathcal{F}_t\}_{0\leq t\leq T}$ is generated by a standard Brownian motion $B=\{B(t),0\leq t\leq T\}$ on $(\Omega,\mathcal{F},P)$, i.e., $\mathcal{F}_t=\sigma\{B(s),0\leq s\leq t\}$ which represents the information until time $t$.
\subsection{Market model}
We assume that there are two assets in the financial market, a risk-free asset and a risky asset. The prices of the risk-free asset and the risky asset satisfy the following stochastic differential equations (SDEs):
\begin{align*}
	&\left\{
	\begin{array}{lr}
		dS_0(t)=rS_0(t)dt, \\
		S_0(0)=s_0,
	\end{array}
	\right.
	\\
	&\left\{
	\begin{array}{lr}
		dS_1(t)=\mu S_1(t)dt+\sigma S_1(t)dB(t),   \\
		S_1(0)=s_1,
	\end{array}
	\right.
\end{align*}
where $r$ represents the risk-free interest rate,  $\mu$ and $\sigma$ are the expected return and the volatility of the risky asset, respectively. $s_0$ and $s_1$ are the prices of the risk-free asset and the risky asset at time $0$. Without loss of generality, we assume  $\mu>r>0$ and $\sigma>0$.
\vskip 5pt
The average salary $W=\{W(t),0\leq t \leq T\}$ satisfies the following SDE:
\begin{align}
	\left\{
	\begin{array}{lr}
		dW(t)=\gamma W(t)dt+\xi W(t)dB(t),  \\
		W(0)=W_0,
	\end{array}
	\right.\label{equ1}
\end{align}
where $\gamma$ and $\xi$ are the expected growth rate and the volatility of the average salary, respectively. $W_0$ is the average salary at time $0$.
\vskip 5pt

Notably, Brownian motion $B=\{B(t),0\leq t\leq T\}$ is the common stochastic source of the risky asset, the average salary and the EET pension return thereafter. The risks are all influenced by the economic capacity and vitality in practice. Moreover, the closed-form solution of the optimization problem in Theorem \ref{them1} and the exclusive preference orderings in Theorems \ref{them2}-\ref{them4} are only valid under the perfect correlation assumption.

\subsection{Participant's wealth process}
The life cycle of the participants is divided into two periods: the working period and the retirement period. While working, the participants earn salary and contribute two fixed proportions to PAYGO and EET pensions obligatorily. After retirement, the participants no longer earn salary and they benefit from PAYGO and EET pensions in different ways. Furthermore, the investment of the individual savings is permitted throughout the life cycle. Therefore, we first establish the stochastic differential equations that the participant's wealth satisfies respectively in the two periods, and then establish a unified equation.
\vskip 5pt
For the demographic model, we assume a constant negative population growth rate to depict the scenario of shrinking population. Although the demographic trend changes sometimes, it will remain unchanged for the subsequent period of time. Moreover, the volatility of the population growth rate is relatively small, compared with the risky investment return and the salary growth rate. Thus, we explore the constant population growth rate in the baseline model. And we introduce a one-off shock model to depict the ``baby-boom" effect in Appendix \ref{babyboom}. However, under this model, we cannot obtain the preference orderings analytically like in Theorems \ref{them2}-\ref{them3}. Besides, in the numerical analysis part, the mortality rate is decreased to study the impacts of longevity risk.
\vskip 5pt
We assume that the participants join the pension at age $a$, retire at age $\tau$ and the maximal survival age is $\omega$. The number of the new entrants aged $a$ joining the pension at time $t$ is as follows:
\begin{align*}
    n(t)=n_0e^{\rho t},
\end{align*}
where $n_0$ is the number of the new entrants at time $0$ and $\rho$ is the constant negative population growth rate. 
\vskip 5pt
\vskip 5pt
Meanwhile, we assume that the force of mortality follows Makeham's law (cf. \cite{DHW2013}). As such, the survival function which is the conditional probability that a person who is alive at age $a$ is still alive at age $x$ is as follows:
\begin{align}
s(x)=e^{-A(x-a)-\frac{B}{\ln c}(c^x-c^a)}\label{mortality},
\end{align}
where $A, B$  and  $ c$ are constants. Moreover, referring to the model in \cite{Rong2023}, longevity trends can be captured by decreasing mortality rates over time and age. And the effect of longevity on the critical age and optimal contribution rates is discussed later in the sensitivity analysis. Thus, the amounts of the participants in the working and the retirement periods at time $t$ are respectively as follows:
\begin{align*}
    \Theta(t)&=\int_{a}^{\tau}n(t-u+a)s(u)du\\
    &=n_0e^{\rho t}\int_{a}^{\tau}e^{-(\rho+A)(u-a)-\frac{B}{\ln c}(c^u-c^a)}du\\
    &\triangleq n_0e^{\rho t}\Lambda(a,\tau),
\end{align*}
\begin{align*}
    \Sigma(t)&=\int_{\tau}^{\omega}n(t-u+a)s(u)du\\
    &=n_0e^{\rho t}\int_{\tau}^{\omega}e^{-(\rho+A)(u-a)-\frac{B}{\ln c}(c^u-c^a)}du\\
    &\triangleq n_0e^{\rho t}\Lambda(\tau,\omega).
\end{align*}
Note that $\Lambda \triangleq \frac{\Theta(t)}{\Sigma(t)}=\frac{\Lambda(a,\tau)}{\Lambda(\tau,\omega)}$ is independent of time $t$ and $\Lambda^{-1}$ is the dependency ratio of PAYGO pension. $\Lambda$ plays an important role in determining the return efficiency of PAYGO pension.
\vskip 5pt
In this paper, we particularly introduce EET pension. It has the following three characteristics. First,  EET pension is accumulated in the personal account but operated by the professional institutions.  Fortunately, the return efficiency is higher than the one of individual savings. Second, the accumulation of EET pension will be converted into life annuities at retirement to exhibit the longevity risk sharing properties. Third, the participants can enjoy the preferential taxation through deferred tax payment.   As such, the accumulation in EET pension $Y$ satisfies the following SDE for $t\in[z,z+\tau-a]$:
\begin{align}\label{equ2}
	\left\{
	\begin{array}{lr}
		dY(t)=\alpha Y(t)dt+\beta Y(t)dB(t)+kW(t)dt,   \\
		Y(z)=0,
	\end{array}
	\right.
\end{align}
where $\alpha$ and $\beta$ are the expected return and the volatility of EET pension. $z$ is the time that the participant joins the pension.
For the sake of simplicity, we assume that the participants receive the same annual income $A=\{A(t), t\in[z+\tau-a,z+\omega-a]\}$ from the life annuities after retirement, i.e.,
\begin{align}\label{annuity}
&A(t)=\frac{Y(z+\tau-a)}{a_{\tau}}(1-\tau_2),\\
&a_{\tau}=\int_{0}^{\infty}e^{-rt}\frac{s(\tau+t)}{s(\tau)}dt=\int_{0}^{\infty}e^{-(r+A)t-\frac{B}{\ln c}c^{\tau}(c^t-1)}dt,
\end{align}
where $a_{\tau}$
is the expected present value of one unit annuity. $\tau_2$ is the marginal tax rate after retirement. Moreover, EET pension has higher Sharpe ratio through professional management and diversified investment, compared with the individual savings. Thus, we assume that $\frac{\alpha-r}{\beta}>\frac{\mu-r}{\sigma}$.
\vskip 5pt
We suppose that $\theta$ and $k$ are the PAYGO and EET contribution rates, respectively. In practice, in order to ensure the minimal consumption and reduce the burden of the participants, the sum of $\theta$ and $k$ can not be too large, i.e., $\theta+k\leq m$, $m \in [0,1]$. In the optimization problem of the participants, the participants choose optimal asset allocation and consumption policies based on these two given  parameters. In Section \ref{government}, $\theta$ and $k$ are both obligatorily determined by the government at time $t_0$. In Section \ref{vgovernment}, $k$ is voluntarily determined by the participants and $\theta$ is still obligatorily determined by the government at time $t_0$. Throughout the life cycle, the participants' individual savings is invested in the risk-free and the risky assets. Besides,  we assume that the consumption should be nonnegative.
\vskip 5pt
During the working period, the participants earn salary and pay $\theta$ proportion to PAYGO pension and $k$ proportion to EET pension, respectively.
As such, we consider a participant who joins the pension at time $z$ and the wealth process $X$ of this participant in the working period satisfies the following SDE for $t\in[z,z+\tau-a]$:
\begin{align*}
    \left\{\!
    \begin{array}{lr}
        dX(t)\!=\![rX(t)\!+\!(\mu -r)\pi (t)]dt \!+\!\sigma \pi (t)dB(t)\!+\!(1\!-\!\theta\!-\! k )W(t)(1\!-\!\tau_{1})dt \!-\! C(t)dt,  \\
        X(z)=0,
    \end{array}
    \right.
\end{align*}
where $\pi$ is the amount allocated to the risky asset and $C$ is the consumption process which satisfies $C(t)\geq 0$. $\tau_{1}$ is the marginal tax rate of the salary. And $\tau_1>\tau_2$ usually holds due to the tax saving effect.
\vskip 5pt
During the retirement period, the participants benefit from PAYGO and EET pensions. For PAYGO pension, the benefit depends on the dependency ratio, the PAYGO contribution rate and the current salary. For EET pension, the benefit is the same annual income from the life annuities described by Eq.\eqref{annuity}.
Thus, the wealth process $X$ of the participant in the retirement period satisfies the following SDE for $t\in[z+\tau-a,z+\omega -a]$:
\begin{align}\label{afterretire}
    \left\{
    \begin{array}{lr}
        dX(t)=[rX(t)+(\mu -r)\pi (t)]dt+\sigma \pi (t)dB(t)+\theta \Lambda W(t) dt\\
       \ \ \ \ \ \ \ \ \ \ \  +\frac{Y(z+\tau -a)}{a_{\tau}}(1-\tau_{2})dt-C(t)dt,  \\
        X(z+\omega -a)=0,
    \end{array}
    \right.
\end{align}
where the third and the forth items in Eq.\eqref{afterretire} represent the benefit from PAYGO and EET pensions, respectively. Notably, borrowing and advance consumption are allowed. At the maximal survival age, the borrowed money needs to be fully repaid and the wealth is non-negative (actually zero). In fact, the determined maximal survival age is different from the random death time. Participants are not expected to live beyond the maximal survival age. Negative wealth at this moment is deliberate indebtedness and is therefore forbidden, and positive wealth is contrary to the goal of maximizing one's lifetime consumption. These two conditions are consistent with reality.
%
\vskip 5pt
In order to obtain a unified state process of $X$ and make the HJB equation in Eq.\eqref{equ8} well solved, we introduce a technical approach and make the transformation in Eqs.\eqref{afterretire}, \eqref{ty} and \eqref{equ3}.  Let  $Y(z+\tau -a)=Y(t),t\in[z+\tau -a,z+\omega -a]$, and denote
\begin{align*}
&a(t;z,\theta,k) \triangleq (1-\theta-k)(1-\tau_{1})1_{\{t\in [z,z+\tau -a]\}}+\theta\Lambda 1_{\{t\in [z+\tau -a,z+\omega -a]\}}.
\end{align*}
Moreover, the $Y$  satisfies the following SDE:
\begin{align}
	\left\{
	\begin{array}{lr}
		dY(t)=[\alpha Y(t)+kW(t)]1_{\{t\in [z,z+\tau -a]\}}dt+\beta Y(t)1_{\{t\in [z,z+\tau -a]\}}dB(t),   \\
		Y(z)=0.  \\
	\end{array}
	\right.\label{ty}
\end{align}
Thus, we establish the unified stochastic differential equation that the participant's wealth process satisfies for $t\in [z,z+\omega -a]$:
\begin{align}
	\left\{
	\begin{array}{lr}
		dX(t)=[rX(t)+(\mu -r)\pi (t)]dt+\sigma \pi (t)dB(t)+a(t;z,\theta,k)W(t)dt\\
		\phantom{eeeeeeee}+\frac{1-\tau_{2}}{a_{\tau}}Y(t)1_{\{t\in [z+\tau-a,z+\omega -a]\}}dt-C(t)dt,  \\
		X(z)=0, \\
		X(z+\omega -a)=0,
	\end{array}\label{equ3}
	\right.
\end{align}
where $\pi$ and $C$ are the two control variables of the stochastic optimization problem of the participants.
\begin{Remark}\label{unnc}
	Closer to reality, consumption $C(t)$ can also be regarded as unnecessary consumption. Just take $\tilde{C}(t)=C(t)-\mathcal{X}(t)W(t)$, and replace $C(t)$ with $\tilde{C}(t)$. $\mathcal{X}(t)W(t)$ is the necessary consumption to ensure living.
\end{Remark}
\vskip 5pt
\subsection{The stochastic control problem}\label{sec2.3}
We formulate the stochastic control problem of the pension participants as follows:\\
The participant's admissible set $\mathcal{A}^{z,\theta,k}$ is defined by
$$\mathcal{A}^{z,\theta,k}=\left\{(\pi,C) \,\text{satisfies}\,\text{Eq.}\eqref{equ3}:\int_z^{z+\omega -a}[\pi^2(s)+C(s)]ds<\infty\, \,\text{a.s},\, C\geq 0\right\}.$$
The participant controls $(\pi, C)\in \mathcal{A}^{z,\theta,k}$ to maximize the following objective function with the fixed $\theta$ and $k$:
$$\mathrm{E}\left\{\int_{z}^{z+\tau-a}\!e^{\!-\!r(u\!-\!z)}s(u\!-\!z\!+\!a)U(C(u))du\!+\!\lambda \int_{z+\tau -a}^{z+\omega -a}\!e^{\!-\!r(u\!-\!z)}s(u\!-\!z\!+\!a)U(C(u))du\right\},$$
where $\lambda$ is the weight parameter of the utility after retirement. Considering the fact that the same consumption produces higher utility after retirement because of more leisure time, $\lambda>1$ usually holds (cf. \cite{CHX2018}). We also assume that the participants have CRRA utilities, i.e., $U(C)=\frac{1}{\delta}C^{\delta},\delta <1,\delta\neq 0.$ Denote $b(u;z)\triangleq e^{-r(u-z)}s(u-z+a)\left[1+(\lambda-1)1_{\{u\in[z+\tau -a,z+\omega -a]\}}\right].$ Then the stochastic optimization problem of the pension participant becomes
\begin{align}
\max_{(\pi,C)\in \mathcal{A}^{z,\theta,k}}\mathrm{E}\left\{\int_{z}^{z+\omega -a}b(u;z)\frac{C(u;z,\theta,k)^{\delta}}{\delta}du\right\},\label{equ4}
\end{align}
where $C(u;z,\theta,k)$ is the consumption process of the participant who joins the pension at time $z$ with contribution rates of $\theta$ and $k$.
\subsection{Solution to the optimization problem}\label{sec2.4}
We reformulate the optimization problem of the participants. At time $z$, the participants enter the labor market, optimize the objective in Eq.\eqref{equ4} and establish the optimal policies according to the initial $\theta$  and $k$. Following the optimal policies, $X(t)$ and $Y(t)$ realize accordingly. At time $t$, the government may decide to reselect the optimal $\theta$  and $k$ to cope with the new demographic trend, or may not reselect $\theta$ and $k$. In either case, the participants choose the optimal control polices $(\pi, C)$ based on the latest $\theta$ and $k$, as well as the realized $X(t)$, $Y(t)$ and $W(t)$. The participant dynamically controls $(\pi, C)$ and $W(\cdot)$, $Y(\cdot)$, $X(\cdot)$ become
\begin{align}
	&\left\{
	\begin{array}{lr}
		dW(s)=\gamma W(s)ds+\xi W(s)dB(s),\\
		W(t)=w,
	\end{array}
	\right.
	\\
	&\left\{
	\begin{array}{lr}
		dY(s)=[\alpha Y(s)+kW(s)]1_{\{s\in [z,z+\tau -a]\}}ds+\beta Y(s)1_{\{s\in [z,z+\tau -a]\}}dB(s),   \\
		Y(t)=y,  \\
	\end{array}\label{newy}
	\right.
	\\
	&\left\{
	\begin{array}{lr}
		dX(s)=[rX(s)+(\mu -r)\pi (s)]ds+\sigma \pi (s)dB(s)+a(s;z,\theta,k)W(s)ds\\
		\phantom{eeeeeeee}+\frac{1-\tau_{2}}{a_{\tau}}Y(s)1_{\{s\in [z+\tau-a,z+\omega -a]\}}ds-C(s)ds,  \\
		X(t)=x, \\
		X(z+\omega -a)=0,\label{equ2.7}
	\end{array}
	\right.
\end{align}
where $x$, $w$ and $y$ are the realized participant's wealth, average salary and the fund accumulated in EET pension at time $t$, respectively.
Then the participant's admissible set is redefined by
$$\mathcal{A}^{t,x,w,y; z,\theta,k}\!=\!\left\{\!(\pi,C) \,\mbox{satisfies Eq.\eqref{equ2.7}:}\ \int_t^{z+\omega -a}[\pi^2(s)+C(s)]ds<\infty\, \,\text{a.s.},\, C\geq 0\!\right\}\!.$$
And the value function is redefined by
\begin{align}
    V(t,x,w,y;z,\theta,k)=\max_{(\pi,C)\in \mathcal{A}^{t,x,w,y;z,\theta,k}}\mathrm{E}\left\{\int_{t}^{z+\omega -a}b(u;z)\frac{C(u;z,\theta,k)^{\delta}}{\delta}du\right\}.\label{equ5}
\end{align}
Following the ideas of \cite{YZ1999}, we use the dynamic programming method to solve the optimization problem. The main result of this section is summarized in the following Theorem \ref{them1}.
\begin{Them}\label{them1}
        For the fixed $\theta$ and $k$, the solution $(\pi^*,C^*)$ and the value function $V$ of Eq.\eqref{equ5} are given by

        \begin{small}
        \begin{align}
            \pi^*(t,x,w,y;z,\theta,k)\!=\!&-\frac{\sigma \xi w V_{xw}(t,x,w,y;z,\theta,k)\!+\!\sigma \beta y1_{\{t\in [z,z+\tau -a]\}}V_{xy}(t,x,w,y;z,\theta,k)}{\sigma^2V_{xx}(t,x,w,y;z,\theta,k)}\nonumber\\
                &-\frac{(\mu-r)V_x(t,x,w,y;z,\theta,k)}{\sigma^2V_{xx}(t,x,w,y;z,\theta,k)},\label{pi}\\
            C^*(t,x,w,y;z,\theta,k)\!=\!&\left(\frac{V_x(t,x,w,y;z,\theta,k)}{b(t)}\right)^{\frac{1}{\delta-1}},\\
            V(t,x,w,y;z,\theta,k)=&\frac{1}{\delta}L(t;z)[x+(M_1(t;z)\theta+M_2(t;z)k+M_3(t;z))w+N(t;z)y]^{\delta}\label{value},
        \end{align}
        \end{small}
    where  $\nu\triangleq\frac{\mu-r}{\sigma}$, $\epsilon\triangleq\gamma-r-\xi\nu \neq 0$ and $\widetilde{\epsilon}\triangleq\alpha-r-\beta\nu\neq\epsilon$,
    \begin{small}
    	\begin{align}
    		& L(t;z)\triangleq\left(\int_{t}^{z+\omega -a}b(s)^{\frac{1}{1-\delta}}e^{\frac{\delta}{1-\delta}\left[r+\frac{(\mu-r)^2}{2\sigma^2(1-\delta)}\right](s-t)}ds\right)^{1-\delta},   \label{L}                          \\
    	&M_1(t;z)\!\triangleq\!\left\{
    	\begin{array}{lr}
    		\frac{1}{\epsilon}\Lambda\left(e^{\epsilon(z-t+\omega-a)}-1\right),~~&z-t+\tau-a\leq 0,\\
    		\frac{1}{\epsilon}\left[(1-\tau_1)+\left(\Lambda e^{\epsilon(\omega-\tau)}-\Lambda-(1-\tau_1)\right)e^{\epsilon(z-t+\tau-a)}\right],~~&z-t+\tau-a>0,
    	\end{array}
    	\right.\label{equ14}
    	\\
    	&M_2(t;z)\!\triangleq\!\left\{\!
    	\begin{array}{lr}
    		0,\phantom{eeeeeeeeeeeeeeeeeeeeeeeeeeeeeeeeeeeeeeeeeeeeeee}z-t+\tau-a\leq 0,\\
    		\frac{1}{r(\epsilon-\widetilde{\epsilon})}\frac{1-\tau_2}{a_{\tau}}\left(1\!-\! e^{-r(\omega-\tau)}\right)\left(e^{\epsilon(z-t+\tau-a)}\!-\! e^{\widetilde{\epsilon}(z-t+\tau-a)}\right)
    \\-\frac{1-\tau_1}{\epsilon}\left(e^{\epsilon(z-t+\tau-a)}\!-\!1\right),\\
    		\phantom{eeeeeeeeeeeeeeeeeeeeeeeeeeeeeeeeeeeeeeeeeeeeeeeeee}z-t+\tau-a>0,
    	\end{array}
    	\right.\label{equ15}\\
    	&M_3(t;z)\!\triangleq\!\left\{
    	\begin{array}{lr}
    		0,~~&z-t+\tau-a\leq 0,\\
    		\frac{1}{\epsilon}(1\!-\!\tau_1)\left(e^{\epsilon(z-t+\tau-a)}-1\right),~~&z-t+\tau-a>0,
    	\end{array}
    	\right.\label{equ16}\\
    	&N(t;z)\!\triangleq\!\left\{
    	\begin{array}{llr}
    		\frac{1-\tau_{2}}{ra_{\tau}}\left(1-e^{-r(z-t+\omega-a)}\right),~~&z-t+\tau-a\leq 0,\\
    		\frac{1-\tau_{2}}{ra_{\tau}}\left(1-e^{-r(\omega-\tau)}\right)e^{\widetilde{\epsilon}(z-t+\tau-a)},~~&z-t+\tau-a>0.
    	\end{array}
    	\right.\label{N}
    	\end{align}
    \end{small}
\end{Them}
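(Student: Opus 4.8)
The plan is to solve the stochastic control problem \eqref{equ5} by dynamic programming and then to verify the candidate against the true value function. First I would write the HJB equation for $V(t,x,w,y;z,\theta,k)$ associated with the joint diffusion $(X,W,Y)$ of \eqref{newy}--\eqref{equ2.7},
$$V_t+\sup_{\pi,\,C\ge 0}\Big\{\mathcal{L}^{\pi,C}V+b(t)\,\frac{C^{\delta}}{\delta}\Big\}=0,\qquad V(z+\omega-a,\cdot)=0,$$
where $\mathcal{L}^{\pi,C}V$ gathers the drifts $rx+(\mu-r)\pi+a(t;z,\theta,k)w+\frac{1-\tau_2}{a_\tau}y\mathbf{1}_{\{t\in[z+\tau-a,z+\omega-a]\}}-C$, $\gamma w$, $(\alpha y+kw)\mathbf{1}_{\{t\in[z,z+\tau-a]\}}$ together with all second-order terms, including the cross terms $\sigma\pi\xi w V_{xw}$ and $\sigma\pi\beta y\mathbf{1}_{\{t\in[z,z+\tau-a]\}}V_{xy}$ that arise because the three sources of risk share the single Brownian motion $B$. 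The two first-order conditions, obtained by differentiating the bracket in $\pi$ and in $C$, are explicit and immediately yield the feedback controls \eqref{pi} and $C^{*}=(V_x/b(t))^{1/(\delta-1)}$; substituting them back converts the HJB into a first-order nonlinear PDE for $V$.

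The decisive step is the ansatz. Motivated by the linearity of the dynamics in $(x,w,y)$ and the CRRA objective, I would look for a value function that depends on the state only through a one-dimensional effective wealth
$$\Xi=x+\big(M_1(t)\theta+M_2(t)k+M_3(t)\big)\,w+N(t)\,y,\qquad V=\tfrac{1}{\delta}L(t)\,\Xi^{\delta},$$
in which salary and EET accumulation enter through their present values. Writing $g=M_1\theta+M_2k+M_3$ and noting that $\Xi$ is linear in $(x,w,y)$, every second-order term of $\mathcal{L}^{\pi,C}V$ collapses into $\tfrac12\sigma^{2}\tilde\pi^{2}V_{\Xi\Xi}$ with $\sigma\tilde\pi=\sigma\pi+\xi w g+\beta y\mathbf{1}_{\{t\in[z,z+\tau-a]\}}N$, so maximizing over $\pi$ is exactly maximizing over the effective portfolio $\tilde\pi$ and is consistent with \eqref{pi}. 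The whole point of the ansatz is that $\Xi$ then behaves as a self-financing Merton wealth $d\Xi=[r\Xi+(\mu-r)\tilde\pi-C]\,dt+\sigma\tilde\pi\,dB$ as soon as the residual drift in $w$ and in $y$ vanishes; this forces the transport equations $N'=-\widetilde\epsilon N$ (working) and $N'=rN-\frac{1-\tau_2}{a_\tau}$ (retired), together with $g'=-\epsilon g-a(t;z,\theta,k)-Nk\,\mathbf{1}_{\{t\in[z,z+\tau-a]\}}$. With these in force the PDE reduces to the single Bernoulli equation $L'+\delta\big[r+\frac{(\mu-r)^2}{2\sigma^2(1-\delta)}\big]L+(1-\delta)b(t)^{1/(1-\delta)}L^{\delta/(\delta-1)}=0$, which linearizes under $L=P^{1-\delta}$ and integrates to \eqref{L}.

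Next I would solve the transport equations backward from the terminal time $T=z+\omega-a$, imposing $L(T)=0$, $g(T)=0$ and $N(T)=0$, and treating the two regimes $z-t+\tau-a\le 0$ and $z-t+\tau-a>0$ separately with continuous matching at the retirement instant $t=z+\tau-a$. Because the $N$-equation does not involve $g$, I would solve $N$ first (getting the $e^{\widetilde\epsilon(\cdot)}$ and annuity forms of \eqref{N}) and then feed it into the source term $-Nk$ of the $g$-equation during the working phase. Decomposing $g$ into its coefficients of $\theta$, of $k$ and of $1$ then produces exactly the piecewise formulas \eqref{equ14}--\eqref{equ16}; in particular $M_2\equiv M_3\equiv 0$ after retirement, since their source terms switch off while their terminal values are zero, and $M_1$ picks up the PAYGO benefit $\Lambda$ through $a(t;z,\theta,k)=\theta\Lambda$ in that regime.

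The part I expect to be genuinely delicate is the verification rather than the computation. I would need to confirm that the feedback $(\pi^{*},C^{*})$ is admissible — square integrability of $\pi^{*}$, integrability and nonnegativity of $C^{*}$ — and, most importantly, that the terminal state constraint $X(z+\omega-a)=0$ is met along the optimal path. Here the effective-wealth viewpoint is exactly what makes this tractable: since $g(T)=N(T)=0$ one has $\Xi(T)=X(T)$, while $C^{*}=(L(t)/b(t))^{1/(\delta-1)}\Xi$ drives $\Xi$ to zero as $t\uparrow T$ because $L(t)\to0$, so the constraint holds automatically; positivity of $\Xi$ also keeps $\Xi^{\delta}$ and $C^{*}\ge0$ well defined. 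A standard verification argument in the spirit of \cite{YZ1999} then upgrades the candidate to the value function. Finally, the nondegeneracy hypotheses $\epsilon\neq0$ and $\widetilde\epsilon\neq\epsilon$ are precisely what keep the closed-form integrals defining $M_1$, $M_2$ and $N$ from collapsing, so I would flag at the outset that the formulas \eqref{equ14}--\eqref{N} are to be read under these standing assumptions.
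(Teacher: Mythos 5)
Your proposal is correct and follows essentially the same route as the paper: the HJB equation for the joint diffusion $(X,W,Y)$, the first-order conditions giving \eqref{pi} and $C^*=(V_x/b(t))^{1/(\delta-1)}$, the multiplicative ansatz $V=\frac{1}{\delta}L(t)[x+M(t)w+N(t)y]^{\delta}$, and the resulting backward ODEs for $L$, $M$, $N$ solved piecewise across the retirement date with $M$ linear in $(\theta,k)$. Your ``effective wealth'' reduction is just a transparent repackaging of the same substitution, and your admissibility/terminal-constraint discussion corresponds to what the paper defers to a separate proposition via the martingale method.
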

\begin{proof}
We first define
\begin{align*}
    &k(t,x,w,y,\pi,C)  =
    \left(
    \begin{array}{c}
        rx+\pi(\mu-r)+a(t)w+\frac{1-\tau_{2}}{a_{\tau}}y1_{\{t\in [z+\tau-a,z+\omega -a]\}}-C \\
        \gamma w \\
        (\alpha y+kw)1_{\{t\in [z,z+\tau -a]\}}\\
    \end{array}
    \right),
    \\
    &u(t,x,w,y,\pi,C)   =
    \left(
    \begin{array}{c}
        \sigma \pi  \\
        \xi w \\
        \beta y1_{\{t\in[z,z+\tau -a]\}}\\
    \end{array}
    \right)     .
\end{align*}
Then
\begin{eqnarray*}
    \mathrm{d}
    \left(
    \begin{array}{c}
        X(t)  \\
        W(t) \\
        Y(t)\\
    \end{array}
    \right) &=& k(t,X(t),W(t),Y(t),\pi(t),C(t))\mathrm{d}t\nonumber\\
    &&+ u(t,X(t),W(t),Y(t),\pi(t),C(t))\mathrm{d}B(t)\label{equ7} .
\end{eqnarray*}
We establish the HJB equation that $V$ satisfies:
\begin{eqnarray}
         V_t&+&\sup_{\pi,C}\left\{\frac{1}{2}\sigma^2\pi^2V_{xx}+\sigma \xi \pi w V_{xw}+\sigma \beta \pi y 1_{\{t\in [z,z+\tau -a]\}}V_{xy}+\frac{1}{2} \xi^2 w^2 V_{ww}\right.\nonumber\\
         &&\phantom{ee}+\xi \beta wy1_{\{t\in [z,z+\tau -a]\}}V_{wy}+\frac{1}{2}\beta^2y^21_{\{t\in [z,z+\tau -a]\}}V_{yy}\nonumber\\
         &&\phantom{ee}+[rx+\pi(\mu-r)+a(t)w+\frac{1-\tau_{2}}{a_{\tau}}y1_{\{t\in [z+\tau-a,z+\omega -a]\}}-C]V_x+\gamma w V_w\nonumber\\
        &&\phantom{ee}\left.+(\alpha y+kw)1_{\{t\in [z,z+\tau -a]\}}V_y+b(t;z)\frac{C^\delta}{\delta}
        \right\}=0.\label{equ8}
\end{eqnarray}
Applying the first order condition, we obtain the suprema of Eq.\eqref{equ8} as follows:
\begin{align}
    &\pi^*=-\frac{\sigma \xi w V_{xw}+\sigma \beta y1_{\{t\in [z,z+\tau -a]\}}V_{xy}+(\mu-r)V_x}{\sigma^2V_{xx}},\\
    &C^*=\left(\frac{V_x}{b(t)}\right)^{\frac{1}{\delta-1}}.\label{equ10}
\end{align}
In order to obtain the value function, we guess that it has the following form:
\begin{equation}
    V(t,x,w,y;z,\theta,k)=\frac{1}{\delta}L(t;z)[x+M(t;z,\theta,k)w+N(t;z,\theta,k)y]^{\delta}.\label{equ11}
\end{equation}
Substituting Eq.\eqref{equ11} into Eq.\eqref{equ8}, $L(t;z),M(t;z,\theta,k)$ and $N(t;z,\theta,k)$ satisfy the following backwards ordinary differential equations (BODEs):
\begin{equation}
	\left\{
	\begin{array}{lr}
		\frac{1}{\delta}L'(t)+\left[r+\frac{(\mu-r)^2}{2\sigma^2(1-\delta)}\right]L(t)+\frac{(1-\delta)L(t)^{\frac{\delta}{\delta-1}}}{b(t)^{\frac{1}{\delta-1}}\delta}=0,&\\
		L(z+\omega -a)=0,&\\
		M'(t)+\left[-r-\frac{1}{\sigma}\xi(\mu-r)+\gamma\right]M(t)+a(t)+N(t)k1_{\{t\in[z,z+\tau -a]\}}=0,& \\
		M(z+\omega -a)=0,&\\
		N'(t)\!+\!\left[-r-\frac{1}{\sigma}\beta1_{\{t\in[z,z+\tau -a]\}}(\mu-r)+\alpha1_{\{t\in[z,z+\tau -a]\}}\right]\! N(t)&\\+\frac{1-\tau_{2}}{a_{\tau}}1_{\{t\in [z+\tau-a,z+\omega -a]\}}=0,&\\
		N(z+\omega -a)=0.&\label{equ12}
	\end{array}
	\right.
\end{equation}
Solving the last three BODEs, we obtain
\begin{small}
	\begin{align*}
		& L(t;z)=\left(\int_{t}^{z+\omega -a}b(s)^{\frac{1}{1-\delta}}e^{\frac{\delta}{1-\delta}\left[r+\frac{(\mu-r)^2}{2\sigma^2(1-\delta)}\right](s-t)}ds\right)^{1-\delta},                             \\
		& N(t;z)=\frac{1-\tau_{2}}{a_{\tau}}\int_{t}^{z+\omega -a}1_{\{s\in [z+\tau-a,z+\omega -a]\}}e^{\int_{t}^{s}\left[-r-\frac{1}{\sigma}\beta1_{\{u\in[z,z+\tau -a]\}}(\mu-r)+\alpha1_{\{u\in[z,z+\tau -a]\}}\right]du
		}ds, \\
		& M(t;z,\theta,k)=\int_{t}^{z+\omega -a}[a(s)+N(s)k1_{\{s\in[z,z+\tau -a]\}}]e^{\left(r+\frac{1}{\sigma}\xi(\mu-r)-\gamma\right)(t-s)}ds.
	\end{align*}
\end{small}

Moreover, $N(t;z)$ and $M(t;z,\theta,k)$ have the following specific forms:
\begin{align*}
	\text{if}~z-t+\tau-a\leq 0:\\
	N(t;z)&=\frac{1-\tau_{2}}{a_{\tau}}\int_{t}^{z+\omega -a}e^{-r(s-t)}ds\\
	&=\frac{1-\tau_{2}}{ra_{\tau}}\left(1-e^{-r(z-t+\omega-a)}\right),\\
	\text{if}~z-t+\tau -a>0:\\
	N(t;z)&=\frac{1-\tau_{2}}{a_{\tau}}\int_{z+\tau-a}^{z+\omega -a}e^{-r(s-t)-\frac{1}{\sigma}\beta(\mu-r)(z+\tau-a-t)+\alpha(z+\tau-a-t)}ds\\
	&=\frac{1-\tau_{2}}{ra_{\tau}}\left(1-e^{-r(\omega-\tau)}\right)e^{\widetilde{\epsilon}(z-t+\tau-a)}.\\
\end{align*}
And
\begin{align*}
    M(t;z,\theta,k)&=\left[\Lambda\int_{t\vee(z+\tau-a)}^{z+\omega-a}
    e^{\left(r+\frac{1}{\sigma}\xi(\mu-r)-\gamma\right)(t-s)}ds\right.\nonumber\\ &\phantom{ee}\left.-\int_{t\wedge(z+\tau-a)}^{z+\tau-a}
    (1-\tau_1)e^{\left(r+\frac{1}{\sigma}\xi(\mu-r)-\gamma\right)(t-s)}ds\right]\theta\nonumber\\
    &\phantom{ee}+\left[\int_{t\wedge(z+\tau-a)}^{z+\tau-a}[N(s;z)-(1-\tau_{1})]e^{\left(r+\frac{1}{\sigma}\xi(\mu-r)-\gamma\right)(t-s)}ds\right]k\nonumber\\
    &\phantom{ee}+(1-\tau_1)\int_{t\wedge(z+\tau-a)}^{z+\tau-a}e^{\left(r+\frac{1}{\sigma}\xi(\mu-r)-\gamma\right)(t-s)}ds\nonumber\\
    &\triangleq M_1(t;z)\theta+M_2(t;z)k+M_3(t;z).
\end{align*}
Thus, $M(t;z,\theta,k)$ is a linear function of $\theta$ and $k$. Moreover, we assume that $\epsilon\neq 0$ and $\widetilde{\epsilon}\neq\epsilon$ are valid. Then, we obtain $M_1(t;z)$, $M_2(t;z)$ and $M_3(t;z)$  expressed in Eqs.\eqref{equ14}-\eqref{equ16}.
\end{proof}

\begin{Remark}
    $(\pi^*(t,x,w,y;z,\theta,k),C^*(t,x,w,y;z,\theta,k))$ is the feedback function  of the optimal asset allocation and consumption. Based on SDE\eqref{equ3} and using verification theorem, we obtain that $(\pi^*,C^*)=\{(\pi^*(t),C^*(t)  )\triangleq(\pi^*(t;z,\theta,k), C^*(t;z,\theta,k)) \triangleq(\pi^*(t,X^*(t),W^*(t),Y^*(t);z,\theta,k),C^*(t,X^*(t),W^*(t),Y^*(t);z,\theta,k))\}$ is the optimal asset allocation and consumption policy.
\end{Remark}
\begin{proposition}
	For the fixed $\theta$ and $k$, the solution $(\pi^*,C^*)$ given in Theorem \ref{them1} is admissible, i.e., the condition $X^*(z+\omega-a)=0$ is satisfied in SDE (\ref{equ3}).
\end{proposition}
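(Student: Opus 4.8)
The plan is to track the total (human-plus-financial) wealth process that appears inside the bracket of the value function \eqref{value}. Define
\[
Z(t)\triangleq X^*(t)+M(t;z,\theta,k)W(t)+N(t;z)Y(t),\qquad t\in[t_0,z+\omega-a],
\]
where $M=M_1\theta+M_2k+M_3$. The terminal conditions in the BODEs \eqref{equ12} give $M(z+\omega-a;z,\theta,k)=N(z+\omega-a;z)=0$, so that $Z(z+\omega-a)=X^*(z+\omega-a)$; hence it suffices to prove $Z(z+\omega-a)=0$.

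First I would substitute the feedback controls of Theorem \ref{them1} into the dynamics. Differentiating \eqref{value} yields $V_x=L\,Z^{\delta-1}$, $V_{xx}=(\delta-1)L\,Z^{\delta-2}$, $V_{xw}=(\delta-1)L\,Z^{\delta-2}M$ and $V_{xy}=(\delta-1)L\,Z^{\delta-2}N$, so that
\[
\sigma\pi^*(t)=-\xi M(t)W(t)-\beta N(t)Y(t)1_{\{t\in[z,z+\tau-a]\}}+\tfrac{\nu}{1-\delta}Z(t),\qquad C^*(t)=h(t)Z(t),
\]
with $h(t)\triangleq\big(L(t;z)/b(t)\big)^{\frac{1}{\delta-1}}>0$. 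Applying It\^o's formula to $Z$ and inserting the three backward ODEs \eqref{equ12}, I expect the drift and diffusion contributions carried by $W$ and $Y$ to cancel term by term — this cancellation is exactly the coefficient matching that defines $L,M,N$ — leaving the closed scalar linear SDE
\[
dZ(t)=\Big[r+\tfrac{\nu^2}{1-\delta}-h(t)\Big]Z(t)\,dt+\tfrac{\nu}{1-\delta}Z(t)\,dB(t).
\]
This is the routine but bookkeeping-heavy core of the argument.

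Since this SDE is linear and homogeneous, its solution
\[
Z(t)=Z(t_0)\exp\!\Big(\!\int_{t_0}^{t}\!\big[r+\tfrac{\nu^2}{1-\delta}-\tfrac{\nu^2}{2(1-\delta)^2}-h(s)\big]ds+\!\int_{t_0}^{t}\!\tfrac{\nu}{1-\delta}\,dB(s)\Big)
\]
stays strictly positive on $[t_0,z+\omega-a)$ whenever $Z(t_0)>0$ (this simultaneously confirms $C^*=hZ\ge0$, i.e.\ the sign constraint in $\mathcal{A}^{t,x,w,y;z,\theta,k}$). The terminal value is thus entirely controlled by $\int^{z+\omega-a}h(s)\,ds$, and the crux — the step I expect to be the main obstacle — is to show that this integral diverges. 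I would read the asymptotics off the explicit form \eqref{L}: writing $T=z+\omega-a$ and $\kappa=r+\frac{(\mu-r)^2}{2\sigma^2(1-\delta)}$, as $t\uparrow T$ one has $\int_t^T b(s)^{\frac{1}{1-\delta}}e^{\frac{\delta}{1-\delta}\kappa(s-t)}\,ds\sim b(T)^{\frac{1}{1-\delta}}(T-t)$, so $L(t;z)\sim b(T)(T-t)^{1-\delta}$ and therefore $h(t)=(L(t;z)/b(t))^{\frac{1}{\delta-1}}\sim (T-t)^{-1}$. Hence $\int^{T}h(s)\,ds=+\infty$, the exponent above tends to $-\infty$ while the martingale part remains a.s.\ finite, and continuity forces $Z(t)\to0$ as $t\uparrow T$. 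Combined with $Z(T)=X^*(T)$, this yields $X^*(z+\omega-a)=0$, establishing admissibility.
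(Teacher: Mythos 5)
Your argument is correct, but it takes a genuinely different route from the paper. The paper's proof is a two-line appeal to the martingale method of \citet{HLSY2021}: it quotes an explicit closed-form expression for the optimal terminal wealth $X^*(z+\omega-a;z)$ in terms of $M$, $N$, $L$ and the state-price density, and then simply observes that every term vanishes because the BODE terminal conditions in Eq.\eqref{equ12} give $M(z+\omega-a;z)=N(z+\omega-a;z)=L(z+\omega-a;z)=0$. You instead work entirely inside the HJB framework: you verify (correctly — I checked that the $W$- and $Y$-coefficients cancel exactly via the $M$- and $N$-equations in \eqref{equ12}) that the total wealth $Z=X^*+MW+NY$ solves a closed linear SDE with consumption intensity $h=(L/b)^{1/(\delta-1)}$, and then show from the explicit form \eqref{L} that $h(t)\sim(T-t)^{-1}$ near the horizon, so $\int^T h=\infty$ drives $Z$ to zero pathwise. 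What your approach buys is self-containedness (no import of the external martingale computation) plus extra information the paper's proof does not deliver: strict positivity of $Z$ on $[t_0,T)$, hence $C^*=hZ\ge 0$ as required by the admissible set, and the rate at which wealth is exhausted. What the paper's approach buys is brevity, given that the martingale representation of $X^*$ is needed elsewhere anyway (Appendices \ref{A.1}--\ref{A.2}). Two minor points you should make explicit: you implicitly assume $Z(t_0)>0$ (if $Z(t_0)=0$ the homogeneous SDE gives $Z\equiv 0$ and the conclusion is immediate; $Z(t_0)<0$ is excluded by the admissibility constraint \eqref{equ19} and the well-posedness of $V$), and your asymptotic $L(t;z)\sim b(T)(T-t)^{1-\delta}$ relies on $b(T)>0$, which holds because the Makeham survival function \eqref{mortality} is strictly positive at the maximal age $\omega$.
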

\begin{proof}
	Similar to He, Liang, Song and Ye (2021), by using the martingale method, the optimal terminal wealth is given by
	\begin{equation}\nonumber
		\begin{split}
			X^*(z\!+\!\omega\!-\!a;z)=&-M(z\!+\!\omega\!-\!a;z)W(z\!+\!\omega\!-\!a)-N(z\!+\!\omega\!-\!a;z)Y(z\!+\!\omega\!-\!a)\\
			&+\left[\frac{L(z\!+\!\omega\!-\!a;z)}{L(z;z)}\right]^{\frac{1}{1-\delta}}\!\!M(z;z)e^{\gamma z+\frac{r(\omega-a)}{1-\delta}}\eta(z\!+\!\omega\!-\!a;z)^{\frac{1}{\delta-1}}W_0,
		\end{split}
	\end{equation}
	where $\eta(t;z) \triangleq \mathrm{exp}\left\{-\nu[B(t)-B(z)]-\frac{1}{2}\nu^2(t-z)\right\}$ for $t\in[z,z+\omega-a]$, and $\nu=\frac{\mu-r}{\sigma}$. Based on Eq. (\ref{equ12}), we have $M(z\!+\!\omega\!-\!a;z)=N(z\!+\!\omega\!-\!a;z)=L(z\!+\!\omega\!-\!a;z)=0$, as such, $X^*(z+\omega-a)=0$.
\end{proof}
\vskip 15pt
\setcounter{equation}{0}
\section{{ {\bf Age-dependent preference orderings}}}\label{ordering}
In this section, we study the preference ordering among PAYGO, EET and individual savings of each cohort. We establish three critical ages to distinguish the preference and obtain the multiple outcomes of the age-dependent preference orderings based on the heterogeneous characteristic parameters. By introducing EET pension and the maximum constraint on the sum of the PAYGO and EET contribution rates, the preference orderings are determined by the comparative efficiencies of the three pension schemes. It is the extension of the ``Samuelson-Aaron'' criterion. The main results are the three critical ages established in Theorems \ref{them2}-\ref{them4} and the multiple outcomes of the preference orderings summarized in the flowchart of Fig. \ref{flow}.
\vskip 5pt
We study the preference ordering among the three pensions at time $t_0$, when the government decides to reselect the optimal $\theta$ and $k$. Denote $\zeta\triangleq a+t_0-z$ $(\zeta\leq\omega)$, i.e., $\zeta$ is the age of the participant at time $t_0$, who joined the pension at time $z$. In the following, we first prove that the cohorts ( $\zeta\leq a$, $\zeta\geq\tau$) have simple preference orderings, and the cohorts ($a\leq\zeta<\tau$) have complicated age-dependent preference orderings with the heterogeneous characteristic parameters.
\vskip 10pt
If $\zeta\leq a$, i.e., $z\geq t_0$. The value function of the participants is
\begin{align*}
V(z,0,W(z),0;z,\theta,k)&=\frac{1}{\delta}L(z;z)(M_1(z;z)\theta+M_2(z;z)k+M_3(z;z))^{\delta}W^{\delta}(z).
\end{align*}
Moreover,
\begin{align*}
	L(z;z)
	&=\left\{\int_{0}^{\tau -a}e^{-\frac{r}{1-\delta}u}s(u+a)^{\frac{1}{1-\delta}}e^{\frac{\delta}{1-\delta}\left[r+\frac{(\mu-r)^2}{2\sigma^2(1-\delta)}\right]u}du\right.\\
	&\left.+\lambda^{\frac{1}{1-\delta}}
\int_{\tau-a}^{\omega-a}e^{-\frac{r}{1-\delta}u}s(u+a)^{\frac{1}{1-\delta}}
e^{\frac{\delta}{1-\delta}\left[r+\frac{(\mu-r)^2}{2\sigma^2(1-\delta)}\right]u}du\right\}^{1-\delta}\triangleq L_0,\\
	M_1(z;z)&=\frac{1}{\epsilon}\left[\Lambda e^{\epsilon(\omega-a)}-(\Lambda+1-\tau_1)e^{\epsilon(\tau-a)}+1-\tau_1\right]\triangleq M_{01},\\
	M_2(z;z)&=\!\frac{1}{r(\epsilon\!-\!\widetilde{\epsilon})}\frac{1\!-\!\tau_2}{a_{\tau}}\!\left(\!1\!-\! e^{-r(\omega\!-\!\tau)}\!\right)\!\left(\! e^{\epsilon(\tau\!-\! a)}\!-\! e^{\widetilde{\epsilon}(\tau\!-\! a)}\!\right)\!-\!\frac{1}{\epsilon}(1\!-\!\tau_1)\!\left(\! e^{\epsilon(\tau\!-\! a)}\!-\!1\!\right)\!\triangleq M_{02},\\
	M_3(z;z)&=\frac{1}{\epsilon}(1-\tau_1)\left(e^{\epsilon(\tau-a)}-1\right)\triangleq M_{03},
\end{align*}
where $L_0$, $M_{01}$, $M_{02}$ and $M_{03}$ are independent of $z$, and $W(z)>0$ is independent of $\theta$ and $k$. As such, if $\zeta\leq a$, the participants who will join the pension in the future and who are joining at the moment have the same preference ordering among PAYGO, EET and individual savings.

\vskip 5pt
Denote $\widetilde{M_1}(\zeta)\triangleq M_1(t_0;a+t_0-\zeta)$ and $\widetilde{M_2}(\zeta)\triangleq M_2(t_0;a+t_0-\zeta)$, based on Eq.\eqref{equ14} and Eq.\eqref{equ15}, we have
\begin{small}
	\begin{align}
		&\widetilde{M_1}(\zeta)\!=\!\left\{
		\begin{array}{lr}
			\frac{1}{\epsilon}\Lambda\left(e^{\epsilon(\omega-\zeta)}-1\right),~&\zeta\geq\tau,\\
			\frac{1}{\epsilon}\left[(1-\tau_1)+\left(\Lambda e^{\epsilon(\omega-\tau)}-\Lambda-(1-\tau_1)\right)e^{\epsilon(\tau-\zeta)}\right],~&\zeta<\tau,
		\end{array}
		\right.\label{equ14.1}
		\\
		&\widetilde{M_2}(\zeta)\!=\!\left\{\!
		\begin{array}{lr}
			0,~~&\zeta\geq\tau,\\
			\frac{1-\tau_2}{\epsilon-\widetilde{\epsilon}}\frac{1- e^{-r(\omega-\tau)}}{ra_{\tau}}\left(e^{\epsilon(\tau-\zeta)}\!-\! e^{\widetilde{\epsilon}(\tau-\zeta)}\right)-\frac{1-\tau_1}{\epsilon}\left(e^{\epsilon(\tau-\zeta)}\!-\!1\right),~&\zeta<\tau,
		\end{array}
		\right.\label{equ15.1}\\
		&\widetilde{M_1}(\zeta)\!-\!\widetilde{M_2}(\zeta)\!=\!\left\{
		\begin{array}{lr}
			\frac{1}{\epsilon}\Lambda\left(e^{\epsilon(\omega-\zeta)}-1\right),~&\zeta\geq\tau,\\
			\!\!\left[\frac{1}{\epsilon}\Lambda\left(e^{\epsilon(\omega-\tau)}\!-1\right)\!-\!\frac{1-\tau_2}{\epsilon-\widetilde{\epsilon}}\frac{1 - e^{-r(\omega-\tau)}}{ra_{\tau}}\!\left(1\!- \!e^{(\widetilde{\epsilon}-\epsilon)(\tau-\zeta)}\right)\right]\! e^{\epsilon(\tau-\zeta)},~&\zeta<\tau.
		\end{array}
		\right.\label{equ16.1}
	\end{align}
\end{small}

Based on Eq.\eqref{value}, when $\zeta\geq a$, the preference orderings are determined by the signs of $\widetilde{M_1}(\zeta)$, $\widetilde{M_2}(\zeta)$ and $\widetilde{M_1}(\zeta)-\widetilde{M_2}(\zeta)$. The signs of the three coefficients measure the comparative efficiencies between every two pensions.
\vskip 10pt
If $\zeta\geq\tau$, $\widetilde{M_1}(\zeta)>0$, $\widetilde{M_2}(\zeta)=0$. In this case, higher $\theta$ increases the utility of the participants.
\vskip 10pt
If $a\leq\zeta<\tau$, unlike the results in the existing literature, the comparative efficiency depends on the age of the participants.
\vskip 10pt
\textbf{Case 1: preference between PAYGO pension and individual savings}
\vskip 5pt
Denote $$\Lambda_{FP}\triangleq\frac{(1-\tau_1)\left(1-e^{-\epsilon(\tau-a)}\right)}{e^{\epsilon(\omega-\tau)}-1}.$$
\begin{Them}\label{them2}
If $\Lambda\leq\Lambda_{FP}$, then there exists a critical age $\hat{\zeta}$. When $\zeta<\hat{\zeta}$, $\widetilde{M_1}(\zeta)<0$, the participants of the age $\zeta$ prefer individual savings to PAYGO pension. When $\zeta>\hat{\zeta}$, $\widetilde{M_1}(\zeta)>0$, the participants of the age $\zeta$ prefer PAYGO pension to individual savings. When $\zeta=\hat{\zeta}$, $\widetilde{M_1}(\zeta)=0$, the participants of the age $\zeta$ treat PAYGO pension and individual savings equally.\\

If $\Lambda>\Lambda_{FP}$, $\widetilde{M_1}(\zeta)>0$, the participants of all ages 
prefer PAYGO pension to individual savings, and their utilities increase as $\theta$ increases.
\end{Them}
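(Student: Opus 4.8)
The plan is to regard $\widetilde{M_1}$ as a scalar function on the working interval $[a,\tau)$ and reduce the whole statement to two elementary facts: that $\widetilde{M_1}(\cdot)$ is strictly monotone in $\zeta$, and that its left-endpoint value $\widetilde{M_1}(a)$ changes sign exactly as $\Lambda$ crosses $\Lambda_{FP}$. Using the $\zeta<\tau$ branch of Eq.\eqref{equ14.1}, I would first write $\widetilde{M_1}(\zeta)=\frac{1}{\epsilon}\big[(1-\tau_1)+K\,e^{\epsilon(\tau-\zeta)}\big]$ with $K\triangleq\Lambda\big(e^{\epsilon(\omega-\tau)}-1\big)-(1-\tau_1)$. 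The single fact that makes every later step uniform in the unknown sign of $\epsilon$ is the inequality $\frac{1}{\epsilon}\big(e^{\epsilon d}-1\big)>0$ for all $d>0$ and $\epsilon\neq0$. Letting $\zeta\to\tau^-$ and using $\Lambda>0$, this already gives $\widetilde{M_1}(\tau^-)=\frac{\Lambda}{\epsilon}\big(e^{\epsilon(\omega-\tau)}-1\big)>0$, consistent with the retirement case $\zeta\geq\tau$ handled in the text.

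Next I would establish monotonicity in age by differentiating: $\widetilde{M_1}'(\zeta)=-K\,e^{\epsilon(\tau-\zeta)}$, whose sign equals $-\operatorname{sign}(K)$ and is therefore constant on $[a,\tau)$, so $\widetilde{M_1}$ is strictly monotone there. The degenerate case $K=0$ is harmless: it forces $\epsilon>0$ through $\widetilde{M_1}(\tau^-)>0$ and leaves $\widetilde{M_1}\equiv(1-\tau_1)/\epsilon>0$ constant.

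I would then analyze the left endpoint $\zeta=a$. Substituting $\Lambda=\Lambda_{FP}$ directly into $\widetilde{M_1}(a)$ collapses it to $0$, which identifies $\Lambda_{FP}$ as the indifference threshold of the youngest working cohort. Computing $\frac{d}{d\Lambda}\widetilde{M_1}(a)=\frac{1}{\epsilon}\big(e^{\epsilon(\omega-\tau)}-1\big)e^{\epsilon(\tau-a)}>0$ (again by the sign-uniform inequality) shows $\widetilde{M_1}(a)$ is strictly increasing in $\Lambda$, whence $\operatorname{sign}\widetilde{M_1}(a)=\operatorname{sign}(\Lambda-\Lambda_{FP})$.

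Assembling these pieces finishes the argument. If $\Lambda>\Lambda_{FP}$, both endpoints $\widetilde{M_1}(a)$ and $\widetilde{M_1}(\tau^-)$ are positive, so monotonicity forces $\widetilde{M_1}>0$ throughout $[a,\tau)$; together with $\zeta\geq\tau$ this yields PAYGO preference at every age. If $\Lambda\leq\Lambda_{FP}$, then $\widetilde{M_1}(a)\leq0<\widetilde{M_1}(\tau^-)$ forces $\widetilde{M_1}$ to be strictly increasing with a unique zero $\hat{\zeta}\in[a,\tau)$ (with $\hat{\zeta}=a$ at equality), negative below and positive above. Finally I would translate signs into preferences via Eq.\eqref{value}: writing $P$ for the wealth-plus-human-capital base, $\partial V/\partial\theta=L\,P^{\delta-1}\widetilde{M_1}(\zeta)\,w$ with $L>0$, $P>0$ and $w>0$, so $V$ rises in $\theta$ exactly when $\widetilde{M_1}(\zeta)>0$, i.e.\ reallocating salary from individual savings into PAYGO raises welfare. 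The main obstacle is precisely the ambiguous sign of $\epsilon$, which I neutralize once and for all in the inequality $\frac{1}{\epsilon}(e^{\epsilon d}-1)>0$ so that every downstream comparison becomes sign-free.
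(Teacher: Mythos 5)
Your proposal is correct and follows essentially the same route as the paper's proof: both arguments rest on the monotonicity of $\widetilde{M_1}$ on $[a,\tau]$, the positivity of $\widetilde{M_1}(\tau)=\frac{1}{\epsilon}\Lambda\bigl(e^{\epsilon(\omega-\tau)}-1\bigr)$, and the equivalence $\widetilde{M_1}(a)\leq 0\iff\Lambda\leq\Lambda_{FP}$, so that a unique zero exists exactly in that case. You merely make explicit several steps the paper asserts without detail (the sign-uniform inequality $\frac{1}{\epsilon}(e^{\epsilon d}-1)>0$, the constant sign of $\widetilde{M_1}'$, the degenerate constant case, and the monotonicity of $\widetilde{M_1}(a)$ in $\Lambda$), while omitting only the explicit closed form of $\hat{\zeta}$, which the statement itself does not require.
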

\begin{proof}
From Eq.\eqref{equ14.1}, $\widetilde{M_1}(\zeta)$ is continuous and monotonous in $[a,\tau]$ and $\widetilde{M_1}(\tau)=\frac{1}{\epsilon}\Lambda\left(e^{\epsilon(\omega-\tau)}-1\right)>0$. Thus, we know that $\widetilde{M_1}(\zeta)$ has at most one zero point in $[a,\tau)$ and the zero point exists if and only if $\widetilde{M_1}(a)\leq 0$.

As such, if $\widetilde{M_1}(a)\leq 0$, i.e., $\Lambda\leq\Lambda_{FP}$, there exists a critical age $\hat{\zeta}$ that satisfies $\widetilde{M_1}(\zeta)=0$ and $\widetilde{M_1}(\zeta)$ is monotonically increasing in $[a,\tau)$. Furthermore,
$$\hat{\zeta}=\tau+\frac{1}{\epsilon}\ln\left[1+\frac{\Lambda-\Lambda e^{\epsilon(\omega-\tau)}}{1-\tau_1}\right].$$

If $\widetilde{M_1}(a)>0$, i.e., $\Lambda>\Lambda_{FP}$, $\widetilde{M_{1}}(\zeta)$ is always greater than 0.

\end{proof}
\vskip 5pt
In fact, the rise of the PAYGO contribution rate has two impacts on the participants' utilities. It increases the contribution during the working period and the benefit during the retirement period simultaneously. In the case of $\Lambda\leq\Lambda_{FP}$, the former impact dominates the latter impact for the younger cohorts and the latter dominates the former for the older cohorts. However, in the case of  $\Lambda>\Lambda_{FP}$, the comparative efficiency of PAYGO pension is superior, and thus all the participants prefer PAYGO pension.\\
\vskip 10pt
\textbf{Case 2: preference between PAYGO and EET pensions}
\vskip 5pt
Denote
 $$\Lambda_{EP}\triangleq\frac{\epsilon(1-\tau_2)\left(1-e^{-r(\omega-\tau)}\right)\left(e^{(\widetilde{\epsilon}-\epsilon)(\tau-a)}-1\right)}{ra_{\tau}\left(e^{\epsilon(\omega-\tau)}-1\right)(\widetilde{\epsilon}-\epsilon)}.$$
\begin{Them}\label{them3}
If $\Lambda\leq\Lambda_{EP}$, then there exists a critical age $\widetilde{\zeta}$. When $\zeta<\widetilde{\zeta}$, $\widetilde{M_1}(\zeta)-\widetilde{M_2}(\zeta)<0$, the participants of the age $\zeta$ prefer EET pension to PAYGO pension. When $\zeta>\widetilde{\zeta}$, $\widetilde{M_1}(\zeta)-\widetilde{M_2}(\zeta)>0$, the participants of the age $\zeta$ prefer PAYGO pension to EET pension. When $\zeta=\widetilde{\zeta}$, $\widetilde{M_1}(\zeta)-\widetilde{M_2}(\zeta)=0$, the participants of the age $\zeta$ treat the two pensions equally.\\

If $\Lambda>\Lambda_{EP}$, $\widetilde{M_1}(\zeta)-\widetilde{M_2}(\zeta)>0$, the participants of all ages 
prefer PAYGO pension to EET pension.
\end{Them}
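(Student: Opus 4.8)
The plan is to mimic the argument for Theorem \ref{them2}, now applied to the coefficient $\widetilde{M_1}(\zeta)-\widetilde{M_2}(\zeta)$ that, by Eq.\eqref{value}, governs the PAYGO-versus-EET comparison for $a\le\zeta<\tau$. Using the second line of Eq.\eqref{equ16.1}, I would write $\widetilde{M_1}(\zeta)-\widetilde{M_2}(\zeta)=h(\zeta)\,e^{\epsilon(\tau-\zeta)}$, where
$$h(\zeta)\triangleq\frac{1}{\epsilon}\Lambda\left(e^{\epsilon(\omega-\tau)}-1\right)-\frac{1-\tau_2}{\epsilon-\widetilde{\epsilon}}\frac{1-e^{-r(\omega-\tau)}}{ra_{\tau}}\left(1-e^{(\widetilde{\epsilon}-\epsilon)(\tau-\zeta)}\right).$$
Since $e^{\epsilon(\tau-\zeta)}>0$, the sign of the preference coefficient equals the sign of $h(\zeta)$, so the entire theorem reduces to a sign analysis of $h$ on $[a,\tau]$.

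Next I would establish the monotonicity of $h$. Differentiating gives $h'(\zeta)=\frac{(1-\tau_2)(1-e^{-r(\omega-\tau)})}{ra_{\tau}}e^{(\widetilde{\epsilon}-\epsilon)(\tau-\zeta)}$; the crucial point is that the factor $\epsilon-\widetilde{\epsilon}$ in the denominator cancels against the $(\widetilde{\epsilon}-\epsilon)$ produced by differentiating the exponent, leaving a quantity that is manifestly positive because $\tau_2<1$, $r>0$ and $\omega>\tau$. Hence $h$ is strictly increasing on $[a,\tau]$ irrespective of the signs of $\epsilon$ and $\widetilde{\epsilon}-\epsilon$. At the right endpoint the second term vanishes, so $h(\tau)=\frac{1}{\epsilon}\Lambda(e^{\epsilon(\omega-\tau)}-1)>0$, because $\frac{1}{\epsilon}(e^{\epsilon(\omega-\tau)}-1)>0$ for either sign of $\epsilon$ and $\Lambda>0$. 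Consequently $h$ has at most one zero in $[a,\tau)$, and such a zero exists precisely when $h(a)\le 0$.

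It remains to translate $h(a)\le 0$ into the stated threshold and to exhibit the critical age. Dividing the inequality $h(a)\le 0$ by the positive quantity $\frac{1}{\epsilon}(e^{\epsilon(\omega-\tau)}-1)$ and using the identity $\frac{1-e^{(\widetilde{\epsilon}-\epsilon)(\tau-a)}}{\epsilon-\widetilde{\epsilon}}=\frac{e^{(\widetilde{\epsilon}-\epsilon)(\tau-a)}-1}{\widetilde{\epsilon}-\epsilon}$, I would obtain exactly $\Lambda\le\Lambda_{EP}$. In that regime the unique zero $\widetilde{\zeta}$ is found by solving $h(\widetilde{\zeta})=0$ for the exponential and taking logarithms, yielding a closed form of the same type as $\hat{\zeta}$ in Theorem \ref{them2}; strict monotonicity of $h$ then gives $h<0$ (prefer EET) for $\zeta<\widetilde{\zeta}$ and $h>0$ (prefer PAYGO) for $\zeta>\widetilde{\zeta}$. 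When instead $\Lambda>\Lambda_{EP}$, we have $h(a)>0$, so by monotonicity $h>0$ throughout $[a,\tau)$, i.e.\ every cohort prefers PAYGO to EET; combined with the already-noted case $\zeta\ge\tau$ (where $\widetilde{M_2}=0$ and $\widetilde{M_1}>0$), this covers all ages.

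The one step requiring genuine care is the sign bookkeeping: because neither $\epsilon=\gamma-r-\xi\nu$ nor $\widetilde{\epsilon}-\epsilon$ has a fixed sign under the model's assumptions, one must verify that the monotonicity of $h$, the positivity of $h(\tau)$, and the positivity of the dividing factor $\frac{1}{\epsilon}(e^{\epsilon(\omega-\tau)}-1)$ all hold for every sign pattern. The computations above show these facts are sign-robust, thanks to the $\epsilon$ and $(\epsilon-\widetilde{\epsilon})$ cancellations; this robustness is exactly what produces the clean dichotomy in terms of the single threshold $\Lambda_{EP}$, paralleling the role of $\Lambda_{FP}$ in Theorem \ref{them2}.
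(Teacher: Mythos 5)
Your proposal is correct and follows essentially the same route as the paper: your $h(\zeta)$ is exactly the paper's auxiliary function $\widetilde{M_3}(\zeta)=(\widetilde{M_1}(\zeta)-\widetilde{M_2}(\zeta))e^{-\epsilon(\tau-\zeta)}$, and the argument proceeds identically via strict monotonicity, positivity at $\zeta=\tau$, and the equivalence of $h(a)\le 0$ with $\Lambda\le\Lambda_{EP}$. If anything, your explicit computation of $h'(\zeta)$ and the sign-robustness check for arbitrary signs of $\epsilon$ and $\widetilde{\epsilon}-\epsilon$ is more detailed than the paper's appeal to ``similar to the proof in Case 1.''
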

\begin{proof}
From Eq.\eqref{equ16.1}, the sign of $\widetilde{M_1}(\zeta)-\widetilde{M_2}(\zeta)$ is consistent with the sign of $\widetilde{M_3}(\zeta)\triangleq (\widetilde{M_1}(\zeta)-\widetilde{M_2}(\zeta))e^{-\epsilon(\tau-\zeta)}$ and
$$\widetilde{M_3}(\zeta)=\frac{1}{\epsilon}\Lambda\!\left(\!e^{\epsilon(\omega-\tau)}\!-\!1\!\right)\!-\!\frac{1}{r(\epsilon\!-\!\widetilde{\epsilon})}\frac{1-\tau_2}{a_{\tau}}\!\left(\!1\! -\! e^{-r(\omega-\tau)}\!\right)\!\left(\!1\!-\! e^{(\widetilde{\epsilon}-\epsilon)(\tau-\zeta)}\!\right).$$

Similar to the proof in Case 1, $\widetilde{M_3}(\zeta)$ is continuous and monotonous in $[a,\tau]$ and $\widetilde{M_3}(\tau)>0$. As such, if $\widetilde{M_3}(a)\leq 0$, i.e., $\Lambda\leq\Lambda_{EP}$, there exists a critical age $\widetilde{\zeta}$ that satisfies $\widetilde{M_3}(\zeta)=0$ and $\widetilde{M_3}(\zeta)$ is monotonically increasing in $[a,\tau)$. Furthermore,
$$\widetilde{\zeta}=\tau+\frac{1}{\epsilon-\widetilde{\epsilon}}\ln\left[1-\frac{\Lambda\left(e^{\epsilon(\omega-\tau)}-1\right)r(\epsilon-\widetilde{\epsilon})a_{\tau}}{\epsilon(1-\tau_2)\left(1-e^{-r(\omega-\tau)}\right)}\right].$$
If $\widetilde{M_3}(a)>0$, i.e., $\Lambda>\Lambda_{EP}$, $\widetilde{M_{3}}(\zeta)$ is always greater than 0.
\end{proof}
\vskip 5pt
Similarly,  in the case of $\Lambda\leq\Lambda_{EP}$, the contribution increasing impact dominates the benefit increasing impact for the younger cohorts and the opposite is valid for the older cohorts. However, in the case of $\Lambda>\Lambda_{EP}$, the comparative efficiency of PAYGO pension is superior, and thus all the participants prefer PAYGO pension. Interestingly, the distinguishing criterions of $\Lambda$ are heterogeneous in the two cases because that the  criterions measure the comparative efficiencies between PAYGO pension and individual savings, and between PAYGO and EET pensions, respectively.

\vskip 10pt
\textbf{Case 3: preference between EET pension and individual savings}
\vskip 5pt
Echoing the fact that EET pension is operated by professional institutions, we assume that the Sharpe ratio of EET pension is higher than the one of individual savings, namely $\widetilde{\epsilon}>0$.

From Eq.\eqref{equ15.1}, we have
\begin{align*}
    &\widetilde{M_2}(a)=\frac{1}{r(\epsilon-\widetilde{\epsilon})}\frac{1-\tau_2}{a_{\tau}}\!\left(\!1-e^{-r(\omega-\tau)}\!\right)\!\left(\! e^{\epsilon(\tau-a)}-e^{\widetilde{\epsilon}(\tau-a)}\!\right)\!-\frac{1}{\epsilon}(1-\tau_1)\!\left(\! e^{\epsilon(\tau-a)}-1\!\right)\!,\\
    &\widetilde{M_2}'(\tau)=1-\tau_1-\frac{1-\tau_2}{ra_{\tau}}\!\left(\!1-e^{-r(\omega-\tau)}\!\right)\!.
\end{align*}

\begin{Them}\label{them4}
If
$\widetilde{M_2}(a)\geq0$, and $\widetilde{M_2}'(\tau)>0$, then there exists a critical age $\bar{\zeta}$. When $\zeta<\bar{\zeta}$, $\widetilde{M_2}(\zeta)>0$, the participants of the age $\zeta$ prefer EET pension to individual savings. When $\zeta>\bar{\zeta}$, $\widetilde{M_2}(\zeta)<0$, the participants of the age $\zeta$ prefer individual savings to EET pension. When $\zeta=\bar{\zeta}$, $\widetilde{M_2}(\zeta)=0$, the participants of the age $\zeta$ treat EET pension and individual savings equally.\\

If $\widetilde{M_2}(a)<0$, the participants of all ages 
prefer individual savings to EET pension.\\

If $\widetilde{M_2}'(\tau)\leq0$, the participants of all ages 
prefer EET pension to individual savings.
\end{Them}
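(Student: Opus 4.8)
The plan is to reduce the whole statement to tracking the sign of the single function $\widetilde{M_2}(\zeta)$ on $[a,\tau]$, since (as noted just before Case~3) the cohort of age $\zeta$ prefers EET pension to individual savings exactly when $\widetilde{M_2}(\zeta)>0$ and prefers individual savings when $\widetilde{M_2}(\zeta)<0$. Two facts are immediate from Eq.\eqref{equ15.1}: the function is smooth on $[a,\tau]$, and $\widetilde{M_2}(\tau)=0$. This last point is exactly what distinguishes the present case from Theorems~\ref{them2}--\ref{them3}: there the analogous quantity was strictly positive at $\tau$, so the endpoint value fixed the sign pattern, whereas here it vanishes, so the local behaviour near $\tau$ must be read off from $\widetilde{M_2}'(\tau)$. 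This is why the statement branches according to the sign of $\widetilde{M_2}'(\tau)$, and I would organize the proof around that split.

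The key step — and the part I expect to be the main obstacle — is to pin down the sign of $\widetilde{M_2}'$ on the entire interval, not just at $\tau$. Differentiating Eq.\eqref{equ15.1} and writing $P\triangleq\frac{(1-\tau_2)(1-e^{-r(\omega-\tau)})}{ra_{\tau}}>0$, I would record
\[
\widetilde{M_2}'(\zeta)=e^{\epsilon(\tau-\zeta)}\,h(\zeta),\qquad h(\zeta)\triangleq\Big[(1-\tau_1)-\tfrac{P\epsilon}{\epsilon-\widetilde{\epsilon}}\Big]+\tfrac{P\widetilde{\epsilon}}{\epsilon-\widetilde{\epsilon}}\,e^{(\widetilde{\epsilon}-\epsilon)(\tau-\zeta)}.
\]
Because $e^{\epsilon(\tau-\zeta)}>0$, the sign of $\widetilde{M_2}'(\zeta)$ is the sign of $h(\zeta)$. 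The decisive observation is that $h$ is \emph{strictly increasing} in $\zeta$: its derivative simplifies to $P\widetilde{\epsilon}\,e^{(\widetilde{\epsilon}-\epsilon)(\tau-\zeta)}$, which is strictly positive precisely because $\widetilde{\epsilon}>0$ (the standing Sharpe-ratio assumption of Case~3) and $P>0$. Hence $\widetilde{M_2}'$ vanishes at most once on $[a,\tau]$ and can only switch from negative to positive as $\zeta$ increases. This single, direction-fixed sign change forbids any ``increase-then-decrease'' profile of $\widetilde{M_2}$, and that is exactly what makes the three listed cases exhaustive and mutually consistent (in particular it prevents a hidden fourth scenario in which $\widetilde{M_2}$ would cross zero twice). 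I would also note that $h(\tau)=\widetilde{M_2}'(\tau)=(1-\tau_1)-P$, matching the formula in the statement.

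With the monotone-sign property in hand, the trichotomy follows by elementary monotonicity. If $\widetilde{M_2}'(\tau)\le0$, then $h(\tau)\le0$ and, since $h$ is increasing, $h\le0$ on all of $[a,\tau]$; thus $\widetilde{M_2}$ is non-increasing and $\widetilde{M_2}(\zeta)\ge\widetilde{M_2}(\tau)=0$, strictly positive on $[a,\tau)$, so every cohort prefers EET pension. If $\widetilde{M_2}'(\tau)>0$, then $\widetilde{M_2}$ decreases and then increases (a single interior minimum, possibly at $\zeta=a$), and since it rises up to $\widetilde{M_2}(\tau)=0$ it is strictly negative just below $\tau$; then splitting on $\widetilde{M_2}(a)$, if $\widetilde{M_2}(a)\ge0$ the decreasing branch crosses zero exactly once at the point $\bar\zeta$ solving $\widetilde{M_2}(\bar\zeta)=0$, yielding $\widetilde{M_2}>0$ on $[a,\bar\zeta)$ and $\widetilde{M_2}<0$ on $(\bar\zeta,\tau)$, whereas if $\widetilde{M_2}(a)<0$ the minimum-shaped curve stays strictly negative on $[a,\tau)$ and all cohorts prefer individual savings. (This argument also shows $\widetilde{M_2}(a)<0$ forces $\widetilde{M_2}'(\tau)>0$, so the second displayed case is well posed.) Finally, unlike the critical ages $\hat\zeta$ and $\widetilde\zeta$ of Theorems~\ref{them2}--\ref{them3}, the equation $\widetilde{M_2}(\bar\zeta)=0$ genuinely mixes the two exponentials $e^{\epsilon(\tau-\zeta)}$ and $e^{\widetilde{\epsilon}(\tau-\zeta)}$, so I would expect $\bar\zeta$ to be defined only implicitly; its existence and uniqueness, however, are supplied by the single sign change of $\widetilde{M_2}'$ proved above.
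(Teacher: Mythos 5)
Your proposal is correct and follows essentially the same route as the paper: your auxiliary function $h(\zeta)$ is exactly the paper's $\widetilde{N}(\zeta)\triangleq\widetilde{M_2}'(\zeta)e^{-\epsilon(\tau-\zeta)}$, whose strict monotonicity (via $\widetilde{\epsilon}>0$) yields the single negative-to-positive sign change of $\widetilde{M_2}'$, the boundary-maximum property, and then the same three-way case split using $\widetilde{M_2}(\tau)=0$. Your case analysis is somewhat more explicit than the paper's (which argues the $\widetilde{M_2}(a)<0$ case tersely from the boundary-maximum property), but the substance is identical.
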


\begin{proof}
Differentiating Eq.\eqref{equ15.1}, we have
\begin{small}
\begin{align*}
    \widetilde{M_2}'(\zeta)\!=\!\left[\!\frac{-\epsilon}{r(\epsilon\!-\!\widetilde{\epsilon})}\frac{1\!-\!\tau_2}{a_{\tau}}\!\left(\!1\!-\! e^{-r(\omega-\tau)}\!\right)\!+\!(1\!-\!\tau_1)\!\right]\! e^{\epsilon(\tau\!-\!\zeta)}\!+\!\frac{\widetilde{\epsilon}}{r(\epsilon\!-\!\widetilde{\epsilon})}\frac{1\!-\!\tau_2}{a_{\tau}}\!\left(\!1\!-\! e^{-r(\omega-\tau)}\!\right)\! e^{\widetilde{\epsilon}(\tau\!-\!\zeta)}.
\end{align*}
\end{small}
The sign of $\widetilde{M_2}'(\zeta)$ is consistent with the sign of $\widetilde{N}(\zeta)\triangleq \widetilde{M_2}'(\zeta)e^{-\epsilon(\tau-\zeta)}$ and
\begin{small}
    \begin{align*}
        \widetilde{N}(\zeta)\!=\!\frac{-\epsilon}{r(\epsilon\!-\!\widetilde{\epsilon})}\frac{1\!-\!\tau_2}{a_{\tau}}\!\left(\!1\!-\! e^{-r(\omega-\tau)}\!\right)\!+\!(1\!-\!\tau_1)\!+\!\frac{\widetilde{\epsilon}}{r(\epsilon\!-\!\widetilde{\epsilon})}\frac{1\!-\!\tau_2}{a_{\tau}}\!\left(\!1\!-\! e^{-r(\omega-\tau)}\!\right)\! e^{(\widetilde{\epsilon}-\epsilon)(\tau\!-\!\zeta)}.\\
    \end{align*}
\end{small}
Furthermore,
\begin{align}
    \widetilde{N}'(\zeta)=\frac{\widetilde{\epsilon}}{r}\frac{1-\tau_2}{a_{\tau}}\left(1-e^{-r(\omega-\tau)}\right)e^{(\widetilde{\epsilon}-\epsilon)(\tau-\zeta)}\label{equ21}.
\end{align}
\vskip 5pt
 Using Eq.\eqref{equ21} and $\widetilde{\epsilon}>0$, we know that $\widetilde{N}(\zeta)$ strictly increases. As such, $\widetilde{N}(\zeta)$ is always smaller than 0, or greater than 0, or first smaller and then greater than 0. So is $\widetilde{M_2}'(\zeta)$. Thus, the maximum point of $\widetilde{M_2}(\zeta)$ in $[a,\tau]$ must be taken at the boundary. Combined with $\widetilde{M_2}(\tau)=0$, $\widetilde{M_2}(\zeta)$ has at most one zero point in $[a,\tau)$.
\vskip 5pt
First, if $\widetilde{M_2}(a)<0$, based on the fact that the maximum point of $\widetilde{M_2}(\zeta)$ must be taken at the boundary and  $\widetilde{M_2}(\tau)=0$, thus $\widetilde{M_2}(\zeta)<0$ for all $\zeta\in[a,\tau)$. Second, if $\widetilde{M_2}'(\tau)\leq0$, then $\widetilde{M_2}'(\zeta)$ is always less than 0 and $\widetilde{M_2}(\zeta)$ strictly decreases and $\widetilde{M_2}(\zeta)>0$ for all $\zeta\in[a,\tau)$. Last, if $\widetilde{M_2}(a)\geq0 $ and $\widetilde{M_2}'(\tau)>0$, there exists only one $\bar{\zeta}\in[a,\tau)$ that satisfies $\widetilde{M_2}(\bar{\zeta})=0$. When $\zeta<\bar{\zeta}$, $\widetilde{M_2}(\zeta)>0$ and when $\zeta>\bar{\zeta}$, $\widetilde{M_2}(\zeta)<0$.
\end{proof}
\vskip 5pt
In fact, EET pension has the following pros and cons, compared with individual savings. The return efficiency of EET pension during the accumulation period is relatively higher due to professional operation. Besides, the participants can enjoy preferential taxation by joining EET pension. However, EET pension lacks flexibility and has lower return efficiency after retirement. In the case of $\widetilde{M_2}(a)\geq0$, and $\widetilde{M_2}'(\tau)>0$, for the younger cohorts, the accumulation period is longer and they can benefit from the higher return efficiency and the tax saving effect. Thus, the younger cohorts prefer EET pension and the older cohorts prefer individual savings. However, in the extreme case $\widetilde{M_2}(a)<0$ ($\widetilde{M_2}'(\tau)\leq0$), the overall comparative efficiency of individual savings (EET pension) is superior, and thus the participants of all ages 
prefer individual savings (EET pension).
\vskip 15pt
Combining the above three cases, we establish the preference orderings among PAYGO, EET and individual savings of different cohorts based on heterogeneous characteristic parameters. In the flowchart of Fig. \ref{flow}, $I\succ E$ means the participants  prefer individual savings to EET pension and $I\sim E$ means the participants  treat individual savings and EET pension equally. The text on the arrow is the condition to distinguish the preference  and the text in the box is the result of the preference ordering.  The last column shows the multiple outcomes of the preference orderings. Specifically, the texts in bold are the conditions to realize the final outcomes.
\vskip 5pt
\begin{figure}
    \includegraphics[scale=0.5]{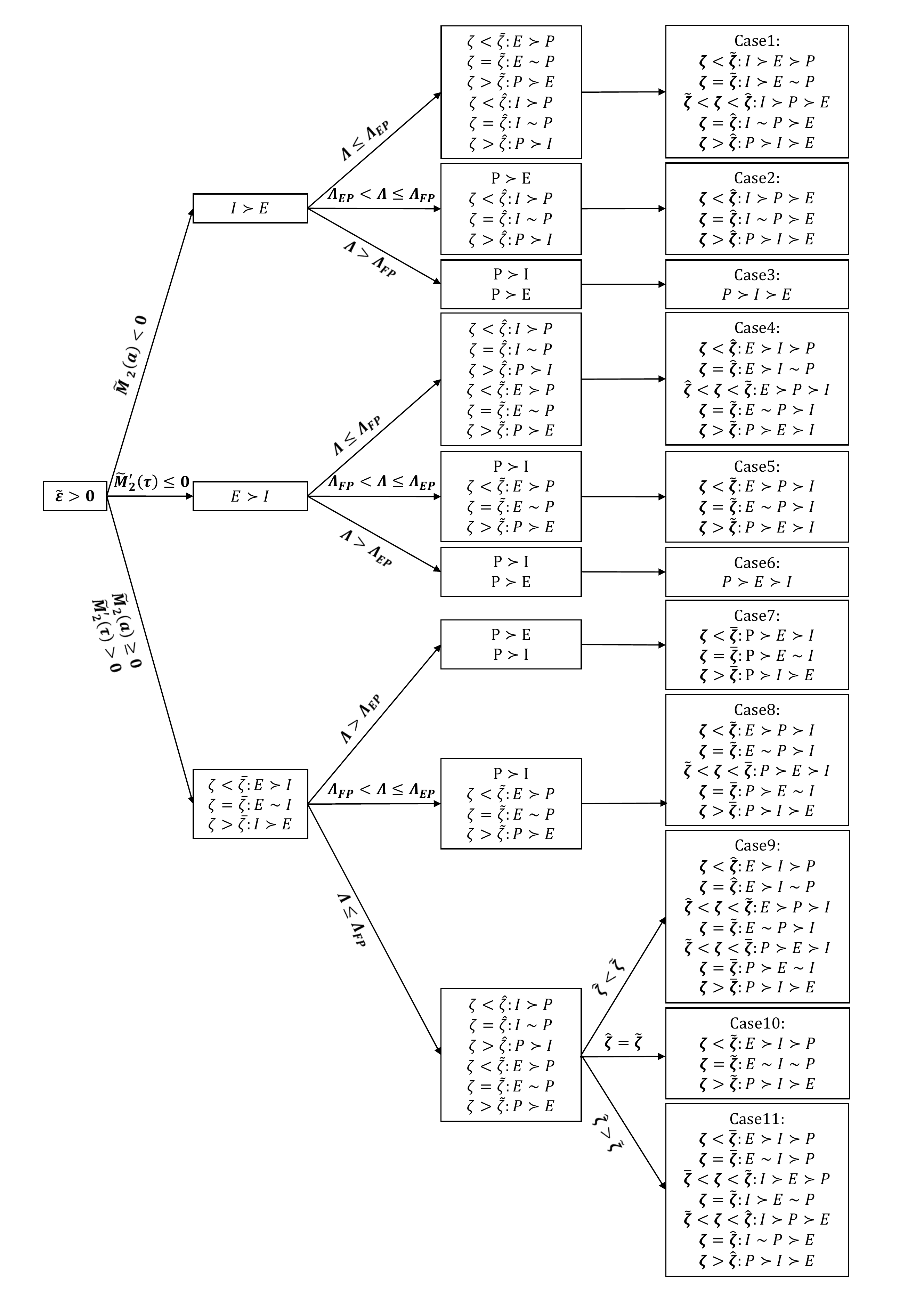}
    \caption{Age-dependent preference orderings}\label{flow}
\end{figure}
\vskip 15pt
\begin{remark}
	We choose the perfectly correlated Brownian process for the following  reasons. Under the perfect correlation assumption, the analytical solution Eqs.\eqref{pi}-\eqref{N} can be obtained. Then as shown in Fig. \ref{flow}, each cohort has an exclusive preference ordering among PAYGO, EET and individual savings according to their comparative efficiencies. And the optimal mix is the comprehensive result of balancing the heterogeneous preference orderings of different cohorts. We extend the ``Samuelson-Aaron" criterion and give a new explanation for the mix of pensions in society from the perspective of social welfare.
	
	However, under the partial correlation assumption, we will get a fully nonlinear parabolic HJB equation which can only be solved by numerical method, similar to \cite{BCG2007}. And due to the effect of risk diversification, the optimal choice of each cohort is a mix of the three pensions, and naturally the optimal pension in society is a mix. The role of social welfare in mixing pensions is overshadowed by risk diversification.
\end{remark}
\setcounter{equation}{0}
\section{{ {\bf The optimization problem of the government}}}\label{government}
	Echoing the two-stage game between the participants and the government, the government could anticipate the participants' optimal feedback functions with respect to any contribution rates. The government gives no credence to threats by the participants to act in ways that will not be in their self-interest when the second stage arrives. Thus, back to the first stage, the government determines the optimal PAYGO and EET contribution rates based on the participants' optimal feedback functions. In this section, we establish the optimization problem of the government.
	\vskip 5pt
	Facing the new demographic changes, the government decides to reselect the optimal $\theta$ and $k$ to maximize the overall utility of the cohorts at time $t_0$.  Based on the feedback function $(\pi^*(t,x,w,y;z,\theta,k),C^*(t,x,w,y;z,\theta,k))$ of the participants, the optimization problem of the government is as follows:
\begin{align}
	\Phi_{1}(\theta,k)=&\max_{\theta,k}\{\phi_{1}(\theta,k)\}=\max_{\theta,k}\left\{\int_{t_0-\omega+a}^{t_0}n(z)V(t_0,x_0(z),w_0,y_0(z);z,\theta,k)dz\right. \nonumber\\
	&\left.+\int_{t_0}^{\infty}e^{-r(z-t_0)}n(z)\mathrm{E}[V(z,0,W(z),0;z,\theta,k)|W(t_0)=w_0]dz\right\},\label{equ17}
\end{align}
or
\begin{align}
	\Phi_2(\theta,k)=&\max_{\theta,k}\{\phi_2(\theta,k)\}=\max_{\theta,k}\left\{\int_{t_0-\omega+a}^{t_0}V(t_0,x_0(z),w_0,y_0(z);z,\theta,k)dz\right. \nonumber\\
	&\left.+\int_{t_0}^{\infty}e^{-r(z-t_0)}\mathrm{E}[V(z,0,W(z),0;z,\theta,k)|W(t_0)=w_0]dz\right\}.\label{equ171}
\end{align}
Inspired by \cite{HS1989} and \cite{MV1996}, $\Phi_1(\theta,k)$ takes the population of each cohort as the weight parameter. We also consider the case of equal weight, and the weight parameter of $\Phi_2(\theta,k)$ is 1. The first item of Eqs.\eqref{equ17}-\eqref{equ171} includes the utilities of the participants who have joined the pension and who are joining the pension at the moment. $w_0$, $y_0(z)$ and $x_0(z)$ are the realized average salary, EET pension accumulation and wealth of the participant at time $t_0$. $ W(t_0)=w_0$ is the information fully observed by the government at time $t_0$, while the information of $y_0(z)$ and $x_0(z)$ is privately kept by the participants. This will be estimated by the government in the expectation way. Because $W$ is a Markov process, we know that $\mathrm{E}[V(z,0,W(z),0;z,\theta,k)|W(t_0)]=\mathrm{E}[V(z,0,W(z),0;z,\theta,k)|\mathcal{F}_{t_0}]$, that is,  all of the information about $V(z,0,W(z),0;z,\theta,k)$ of the government at time $t_0$ is included in the $\sigma$-field generated by $W(t_0)$. As such, the second item of Eqs.\eqref{equ17}-\eqref{equ171} includes all of the information of the utilities of the participants who will join the pension in the future. For the risk aversion attitude, we suppose that the cohorts of different ages have different risk aversion coefficients, that is, $\delta=\delta(z)$. The risk aversion coefficient of the cohorts who have not been employed yet is $\delta_0$. And older cohorts are more risk averse. Through further calculations, we summarize the government's optimization objective in the following Proposition \ref{objective1} whose proof is given in Appendix $\ref{pp4.1}$.
	\begin{proposition}\label{objective1}
		The optimization objective of the government is a deterministic function given by
		\begin{small}
		\begin{align}
			&\phi_1(\theta,k)\!=\!\int_{t_0-\omega+a}^{t_0}\!\frac{1}{\delta(z)}n(z)L(t_0;z)\left\{x_0(z)\!+\!\left[M_1(t_0;z)\theta\!+\! M_2(t_0;z)k \!+\! M_3(t_0;z)\right ]w_0\right.\nonumber\\
			&\!+\!\left. N(t_0;z)y_0(z)\right\}^{\delta(z)}dz \!+\!\frac{n_0L_0e^{\rho t_0}w_0^{\delta_0}}{\delta_0\!\left\{\! r \!-\!\rho\!-\!\delta_0\!\left[\! \gamma\!+\!\frac{1}{2}(\delta_0\!-\!1)\xi^2\!\right]\!\right\}\!}(M_{01}\theta\!+\! M_{02}k \!+\! M_{03})^{\delta_0},\label{utility1}
		\end{align}
	or
	 \begin{align}
		&\phi_2(\theta,k)\!=\!\int_{t_0-\omega+a}^{t_0}\!\frac{1}{\delta(z)}L(t_0;z)\left\{x_0(z)\!+\!\left[M_1(t_0;z)\theta\!+\! M_2(t_0;z)k \!+\! M_3(t_0;z)\right ]w_0\right.\nonumber\\
		&\!+\!\left. N(t_0;z)y_0(z)\right\}^{\delta(z)}dz \!+\!\frac{L_0e^{\rho t_0}w_0^{\delta_0}}{\delta_0\!\left\{\! r \!-\!\rho\!-\!\delta_0\!\left[\! \gamma\!+\!\frac{1}{2}(\delta_0\!-\!1)\xi^2\!\right]\!\right\}\!}(M_{01}\theta\!+\! M_{02}k \!+\! M_{03})^{\delta_0}.\label{utility11}
	\end{align}
	\end{small}
		
		Moreover, the admissible scope of $\theta$ and $k$ is as follows:
		\begin{align}
			\left\{
			\begin{array}{lr}
				x_0(z)+[M_1(t_0;z)\theta\!+\! M_2(t_0;z)k \!+\! M_3(t_0;z)]w_0+N(t_0;z)y_0(z)\geq 0,\\
				\phantom{eeeeeeeeeeeeeeeeeeeeeeeeeeeeeeeeeeeeeeeeeee}\forall ~z\in[t_0-\omega+a,t_0],\\
				\theta+k\leq m, \forall\  \theta,\ k\in[0,1].\\
			\end{array}
			\right.\label{equ19}
		\end{align}
\end{proposition}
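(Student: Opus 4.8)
The plan is to split the objective into its two constituent integrals and evaluate each by direct substitution of the closed-form value function from Theorem \ref{them1}. For the first integral, over the cohorts already present at time $t_0$ (namely $z\in[t_0-\omega+a,t_0]$), I would simply insert $V(t_0,x_0(z),w_0,y_0(z);z,\theta,k)=\frac{1}{\delta(z)}L(t_0;z)[x_0(z)+(M_1(t_0;z)\theta+M_2(t_0;z)k+M_3(t_0;z))w_0+N(t_0;z)y_0(z)]^{\delta(z)}$, which reproduces the first term of \eqref{utility1} verbatim. No further work is needed here, since the realized quantities $x_0(z)$, $w_0$ and $y_0(z)$ are treated as given data observed or estimated by the government.

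The substantive computation is the second integral, over future entrants $z>t_0$. First I would evaluate the value function at the joining instant: setting $t=z$, $x=0$, $y=0$ and $w=W(z)$, the branch $z-t+\tau-a=\tau-a>0$ applies and the age-independent constants $L_0$, $M_{01}$, $M_{02}$, $M_{03}$ computed earlier in Section \ref{ordering} emerge, giving $V(z,0,W(z),0;z,\theta,k)=\frac{1}{\delta_0}L_0(M_{01}\theta+M_{02}k+M_{03})^{\delta_0}W^{\delta_0}(z)$. Because only the factor $W^{\delta_0}(z)$ is random, the conditional expectation reduces to computing $\mathrm{E}[W^{\delta_0}(z)\mid W(t_0)=w_0]$. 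Using the explicit geometric-Brownian-motion solution $W(z)=w_0\exp\{(\gamma-\tfrac{1}{2}\xi^2)(z-t_0)+\xi(B(z)-B(t_0))\}$ together with the Markov property (which, as already noted in the text, justifies conditioning on $W(t_0)$ rather than on the full filtration $\mathcal{F}_{t_0}$), the lognormal moment formula yields $\mathrm{E}[W^{\delta_0}(z)\mid W(t_0)=w_0]=w_0^{\delta_0}\exp\{\delta_0[\gamma+\tfrac{1}{2}(\delta_0-1)\xi^2](z-t_0)\}$.

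It then remains to integrate over $z\in[t_0,\infty)$. Inserting $n(z)=n_0e^{\rho z}$ and the discount factor $e^{-r(z-t_0)}$, the integrand collapses to a single exponential in $z-t_0$ with exponent $-r+\rho+\delta_0[\gamma+\tfrac{1}{2}(\delta_0-1)\xi^2]$; performing the elementary integral produces the denominator $r-\rho-\delta_0[\gamma+\tfrac{1}{2}(\delta_0-1)\xi^2]$ together with the prefactor $n_0L_0e^{\rho t_0}w_0^{\delta_0}/\delta_0$, matching the second term of \eqref{utility1}. The companion formula \eqref{utility11} is obtained by the identical steps after replacing the cohort weight $n(z)$ by unity. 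The main obstacle is the convergence of this improper integral: the evaluation is only legitimate under the transversality-type condition $r-\rho-\delta_0[\gamma+\tfrac{1}{2}(\delta_0-1)\xi^2]>0$, which I would flag explicitly as the standing assumption making $\phi_1$ finite.

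Finally, the admissible scope \eqref{equ19} is obtained by observing that the base of each power $[\,\cdot\,]^{\delta}$ appearing in the value function must be nonnegative in order for the expression to be real-valued and, more fundamentally, for the associated optimal policy to keep consumption feasible ($C\ge 0$); this quantity is precisely the augmented wealth of each existing cohort, yielding the first family of inequalities for all $z\in[t_0-\omega+a,t_0]$. Combined with the model's budget constraint $\theta+k\leq m$ from the participant's problem, this gives the stated feasible region, completing the reduction of the government's stochastic objective to the deterministic function claimed.
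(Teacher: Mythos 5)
Your proposal is correct and follows essentially the same route as the paper's proof in Appendix \ref{pp4.1}: substitute the closed-form value function for existing cohorts, compute the lognormal moment $\mathrm{E}[W^{\delta_0}(z)\mid W(t_0)=w_0]=w_0^{\delta_0}e^{\delta_0[\gamma+\frac{1}{2}(\delta_0-1)\xi^2](z-t_0)}$ for future entrants, evaluate the resulting exponential integral under the convergence condition $\rho+\delta_0[\gamma+\frac{1}{2}(\delta_0-1)\xi^2]<r$, and derive the admissible scope from nonnegativity of the total equivalent disposable wealth together with $\theta+k\leq m$. No gaps.
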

\begin{Remark}
	Following the ideas of \citet{HLSY2021}, $y_0(z)$ and $x_0(z)$ are estimated in the expectation way. $y_0(z)$ is directly obtained by solving the SDEs Eq.\eqref{equ1} and Eq.\eqref{equ2}. $x_0(z)$ is obtained by using the martingale method. The details are given in Appendix \ref{A.1}. Particularly, for the optimization of the government utility, the method is similar to the one used in  \citet{HLSY2021}. Thus, we omit most of the deduction process and focus on the discussions of the preference orderings among the pensions in the previous section.
\end{Remark}
\vskip 15pt
\setcounter{equation}{0}
\section{{ {\bf The new Nash equilibrium with voluntary EET contribution rate}}}\label{vgovernment}
Unlike the obligatory scheme in the U.S. and the U.K., EET pension is voluntarily participated in some countries, such as German and France. In order to depict this general situation, we study a new Nash equilibrium between the participants and the government in this section. Although EET pension is voluntarily participated, the contribution rate is not allowed to be adjusted timely according to the market situations. Usually, the constant contribution rate $k$ could be determined by the participants when they join EET pension. Besides, it can also be adjusted occasionally when the government decides to reselect the contribution rates.
\vskip 5pt
Following the ideas, we establish the new Nash equilibrium between the participants and the government. We assume that the participants adjust the constant EET contribution rate $k$, when the government reselects the PAYGO contribution rate $\theta$ at time $t_0$. It is a degenerate problem of the previous Nash equilibrium and we can derive the analytical solutions. The optimal $k$ chosen by the participants based on the adjusted $\theta$ is summarized in the following Proposition \ref{k} and the optimization objective of the government is shown in the following Corollary \ref{objective2} whose proofs are given in Appendix \ref{pp5.1} and Appendix \ref{pc5.1}.
\begin{proposition}\label{k}
The optimal EET contribution rate $k^*(\theta;z)$ chosen by the participants of all ages based on the adjusted PAYGO contribution rate is divided into the following three cases:\\
If $\zeta\geq\tau$, i.e., $z\leq t_0-\tau+a$, then $k^*(\theta;z)=[0,m-\theta]$.\\
If $\zeta\leq a$, i.e., $z\geq t_0$, then
\begin{align}\label{k1}
		k^*(\theta;z)=&\left\{
		\begin{array}{lr}
			m-\theta,~~&M_{02}>0,\\
			\left[0,m-\theta\right],&M_{02}=0,\\
			0,~~&M_{02}<0.
		\end{array}
		\right.
\end{align}
If $a\leq\zeta<\tau$, i.e., $t_0-\tau+a<z\leq t_0$, then
\begin{align}\label{k2}
	k^*(\theta;z)=&\left\{
	\begin{array}{lr}
		m-\theta,~~&M_2(t_0,z)>0,\\
		\left[0,m-\theta\right],&M_2(t_0,z)=0,\\
		0,~~&M_2(t_0,z)<0.
	\end{array}
	\right.
\end{align}
\end{proposition}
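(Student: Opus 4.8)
The plan is to observe that the voluntary-EET problem is, as the text notes, a degenerate case of Theorem \ref{them1}: for any fixed constant contribution rate $k$ the inner optimization over $(\pi,C)$ is unchanged, so the maximized objective equals the closed-form value function $V$ of Eq.\eqref{value}, and the participant's best response is the maximizer of the scalar map $k\mapsto V(t,x,w,y;z,\theta,k)$ over the admissible interval $[0,m-\theta]$. Writing $G(k)\triangleq x+\left(M_1(t;z)\theta+M_2(t;z)k+M_3(t;z)\right)w+N(t;z)y$ for the effective wealth inside the bracket, Eq.\eqref{value} reads $V=\frac{1}{\delta}L(t;z)\,G(k)^{\delta}$, and $G$ depends on $k$ only through the affine term $M_2(t;z)\,k\,w$.

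First I would differentiate. Since $L(t;z)>0$, the admissibility requirement in Eq.\eqref{equ19} keeps $G(k)\ge 0$ (strictly positive in the nondegenerate case), and $w=W(z)>0$, one obtains
$$\frac{\partial V}{\partial k}=L(t;z)\,G(k)^{\delta-1}\,M_2(t;z)\,w,$$
where the prefactor $\frac{1}{\delta}$ cancels the $\delta$ produced by differentiating the power, so the sign of $\partial V/\partial k$ is exactly that of $M_2(t;z)$ for every admissible $\delta$ (whether $\delta\in(0,1)$ or $\delta<0$), because $G(k)^{\delta-1}>0$. Thus $k\mapsto V$ is strictly increasing, constant, or strictly decreasing according as $M_2(t;z)>0$, $=0$, or $<0$.

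This monotonicity immediately forces a boundary or indifference solution on $[0,m-\theta]$: a strictly increasing map is maximized at $k^*=m-\theta$, a strictly decreasing one at $k^*=0$, and a constant one on the whole interval $[0,m-\theta]$. It then remains only to identify the correct evaluation point of $M_2$ in each age band. For a currently working participant ($a\le\zeta<\tau$, i.e. $t_0-\tau+a<z\le t_0$) the choice is made at the current date $t_0$, so the governing coefficient is $M_2(t_0;z)$, which gives Eq.\eqref{k2}. For a future entrant ($\zeta\le a$, i.e. $z\ge t_0$) the decision is made upon joining at $t=z$ with $x=y=0$, where $M_2(z;z)=M_{02}$ is the $z$-independent constant computed in the $\zeta\le a$ discussion, giving Eq.\eqref{k1}. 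For an already retired participant ($\zeta\ge\tau$, i.e. $z\le t_0-\tau+a$) the branch $z-t_0+\tau-a\le 0$ of Eq.\eqref{equ15} yields $M_2(t_0;z)=0$, so $V$ is insensitive to $k$ and the whole interval $[0,m-\theta]$ is optimal.

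The main obstacle I expect is not the monotone optimization, which is routine, but the careful bookkeeping: verifying that the effective wealth $G(k)$ is genuinely nonnegative on the admissible set so that the sign of $\partial V/\partial k$ is controlled by $M_2$ alone, and matching each age band to the correct time-argument of $M_2$ (current time $t_0$ for active workers, entry time $z$ for future cohorts, and the post-retirement branch giving $M_2=0$).
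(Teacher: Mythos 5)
Your proposal is correct and follows essentially the same route as the paper: both reduce the best response to maximizing the scalar map $k\mapsto V$, which depends on $k$ only through the affine term $M_2 k$ inside the power, so the optimum sits at $m-\theta$, at $0$, or anywhere in $[0,m-\theta]$ according to the sign of $M_{02}$ or $M_2(t_0;z)$ in the relevant age band, with $M_2(t_0;z)=0$ for retirees. The only cosmetic difference is that the paper first records how the nonnegativity constraint of Eq.\eqref{equ19} may tighten one endpoint of the admissible $k$-interval (which turns out not to affect the maximizer), whereas you handle the same point by noting $G(k)\ge 0$ on the admissible set.
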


\begin{corollary}\label{objective2}
The optimization objective of the government under the new Nash equilibrium is a deterministic function given by
\begin{small}
\begin{align}
&\widetilde{\phi}_{1}(\theta)\!	=\!\int_{t_0\!-\!\omega+a}^{t_0}\!\frac{1}{\delta(z)}n(z)L(t_0;z)\left\{x_0(z)\!+\!\left[(M_1(t_0;z)\!-\! M_2(t_0;z)^+)\theta\!+\! M_2(t_0;z)^+m\right.\right.\nonumber\\
&\left.\left.\phantom{ee}\!+\! M_3(t_0;z)\right]w_0\!+\! N(t_0;z)y_0(z)\right\}^{\delta(z)}dz+\!\frac{n_0L_0e^{\rho t_0}w_0^{\delta_0}}{\delta_0\!\left\{\! r \!-\!\rho\!-\!\delta_0\!\left[\! \gamma\!+\!\frac{1}{2}(\delta_0\!-\!1)\xi^2\!\right]\!\right\}\!}[(M_{01}-M_{02}^+)\theta\nonumber\\
&\phantom{eee}\!+\! M_{02}^+m\!+\! M_{03}]^{\delta_0},\label{utility2}
\end{align}
or
\begin{align}
	&\widetilde{\phi}_{2}(\theta)\!	=\!\int_{t_0\!-\!\omega+a}^{t_0}\!\frac{1}{\delta(z)}L(t_0;z)\left\{x_0(z)\!+\!\left[(M_1(t_0;z)\!-\! M_2(t_0;z)^+)\theta\!+\! M_2(t_0;z)^+m\right.\right.\nonumber\\
	&\left.\left.\phantom{ee}\!+\! M_3(t_0;z)\right]w_0\!+\! N(t_0;z)y_0(z)\right\}^{\delta(z)}dz+\!\frac{L_0e^{\rho t_0}w_0^{\delta_0}}{\delta_0\!\left\{\! r \!-\!\rho\!-\!\delta_0\!\left[\! \gamma\!+\!\frac{1}{2}(\delta_0\!-\!1)\xi^2\!\right]\!\right\}\!}[(M_{01}-M_{02}^+)\theta\nonumber\\
	&\phantom{eee}\!+\! M_{02}^+m\!+\! M_{03}]^{\delta_0},\label{utility21}
\end{align}
\end{small}
where $M_2(t_0;z)^+\triangleq M_2(t_0;z)1_{\{M_2(t_0;z)>0\}}$ and $M_{02}^+\triangleq M_{02}1_{\{M_{02}>0\}}$.

Moreover, the admissible scope of $\theta$ is as follows:
\begin{align}
	\theta\in[0\vee\underline{\theta},\ \ m\wedge\bar{\theta}],
	\label{equ19.1}
\end{align}
where
\begin{align*}
	&\underline{\theta}\triangleq\sup_{\{z:a+t_0-z\in A_1\}}-\frac{x_0(z)+N(t_0;z)y_0(z)+(M_2(t_0;z)^+m+M_3(t_0;z))w_0}{(M_1(t_0;z)-M_2(t_0;z)^+)w_0},\\
	&\bar{\theta}\triangleq\inf_{\{z:a+t_0-z\in A_2\}}-\frac{x_0(z)+N(t_0;z)y_0(z)+(M_2(t_0;z)^+m+M_3(t_0;z))w_0}{(M_1(t_0;z)-M_2(t_0;z)^+)w_0},
\end{align*}
\begin{align}\label{A1}
	&A_1\triangleq[\tau,\omega]\cup\left\{
	\begin{array}{lr}
		[a,\tau),\phantom{eee}\widetilde{M}_2(a)<0,\ \Lambda>\Lambda_{FP}~or~\Lambda>\Lambda_{EP},\\
		\left(\hat{\zeta},\tau\right),\phantom{ee}\widetilde{M}_2(a)<0,\ \Lambda\leq\Lambda_{FP}
		\\\phantom{eeeeeeeee}or~\widetilde{M}_2(a)\geq0,\  \widetilde{M}_2'(\tau)>0,\ \Lambda\leq\Lambda_{FP},\ \hat{\zeta}>\widetilde{\zeta},\nonumber\\
		\left(\widetilde{\zeta},\tau\right),\phantom{ee}otherwise,
	\end{array}
	\right.\\
	\end{align}
and
\begin{align}\label{A2}
	&A_2\triangleq\left\{
	\begin{array}{lr}
		\emptyset,\phantom{eeeeeee}\widetilde{M}_2(a)<0,\ \Lambda>\Lambda_{FP}~or~\Lambda>\Lambda_{EP},\\\left[a,\hat{\zeta}\right),
\phantom{ee}\widetilde{M}_2(a)<0,\ \Lambda\leq\Lambda_{FP}
		\\\phantom{eeeeeeeee}or~\widetilde{M}_2(a)\geq0, \ \widetilde{M}_2'(\tau)>0,\ \Lambda\leq\Lambda_{FP},\ \hat{\zeta}>\widetilde{\zeta},\\
		\left[a,\widetilde{\zeta}\right),\phantom{ee}otherwise.
	\end{array}
	\right.
\end{align}
\end{corollary}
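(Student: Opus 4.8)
The plan is to obtain the corollary as a degenerate specialization of Proposition \ref{objective1}: since EET is now voluntary, one replaces the common rate $k$ in the government's weighted objective $\phi_1(\theta,k)$ by each cohort's privately optimal rate $k^*(\theta;z)$ from Proposition \ref{k}, and then re-imposes the terminal-wealth positivity constraint. First I would carry out the substitution cohort by cohort. For a working cohort ($a\le\zeta<\tau$), Proposition \ref{k} gives $k^*(\theta;z)=m-\theta$ when $M_2(t_0;z)>0$ and $k^*(\theta;z)=0$ when $M_2(t_0;z)<0$ (the case $M_2(t_0;z)=0$ leaving the term untouched). Inserting $k=m-\theta$ turns the linear coefficient $M_1(t_0;z)\theta+M_2(t_0;z)k+M_3(t_0;z)$ into $(M_1(t_0;z)-M_2(t_0;z))\theta+M_2(t_0;z)m+M_3(t_0;z)$, whereas inserting $k=0$ gives $M_1(t_0;z)\theta+M_3(t_0;z)$; both collapse to $(M_1(t_0;z)-M_2(t_0;z)^+)\theta+M_2(t_0;z)^+m+M_3(t_0;z)$ with $M_2(t_0;z)^+=M_2(t_0;z)1_{\{M_2(t_0;z)>0\}}$. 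Repeating the argument for the retired cohorts ($\zeta\ge\tau$, where $M_2(t_0;z)=0$ and $k$ is irrelevant) and for the future cohorts ($\zeta\le a$) with the constants $M_{01},M_{02},M_{03}$ and the cut-off $M_{02}^+$ yields the second summand, establishing Eqs.\eqref{utility2}-\eqref{utility21}.

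For the admissible scope I would start from the positivity requirement in Eq.\eqref{equ19}, which after the $k^*$-substitution reads, for every $z\in[t_0-\omega+a,t_0]$,
\begin{align*}
(M_1(t_0;z)-M_2(t_0;z)^+)w_0\,\theta\ge-\bigl[x_0(z)+N(t_0;z)y_0(z)+(M_2(t_0;z)^+m+M_3(t_0;z))w_0\bigr].
\end{align*}
This is affine in $\theta$, so its effect depends only on the sign of the slope $(M_1(t_0;z)-M_2(t_0;z)^+)w_0$: a positive slope gives a lower bound on $\theta$, a negative slope an upper bound, and a vanishing slope no restriction at all. Writing $\zeta=a+t_0-z$ and collecting the ages of positive slope into $A_1$ and those of negative slope into $A_2$, taking the supremum of the resulting lower bounds and the infimum of the upper bounds, and intersecting with $[0,m]$, delivers the interval $[0\vee\underline{\theta},\,m\wedge\bar{\theta}]$ of Eq.\eqref{equ19.1}.

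The main obstacle is the explicit determination of $A_1$ and $A_2$ in Eqs.\eqref{A1}-\eqref{A2}, equivalently the sign pattern on $[a,\tau)$ of the slope $M_1(t_0;z)-M_2(t_0;z)^+$, which I would analyse in the age variable $\zeta$. The key structural fact is that this slope is single-crossing in $\zeta$, passing from nonpositive to nonnegative at most once: on the ages where $\widetilde{M_2}(\zeta)>0$ it equals $\widetilde{M_1}(\zeta)-\widetilde{M_2}(\zeta)$, whose sign is that of the monotone function $\widetilde{M_3}(\zeta)$ of Theorem \ref{them3} (a single crossing at $\widetilde{\zeta}$, or none); on the ages where $\widetilde{M_2}(\zeta)<0$ it equals $\widetilde{M_1}(\zeta)$, monotone by Theorem \ref{them2} (a single crossing at $\hat{\zeta}$, or none); and the two regimes join continuously at the boundary age $\bar{\zeta}$ where $\widetilde{M_2}(\bar{\zeta})=0$ (Theorem \ref{them4}). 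Consequently $A_1$ and $A_2$ are each a single interval, and since $\widetilde{M_1}(\tau)>0$ and $\widetilde{M_2}(\tau)=0$ force $[\tau,\omega]\subseteq A_1$, what remains is to locate the unique crossing. I would do this by running through the exhaustive cases of the flowchart in Fig. \ref{flow}, distinguished by the signs of $\widetilde{M_2}(a)$ and $\widetilde{M_2}'(\tau)$, the comparisons of $\Lambda$ with $\Lambda_{FP}$ and $\Lambda_{EP}$, and the ordering of $\hat{\zeta}$ and $\widetilde{\zeta}$, so as to read off whether the active boundary is $\hat{\zeta}$, $\widetilde{\zeta}$, or absent, matching Eqs.\eqref{A1}-\eqref{A2}. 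The delicate point is checking that the disjunctive conditions there (for instance $\Lambda>\Lambda_{FP}$ or $\Lambda>\Lambda_{EP}$) correctly capture the orderings among $\Lambda_{FP}$, $\Lambda_{EP}$ and the two critical ages that are forced once the regime of $\widetilde{M_2}$ is fixed.
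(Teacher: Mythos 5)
Your proposal follows essentially the same route as the paper's own proof in Appendix \ref{pc5.1}: substitute the feedback $k^*(\theta;z)$ from Proposition \ref{k} into the objective and the positivity constraint of Proposition \ref{objective1}, observe that the constraint is affine in $\theta$ so that the sign of the slope $M_1(t_0;z)-M_2(t_0;z)^+$ sorts ages into $A_1$ (lower bounds) and $A_2$ (upper bounds), and then identify those sets branch by branch from the flowchart using the single-crossing structure of $\widetilde{M_1}$, $\widetilde{M_1}-\widetilde{M_2}$ and $\widetilde{M_2}$ together with the ordering of $\bar{\zeta}$, $\widetilde{\zeta}$, $\hat{\zeta}$. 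The argument is correct and matches the paper's.
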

\vskip 20pt
\setcounter{equation}{0}
\section{{ {\bf Numerical results}}}\label{s6}
In this section, we calibrate the values of the parameters according to the empirical data. We first set up the baseline model in Subsection \ref{Ncb} to depict the scenario of the U.S.. Then we set up the model to depict the scenario of China in Subsection \ref{UC}. Finally, in Subsection \ref{figures}, based on the baseline model, we explore numerical simulations to study the impacts of the characteristic parameters on the optimal policies, the critical ages and the optimal PAYGO and EET contribution rates.
\subsection{Baseline model of the U.S.}\label{Ncb}
The  United States is a typical developed country which suffers from the serious aging problem. And we choose it as the baseline model. For the labor parameters, the participant joins the pension at the age of $a=30$, and retires at the age of $\tau=65$. The maximal survival age is $\omega=100$. The density of the new entrants at time $0$ could be an arbitrary number and we choose $n_0=10$. Moreover, the population growth rate is set by $\rho=-0.005$ to depict the scenario of shrinking population and serious aging problem. The parameters in the survival function are $A=0.000022$, $B=2.7\times10^{-6}$, $c=1.124$, respectively (cf. \cite{DHW2013}). Based on these parameter settings, the dependency ratio is $\Lambda^{-1}=0.7271$. For the market parameters, we have $r=0.02$, $\mu=0.1$, $\sigma=0.26$; $W_0=1$, $\gamma=0.02$, $\xi=0.09$; $ \alpha=0.06$, $\beta=0.12$.  EET has higher Sharpe ratio compared with the individual savings, and the difference is 0.0256.  For the old-age security parameters, the initial PAYGO contribution rate is $\theta_0=0.08$ and the initial EET contribution rate is $k_0=0.12$. The marginal tax rates are $\tau_1=0.25$, $\tau_2=0$. The above data comes from U.S. Bureau of Labor Statistics, U.S. Centers for Disease Control and Prevention, Federal Reserve Economic Data, U.S. Department of the Treasury and Organization for Economic Co-operation and Development (OECD). For the weight parameter, we have $\lambda=1.5$. For the risk averse parameters, we have $\delta_0=-2.8$ for the cohorts who have not been employed yet, $\delta_1=-2.9$ for the working cohorts and $\delta_2=-3$ for the retirees. Besides, the government decides to reselect the optimal PAYGO and EET contribution rates at time $t_0=0$, when the old contribution rates cannot adapt well to the new demographic change. The maximum of the sum of the two contribution rates is $m=25\%$.
\vskip 5pt
Applying the above data to Theorems \ref{them2}-\ref{them4}, we obtain that Case 4 in  Fig. \ref{flow} is consistent with the situation in the U.S.. Within the admissible scope in Eq. \eqref{equ19}, numerically solving the maximum points of the utility functions in Eqs. \eqref{utility1}-\eqref{utility11}, we obtain
the optimal contribution rates $\theta^*=0.1169$, $k^*=0.1331$ and $\theta^{**}=0.1029$, $k^{**}=0.1471$ under the two objectives, respectively. Under the population weighted objective, the PAYGO contribution rate rises dramatically and the EET contribution rate rises slightly. After decades of professional operation, the comparative efficiency of EET pension dominates the individual savings. However, PAYGO pension is more favorable by the older participants. Because the negative population growth rate also increases the weight of the older cohorts, the U.S. government should increase the share of PAYGO pension to improve the overall utility of the society.  From the perspective of intergenerational fairness, we also study the equal weighted objective. The weight of the older cohorts is reduced, thus the optimal PAYGO contribution rate slightly decreases. Besides, the comparative efficiencies of the two obligatory pensions are superior, the sum of the two contribution rates reaches the maximum, i.e., $\theta^*+k^*=25\%$.  From Theorems \ref{them2}-\ref{them4}, we get two critical ages. The critical age of PAYGO pension and individual savings is $\hat{\zeta}=37.5596$. The critical age of PAYGO and EET pensions is $\widetilde{\zeta}=48.3200$.  All non-retired cohorts prefer EET pension to individual savings. Moreover, the voluntary EET case is the same as the mandatory EET case.
\subsection{Model of  China}\label{UC}

To better understand the different mixes among pensions under heterogeneous characteristic parameters, we also explore China as a typical scenario.  China is a developing country with rapid economic development and income growth rate. However, China also suffers from serious aging problem.  And China's population growth rate is $\rho=-0.004$. The participants join the pension at $a=25$, retire at $\tau=60$ and the maximal age is $\omega=95$. The expected growth rate and the volatility of the average salary are set by $\gamma=0.03$ and $\xi=0.14$. The expected return and the volatility of the risky asset are $\mu=0.08$ and $\sigma=0.2$. And the risk-free interest rate is $r=0.02$. The expected return and the volatility of EET pension are $\alpha=0.05$ and $\beta=0.09$. EET pension also has higher Sharpe ratio compared with individual savings. And the difference is 0.0333, which is higher than that of the U.S.. With the enrichment of private investment experience and the improvement of the capital market efficiency in China, the gap between two countries will be reduced. The initial contribution rates of PAYGO and EET pensions are $\theta_0=0.16$ and $k_0=0.04$. These data come from National Bureau of Statistics of China, Ministry of Finance of China and the annual report of China Life Insurance Company. According to Theorems \ref{them2}-\ref{them4}, we obtain that Case 4 in  Fig. \ref{flow} is also consistent with the situation in China.  All non-retired participants prefer EET pension to individual savings. And the result of voluntary EET pension is the same as that of mandatory EET pension. Within the admissible scope in Eq. \eqref{equ19}, numerically solving the maximum points of the utility functions in Eqs. \eqref{utility1}-\eqref{utility11}, we obtain
the optimal contribution rates $\theta^*=0.1764$, $k^*=0.0736$ and $\theta^{**}=0.1686$, $k^{**}=0.0814$ under the two objectives, respectively. The two critical ages are $\hat{\zeta}=37.3233$, $\widetilde{\zeta}=44.6371$.
Although China suffers from serious aging problem, the comparative efficiency of PAYGO pension is still relatively high due to the rapid salary growth rate. Meanwhile, EET pension has appeared in the recent ten years, and its comparative efficiency gradually improves. At the same time, the comparative efficiency of individual savings is the lowest due to the lack of experience and the capital market fluctuations. In China, the government could increase the overall utility by adding a small obligatory EET share. With the further improvement of the return efficiency of EET pension and the slowdown of the salary growth rate, the optimal mix will be inclined to more EET share in the future.
\subsection{Numerical results of the baseline model}\label{figures}
Below are the numerical results based on the baseline model. The following analysis is based on the population weighted objective. And we obtain similar results based on the equal weighted objective.
In the first three figures, we study the optimal control policies of the participants at different ages. We select $\zeta=30$, $\zeta=40$, $\zeta=70$, $\zeta=15$ as the typical participants. They represent the participants who are joining the pension at the moment, who are in the working period, who are in the retirement period, and who will join the pension in the future.
\vskip 5pt
Fig. \ref{fig1} shows the expected optimal wealth $\mathrm{E}X^*(t)$ of the participants with respect to time $t$. We obtain $\mathrm{E}X^*(t)$ by martingale method and the details are in Appendix \ref{A.2}. Take the $30$-year-old participants for example.  After joining the plan, the wealth of the participants gradually decreases. Due to the good performance of the mandatory pensions, the individuals are willing to participate in PAYGO pension and accumulate wealth in the EET pension account. Thus, they initiate debts to meet consumption requirements. The turning point of the wealth is at the retirement age. At that time, the participants start to receive benefits from the pensions and the debts are gradually repaid.
\vskip 5pt
\begin{figure}
\includegraphics[totalheight=5.5cm]{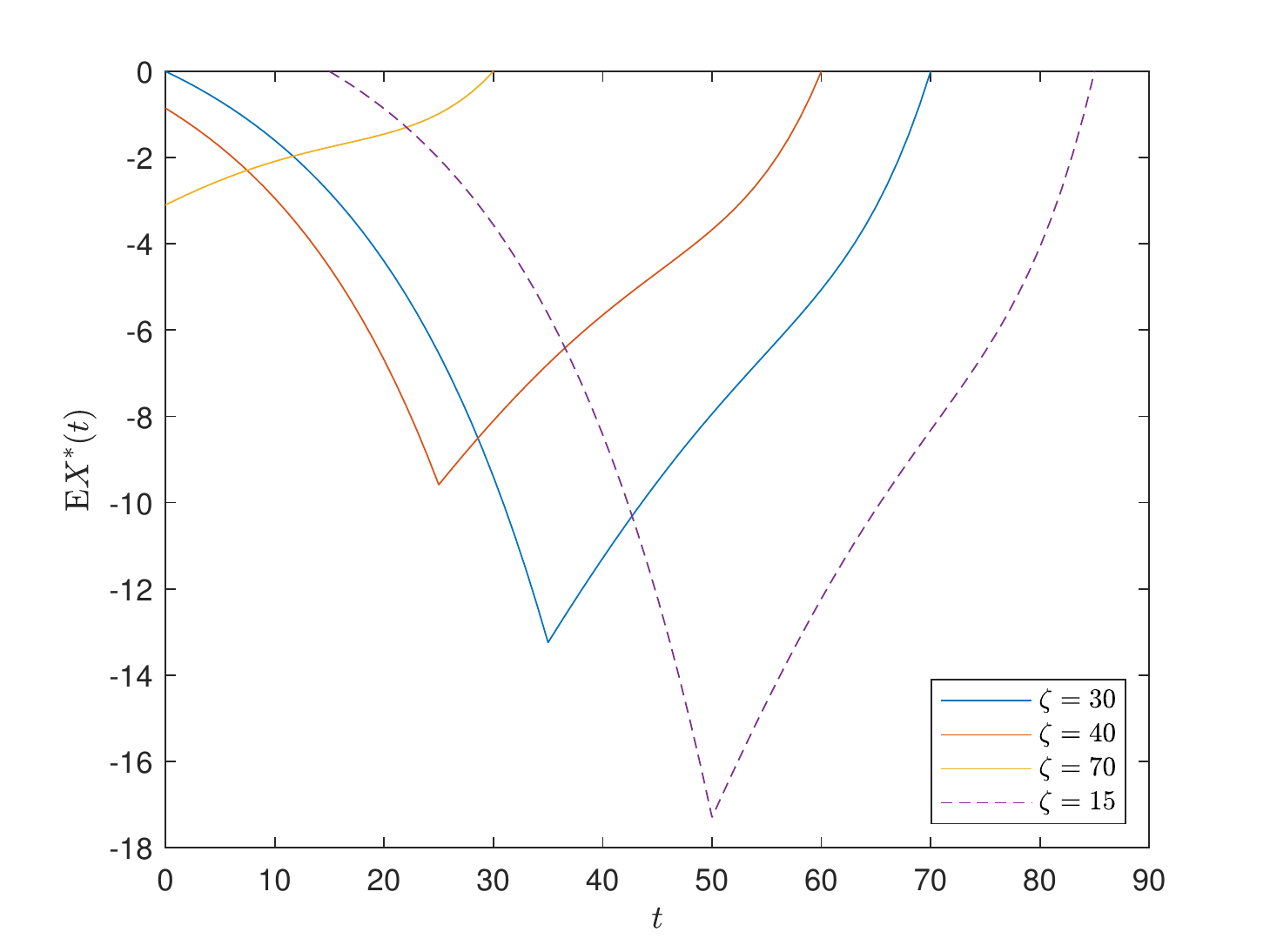}
\caption{Expected optimal wealth $\mathrm{E}X^*(t)$ of the participants}
\label{fig1}
\end{figure}
\vskip 5pt
\begin{figure}[h]
    \includegraphics[totalheight=5.5cm]{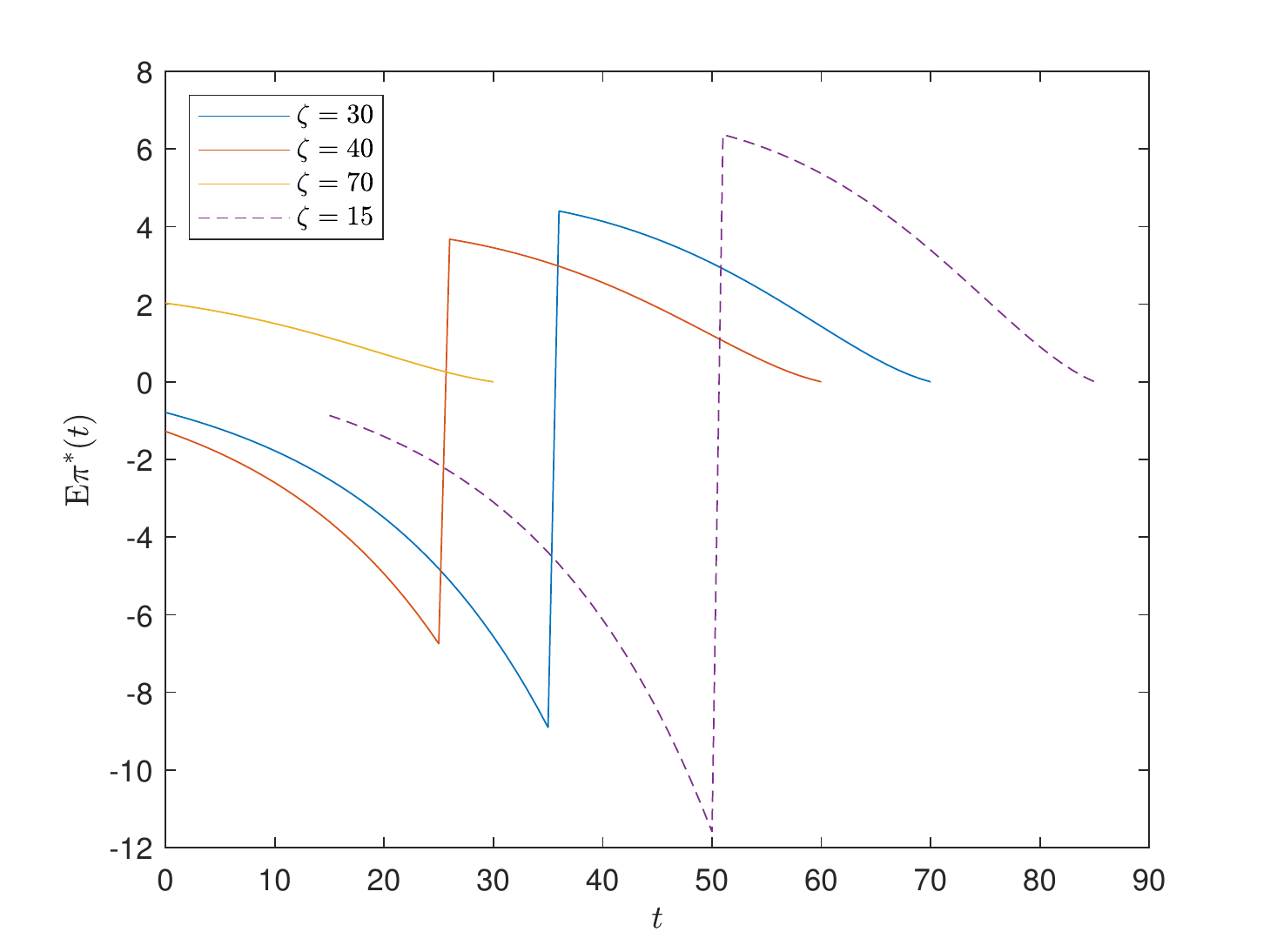}
    \caption{Expected optimal risky investment $\mathrm{E}\pi^*(t)$ of the participants}
    \label{fig2}
\end{figure}

Fig. \ref{fig2} shows the expected optimal risky investment $\mathrm{E}\pi^*(t)$ of the participants with respect to time $t$. 
The discontinuity point at the age of $65$ is caused by the change of the investment flexibility in EET pension. When the participants retire, EET pension is fully converted into life annuities and its risky investment suddenly decreases to zero. Therefore, individual risky investment increases accordingly.

\vskip 5pt

\begin{figure}[h]
    \includegraphics[totalheight=5.5cm]{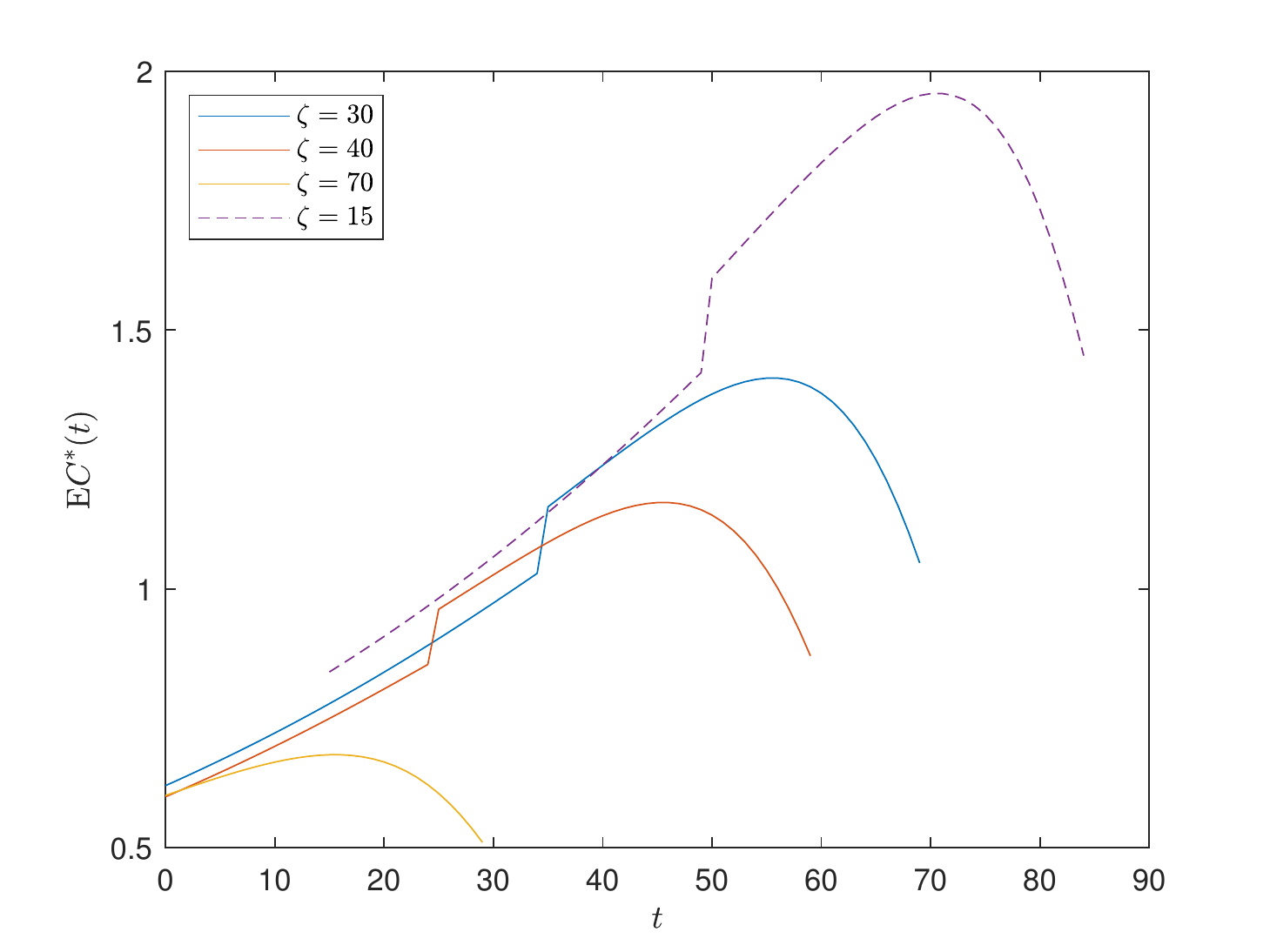}
    \caption{Expected optimal consumption $\mathrm{E}C^*(t)$ of the participants}
    \label{fig3}
\end{figure}

Fig. \ref{fig3} shows the expected optimal consumption $\mathrm{E}C^*(t)$ of the participants with respect to time $t$. The participant consumes more after retirement when consumption produces higher utility. The weight parameter $\lambda=1.5$ causes the discontinuity point at the age of $65$.  As shown in Remark \ref{unnc}, the decline in consumption in old age can be regarded as a decline in unnecessary consumption, which is in line with reality. 

\vskip 5pt

\begin{figure}[h]
    \includegraphics[totalheight=5.5cm]{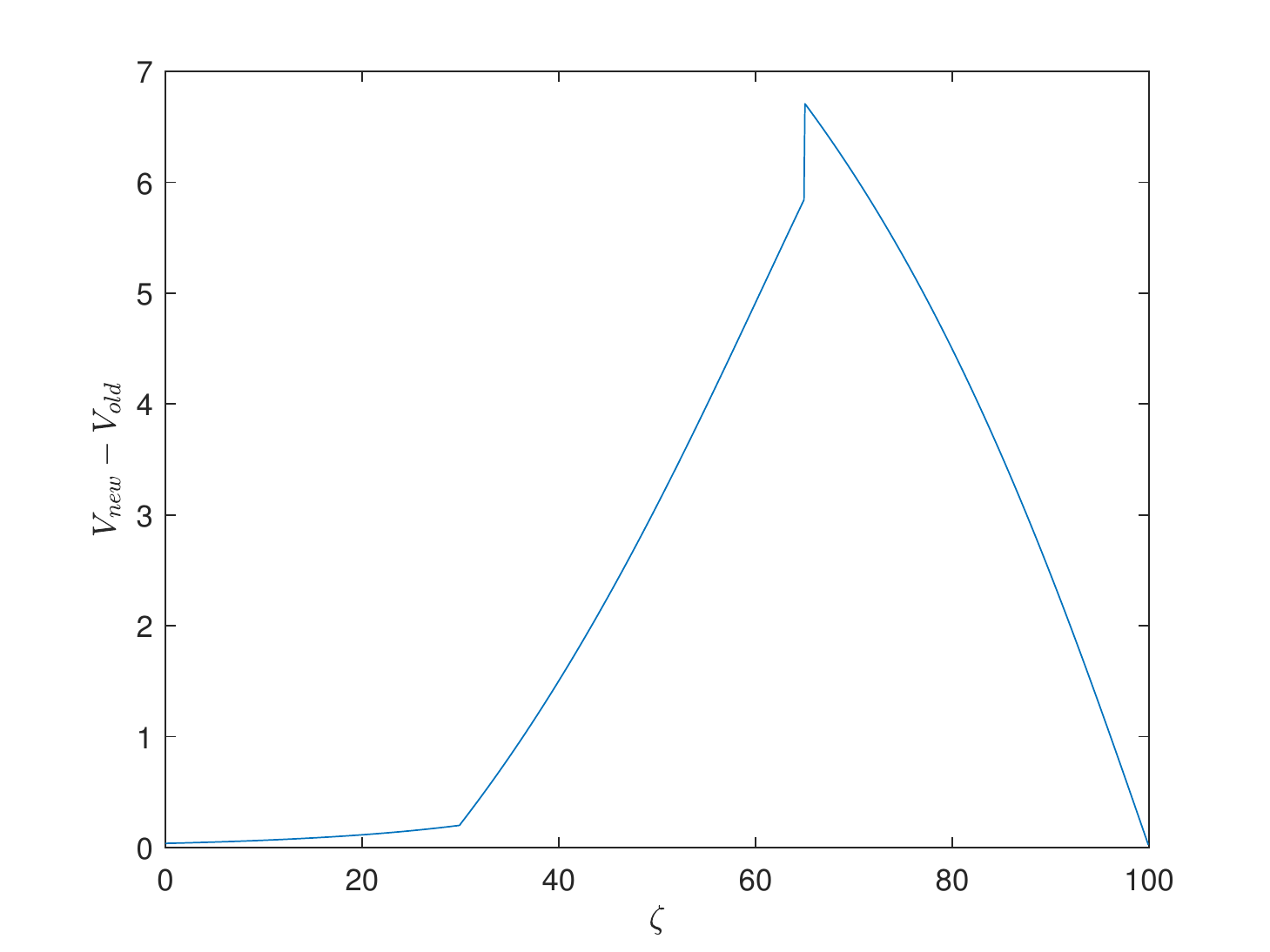}
    \caption{Variation of participants' value functions}
    \label{fig4}
\end{figure}

In Fig. \ref{fig4}, we study the variation of the participants' value
function $V_{new}-V_{old}$ with respect to the age $\zeta$. The
variation of the value function exhibits a hump shape curve. Based on the parameters in the baseline model, non-retirees prefer EET pension to individual savings and retirees treat the two as the same. Meanwhile, the younger cohort prefers individual savings to PAYGO pension and the older cohort is the contrary. Furthermore, the PAYGO contribution rate increases dramatically and the EET contribution rate increases slightly under the new parameter selection. Thus, the utility improvement from the increase of EET pension is diminished by the utility loss from the increase of PAYGO pension for the younger cohort. On the contrary, the increase of the PAYGO and EET contribution rates both improves the utility of the older cohort. Thus, the utility improvement of the older cohort rises sharply with age. However, the exceeding utility drops after retirement because that the remaining utility gradually decays to zero. Interestingly, there are two turning points at ages of 30 and 65. This is due to the difference in risk aversion parameters of the three cohorts. The utility gain increases sharply with the increase of risk aversion, which also means that the government gives more consideration of the cohorts with higher risk aversion in decision-making.

\vskip 5pt
In Fig. \ref{fig5}, we exhibit the variation tendency of the different
cohorts' value function with respect to $\theta$ (with fixed
$k=k^{*}$), $k$ (with fixed $\theta=\theta^{*}$), and $\theta$ (with
fixed $k+\theta=k^{*}+\theta^{*}$). In the first case, we observe
the downward trend of the younger cohort's value function and the
upward trend of the older cohort's value function along with the
increase of PAYGO contribution rate $\theta$. The increase of
$\theta$ rises the contribution burden of the younger cohort and
this dominates the increase of the benefit. However, because the
contribution period is relatively short for the older cohort, the
increase of $\theta$ rises the benefit directly and this becomes the
dominant effect eventually. In the second case, we observe upward
trend of the non-retired cohort's value function along with the increase
of EET contribution rate. This is due to the high investment return efficiency during the accumulation period and the high withdrawal rate after retirement.
Moreover, the trend is relatively stable for the older cohort because
that the EET account is not affected by the contribution rate after
retirement. In the third case, the result is the combination of the
first two cases, thus we observe more significant trends. Furthermore, how the government balances the heterogeneous preference orderings of different cohorts is a complex issue to be studied. Considering intergenerational differences and intergenerational fairness, we study the population weighted and equal weighted objectives.

\begin{figure}
    \centering
    \begin{subfigure}{0.49\textwidth}
        \centering
        \includegraphics[width=1\textwidth]{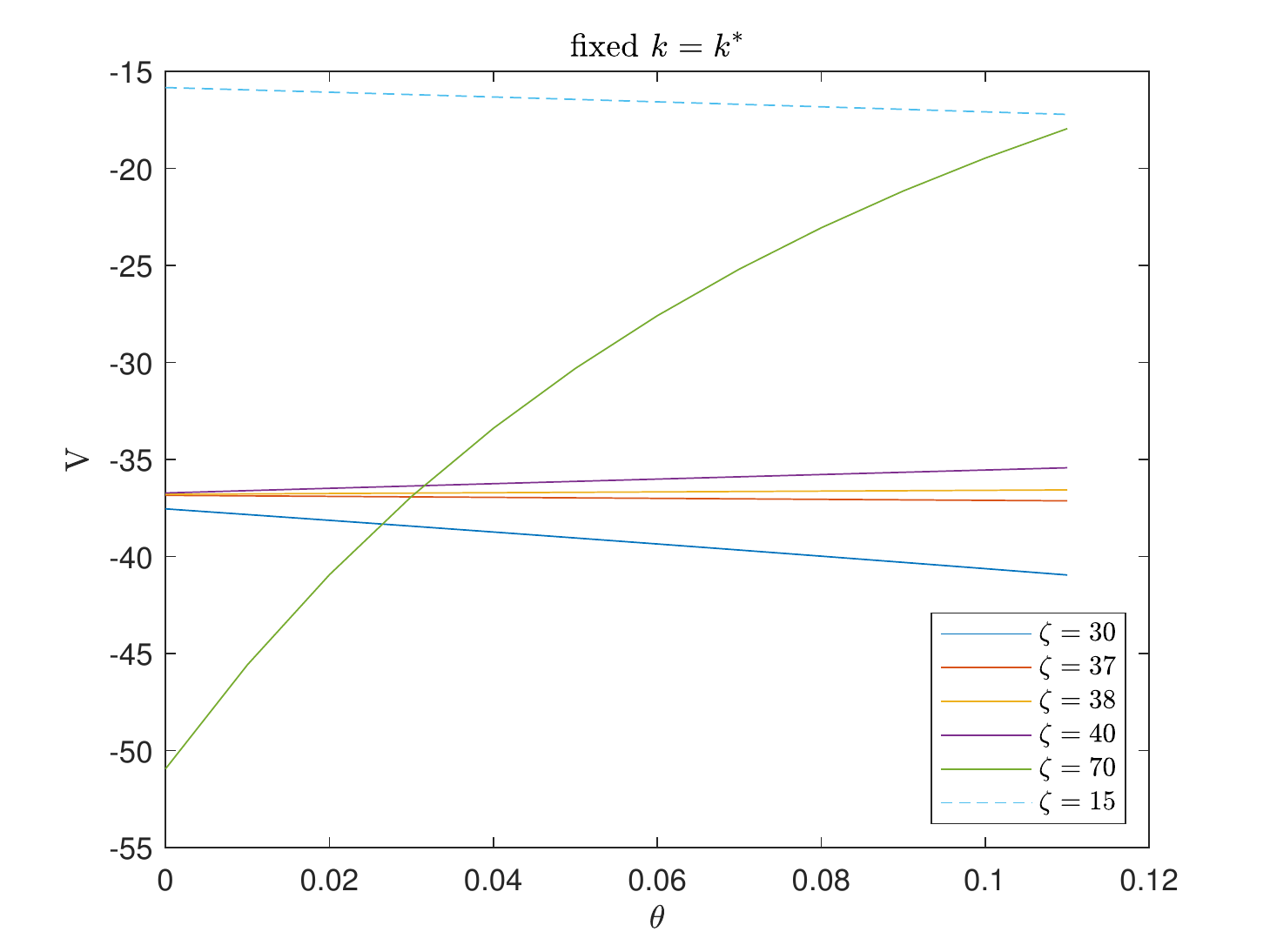}
    \end{subfigure}
    \begin{subfigure}{0.49\textwidth}
        \centering
        \includegraphics[width=1\textwidth]{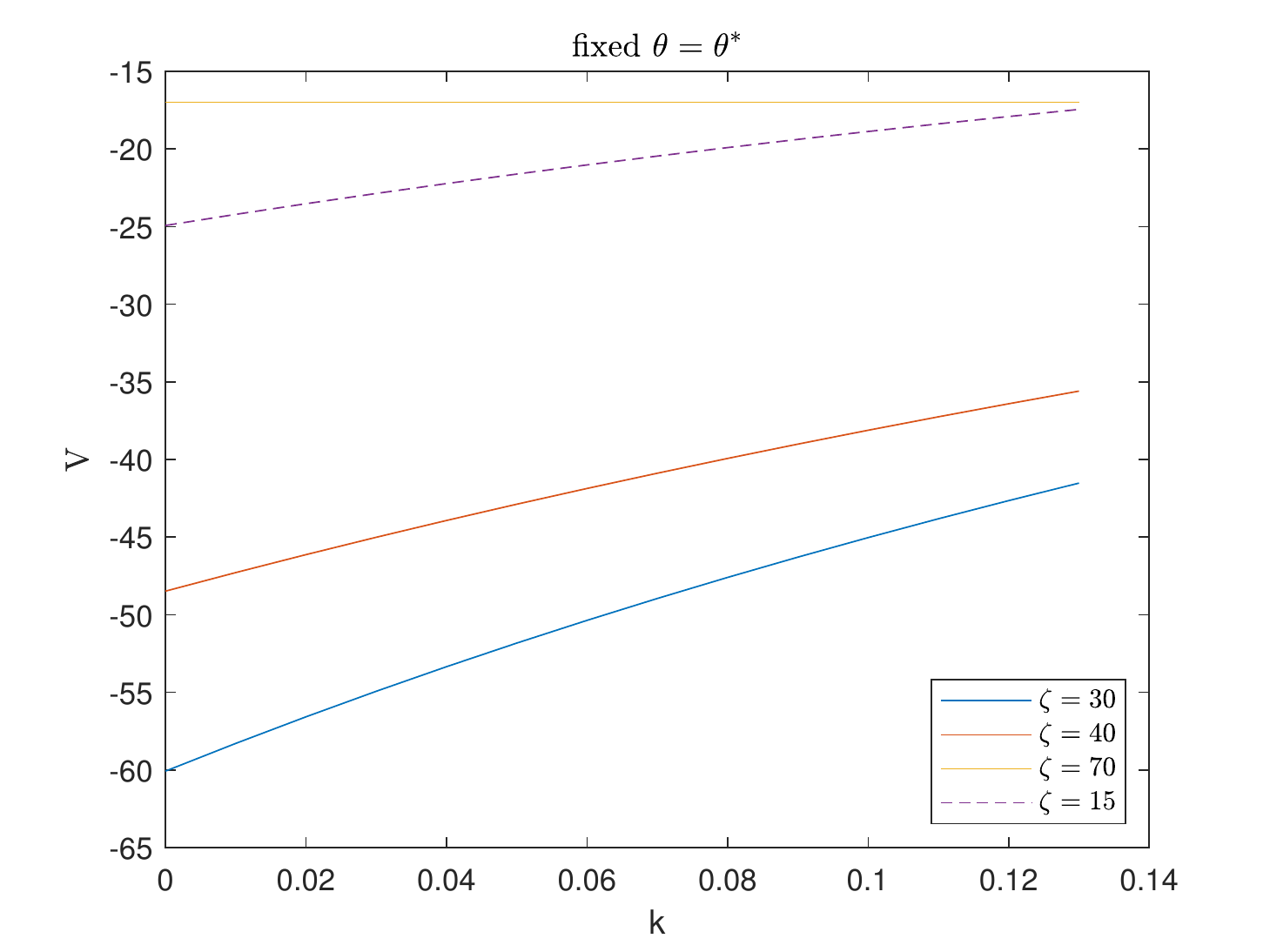}
    \end{subfigure}
    \begin{subfigure}{0.49\textwidth}
        \centering
        \includegraphics[width=1\textwidth]{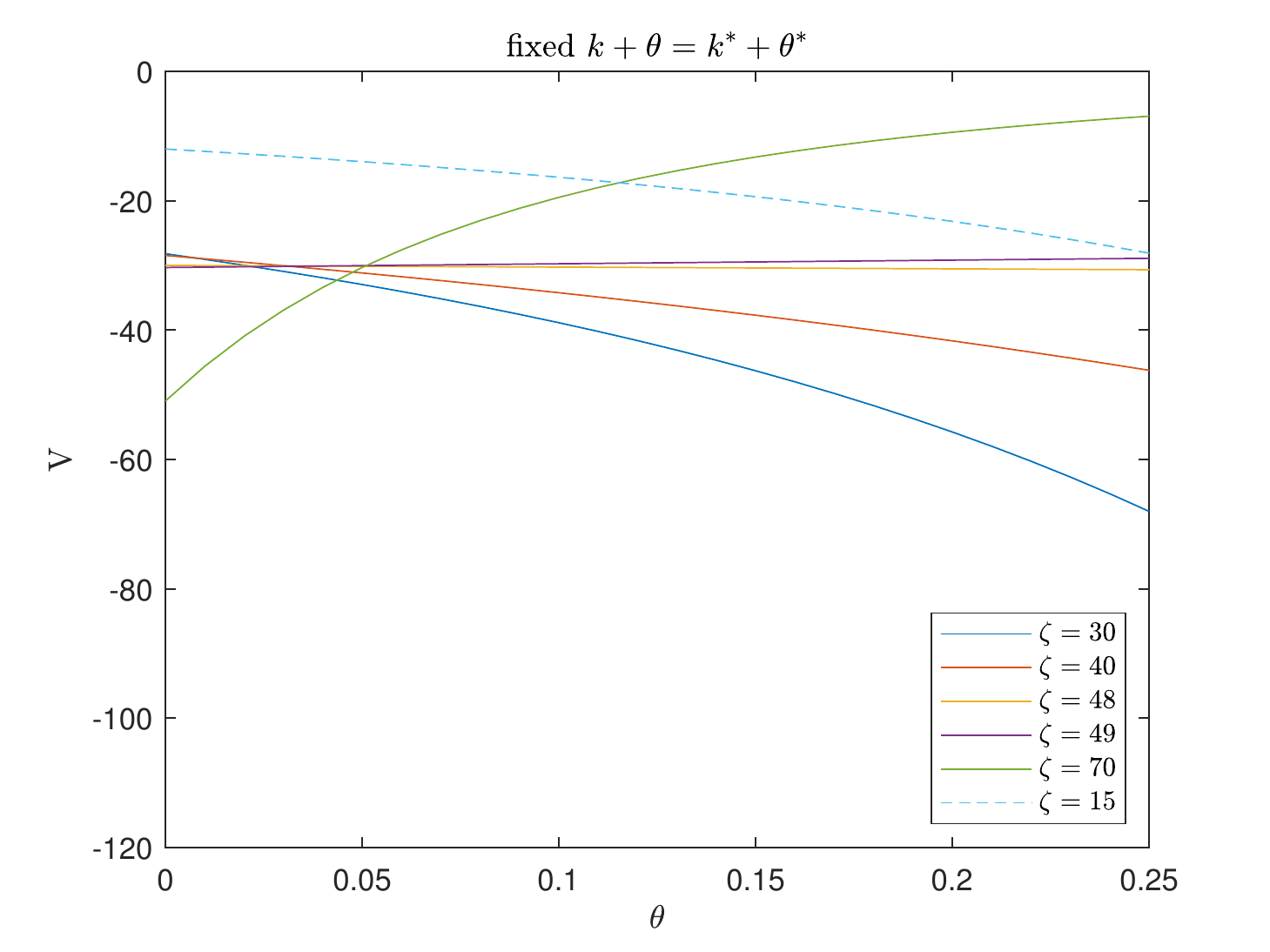}
    \end{subfigure}
\caption{Variation tendency of the value function}
\label{fig5}
\end{figure}
\vskip 5pt
In Fig. \ref{fig6}, we study the preference among PAYGO pension, EET pension and individual savings of different cohorts. In the first two cases,
$\widetilde{M_{1}}(\zeta)$ and $\widetilde{M_{1}}(\zeta)-\widetilde{M_{2}}(\zeta)$ are
the decisive coefficients in determining the preference between
PAYGO pension and individual savings, and between PAYGO and EET pensions,
respectively. We observe similar hump shape trends in the two cases.
Obviously, the older cohort prefers PAYGO because that the rise of
PAYGO contribution rate increases the benefit directly. In the third
case, positive $\widetilde{M_{2}}(\zeta)$ represents the preference for
EET pension.  Participants prefer EET pension due to its high investment efficiency under professional management and reasonable annualized withdrawal after retirement.
Meanwhile, for the older cohorts who are retired, the rise of the EET
contribution rate does not affect the value function. Thus, they
treat EET pension and individual savings equally.

\begin{figure}
    \centering
    \begin{subfigure}{0.49\textwidth}
        \centering
        \includegraphics[width=1\textwidth]{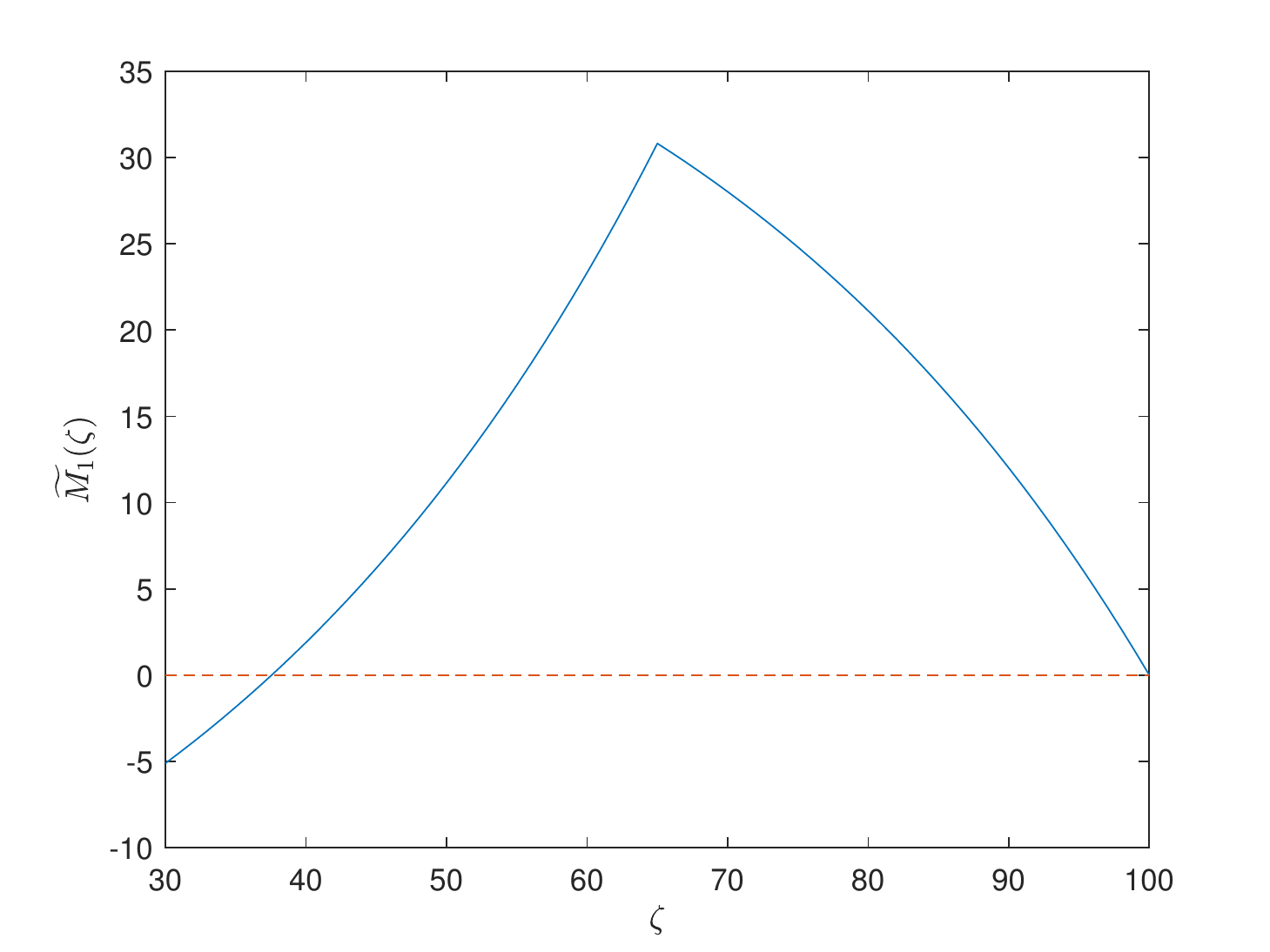}
    \end{subfigure}
    \begin{subfigure}{0.49\textwidth}
        \centering
        \includegraphics[width=1\textwidth]{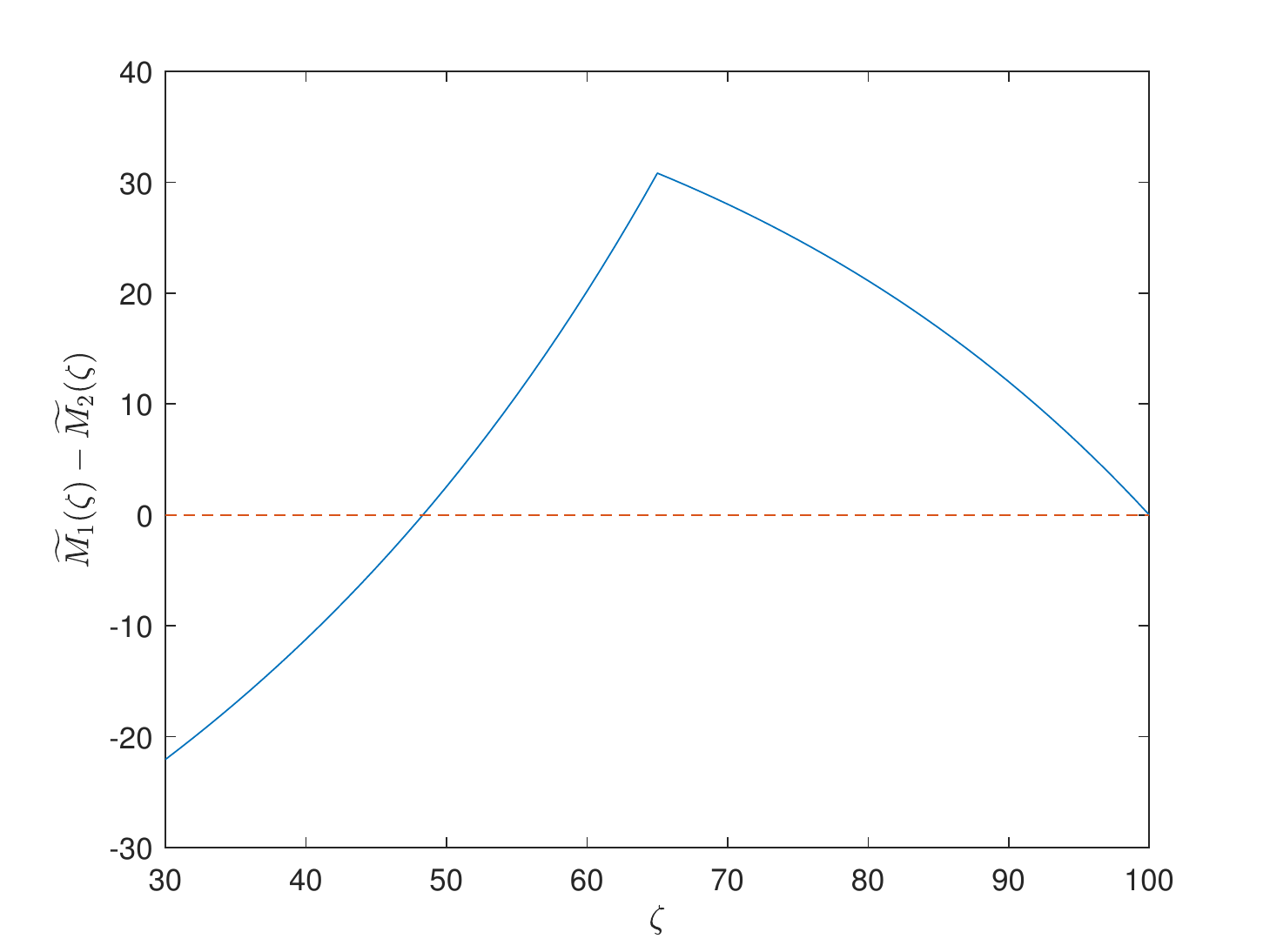}
    \end{subfigure}
    \begin{subfigure}{0.49\textwidth}
        \centering
        \includegraphics[width=1\textwidth]{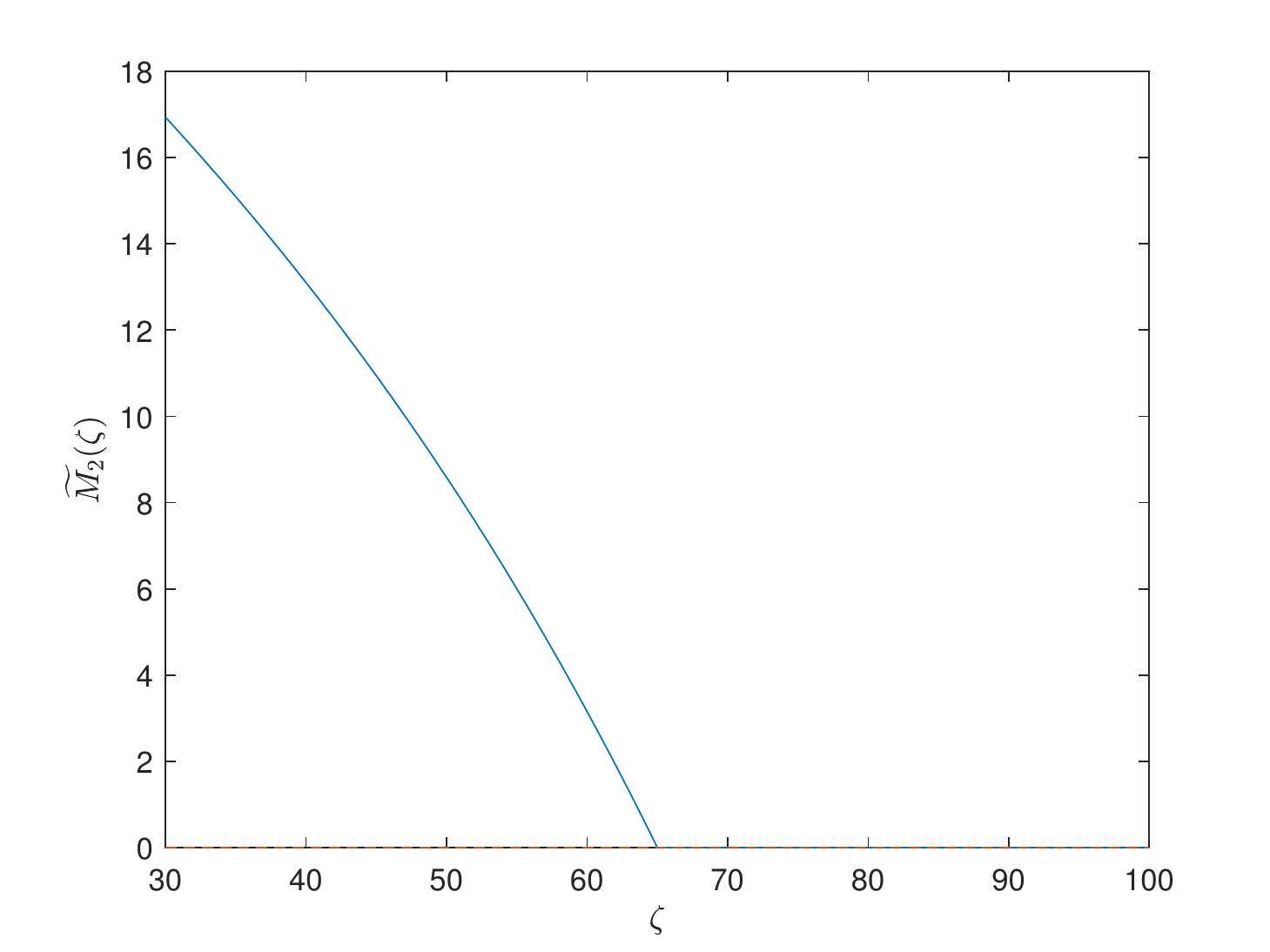}
    \end{subfigure}
\caption{Preference among PAYGO, EET and individual savings}
\label{fig6}
\end{figure}
\vskip 5pt
In the following three figures, we study the impacts of the
exogenous parameters, especially the population growth rate $\rho$
on the critical ages. In the first case of Fig. \ref{fig7}, we study the
impacts of $\rho$ and $\mu$ on the critical age $\hat{\zeta}$, which
is the boundary age for the preference between PAYGO pension and individual savings. When the population growth rate decreases,  PAYGO
pension is less preferable and thus the critical age rises.
Meanwhile, when the individual risky investment return increases, individual savings is more preferable and thus the
critical age rises too. These results are consistent with the
criterions in \citet{SAM1958} and \citet{AAR1966}. Particularly,
there is a truncation area when $\rho$ is extremely high and $\mu$
is extremely low. In this circumstance, all the cohorts prefer PAYGO
pension. In the second case of Fig. \ref{fig7}, we study the impacts of
$\rho$ and $\gamma$ on the critical age $\hat{\zeta}$. When the
salary growth rate increases, PAYGO pension is more preferable and
the critical age declines accordingly. In the third case of Fig. \ref{fig7},
we study the impacts of $\rho$ and $\Delta$ on the critical age
$\hat{\zeta}$. $\Delta$ represents the variation proportion of the
parameters $A$ and $B$ ($\Delta \triangleq \frac{A^{'}}{A}=\frac{B^{'}}{B}$) in the force of mortality model (\ref{mortality}). And lower $\Delta$
represents lower probability of death and thus the larger longevity
risk. We observe that when the longevity problem becomes more
serious, PAYGO pension is less preferable and the critical age
rises.  In the last case of Fig. \ref{fig7}, we study the impacts of $\rho$
and $\tau$ on the critical age $\hat{\zeta}$. Interestingly,
postponing retirement first increases and then decreases the
critical age. According to the common sense, postponing retirement
reduces the attractiveness of PAYGO pension because of the longer
contribution  period and the shorter benefit period. However, when
the retirement time is very late, the benefit received is so
sufficient that it can offset the adverse impact of the short
benefit period. Thus, the critical age declines eventually.

\begin{figure}
    \centering
    \begin{subfigure}{0.49\textwidth}
        \centering
        \includegraphics[width=1\textwidth]{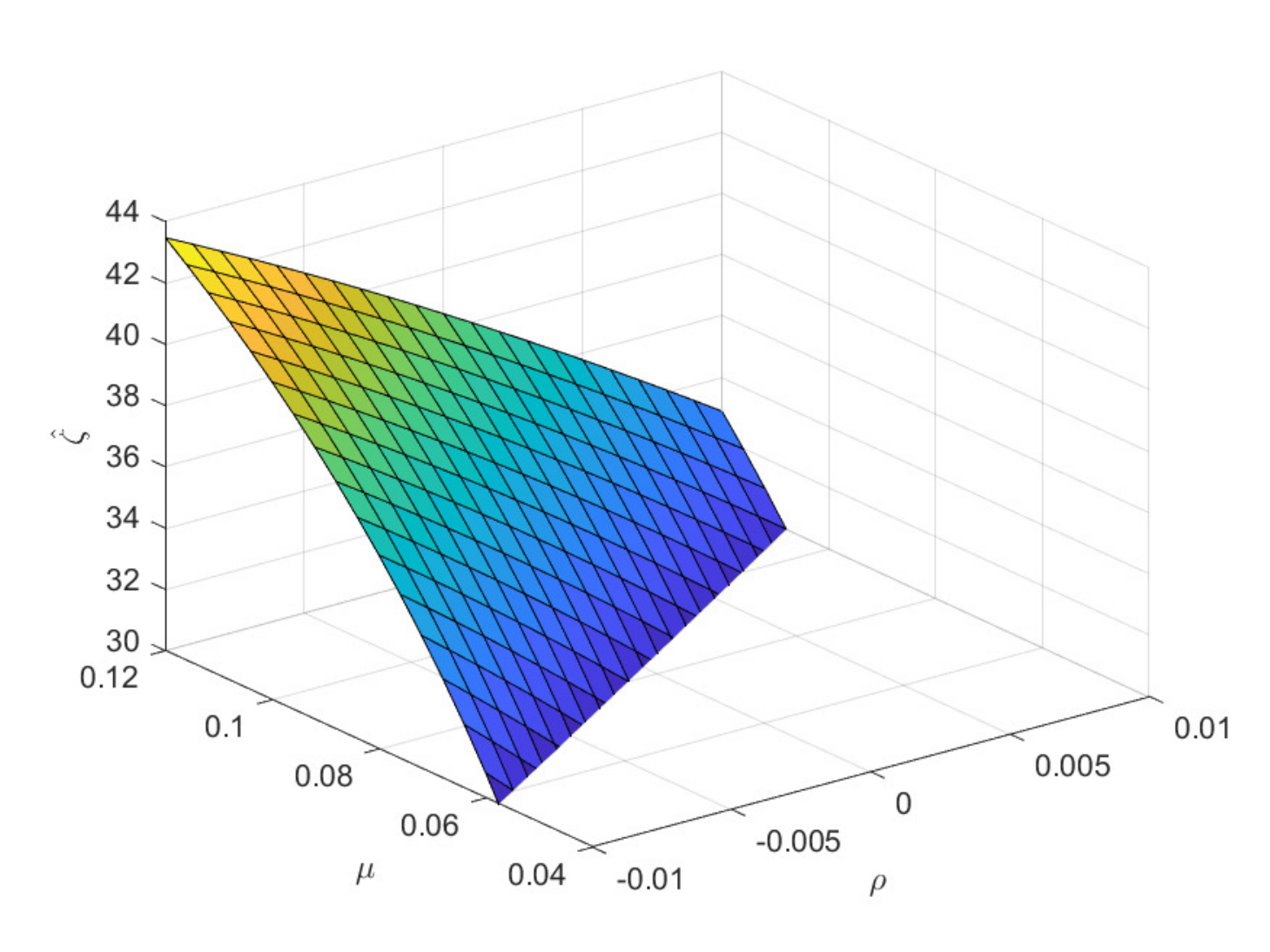}
    \end{subfigure}
    \begin{subfigure}{0.49\textwidth}
        \centering
        \includegraphics[width=1\textwidth]{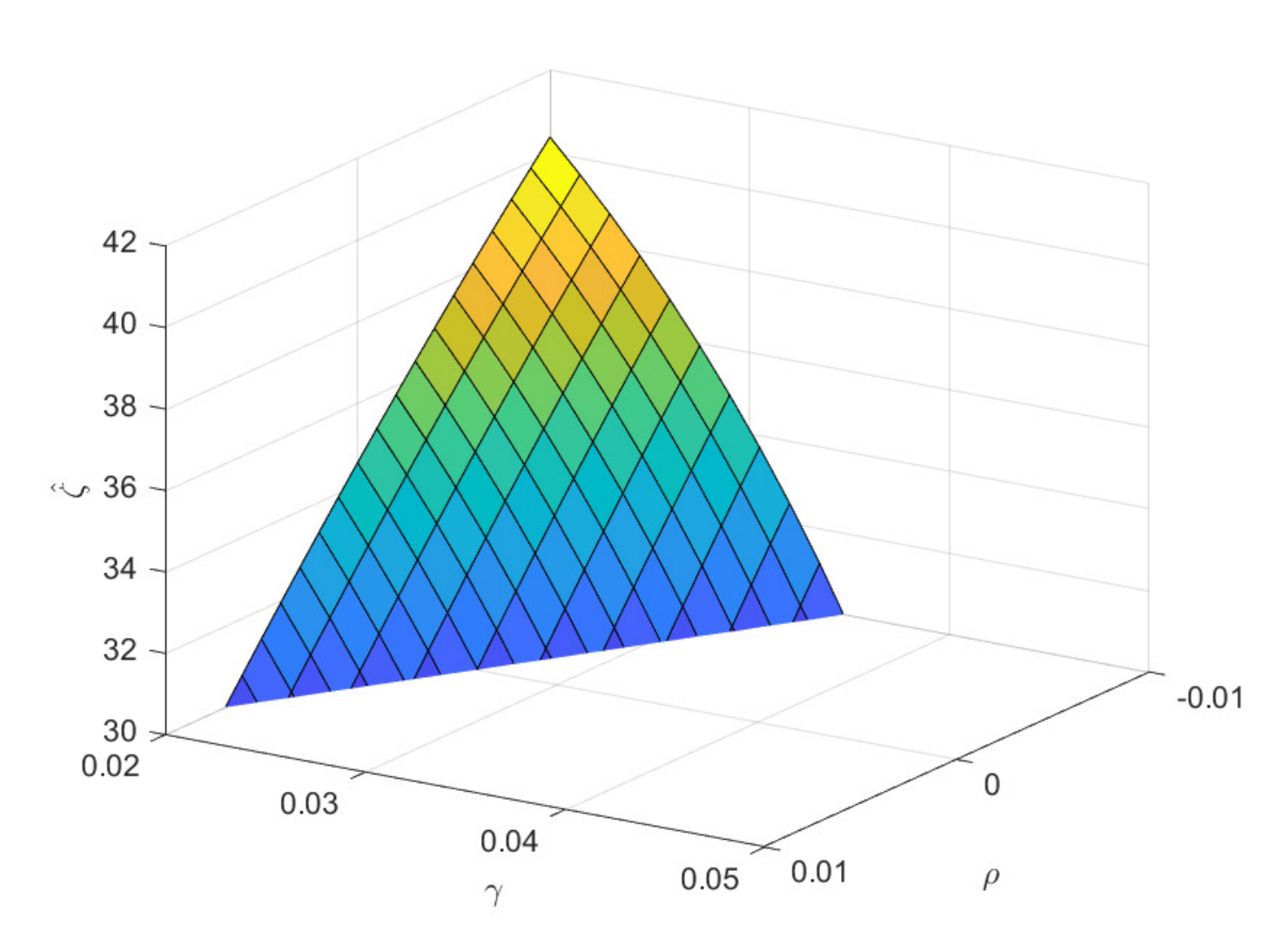}
    \end{subfigure}
    \begin{subfigure}{0.49\textwidth}
        \centering
        \includegraphics[width=1\textwidth]{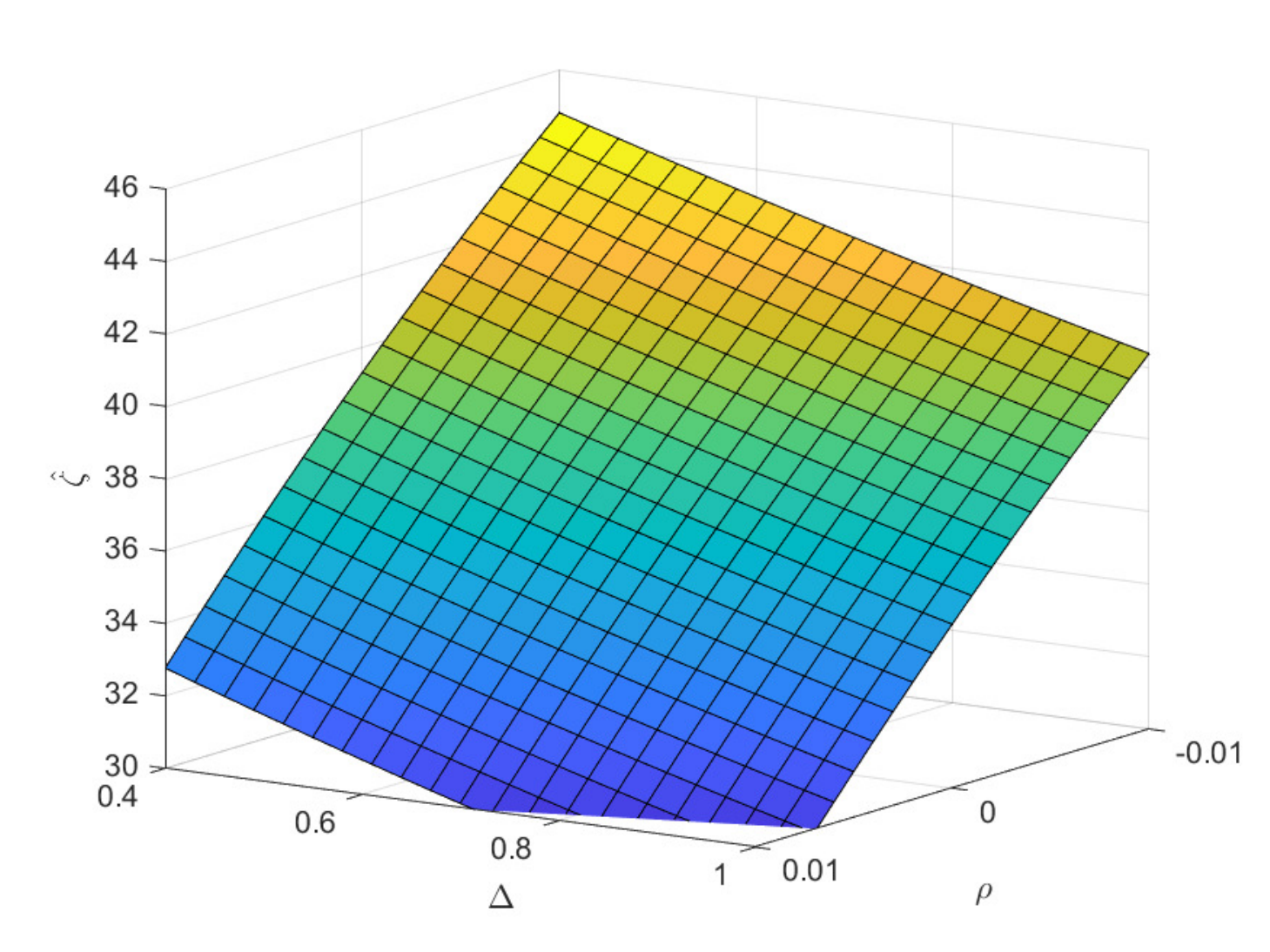}
    \end{subfigure}
    \begin{subfigure}{0.49\textwidth}
        \centering
        \includegraphics[width=1\textwidth]{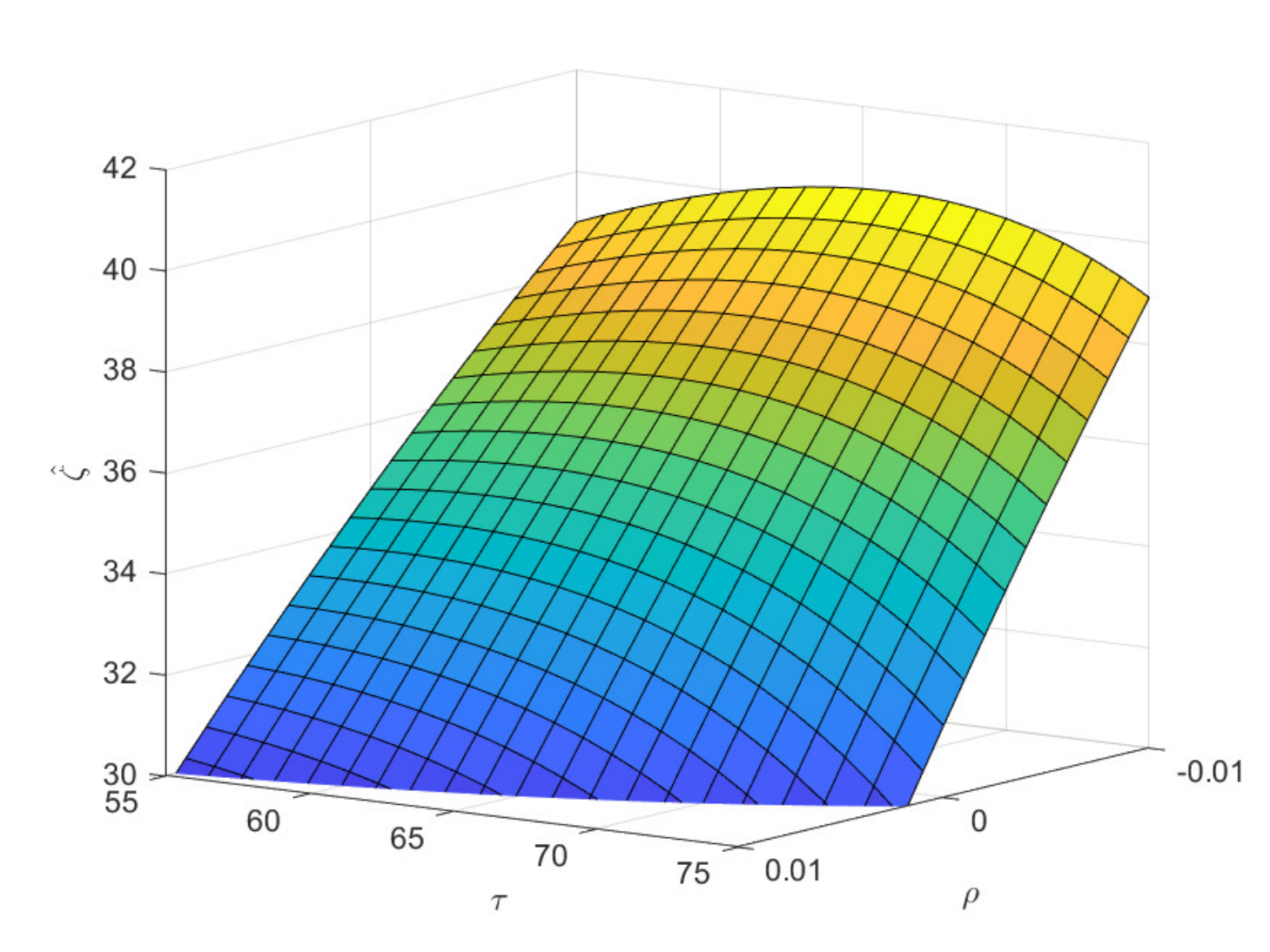}
    \end{subfigure}
\caption{Impacts of $\rho$ and $\mu$, $\rho$ and $\gamma$, $\rho$ and $\Delta$, $\rho$ and $\tau$ on critical age $\hat{\zeta}$}
\label{fig7}
\end{figure}
\vskip 5pt
In the first case of Fig. \ref{fig8}, we study the impacts of $\rho$ and
$\tau_{2}$ on the critical age $\tilde{\zeta}$, which is the
boundary age for the preference between PAYGO and EET pensions.
$\tau_{2}$ is the marginal tax rate for the benefit in EET pension.
When $\tau_{2}$ declines, EET pension becomes more preferable due to
the increase of tax saving effect. Thus, the critical age rises
and more cohorts prefer EET pension in this circumstance. In practice, the government
can increase the preferential taxation incentives to improve the attractiveness of EET pension.
Particularly, $\tau_1$ has the opposite effect as $\tau_2$, and we omit the discussions on $\tau_1$. Meanwhile, the decrease of $\rho$ reduces the attractiveness of
PAYGO pension and increases the critical age accordingly. In the
second case of Fig. \ref{fig8}, we study the impacts of $\rho$ and $\gamma$
on the critical age $\tilde{\zeta}$. The result is consistent with
the one in the second case of Fig. \ref{fig7}. In the third case of Fig. \ref{fig8},
we study the impacts of $\rho$ and $\alpha$ on the critical age
$\tilde{\zeta}$. When the investment return of EET pension rises,
EET pension naturally becomes more preferable. And the critical age
rises accordingly. In the last case of Fig. \ref{fig8}, we study the impacts
of $\rho$ and $\tau$ on the critical age $\tilde{\zeta}$. We observe
a hump shape structure similar with the one in the last case of Fig. \ref{fig7}.
However, the terminal decline is less significant. When the
retirement time is very late, the attractiveness of EET pension also
increases, which is due to the enlarge of the accumulation period
and the return efficiency in this period is high. Thus, the slight
drop in the critical age is the synthetic result when both pensions
become more attractive at the same time.
\begin{figure}
    \centering
    \begin{subfigure}{0.49\textwidth}
        \centering
        \includegraphics[width=1\textwidth]{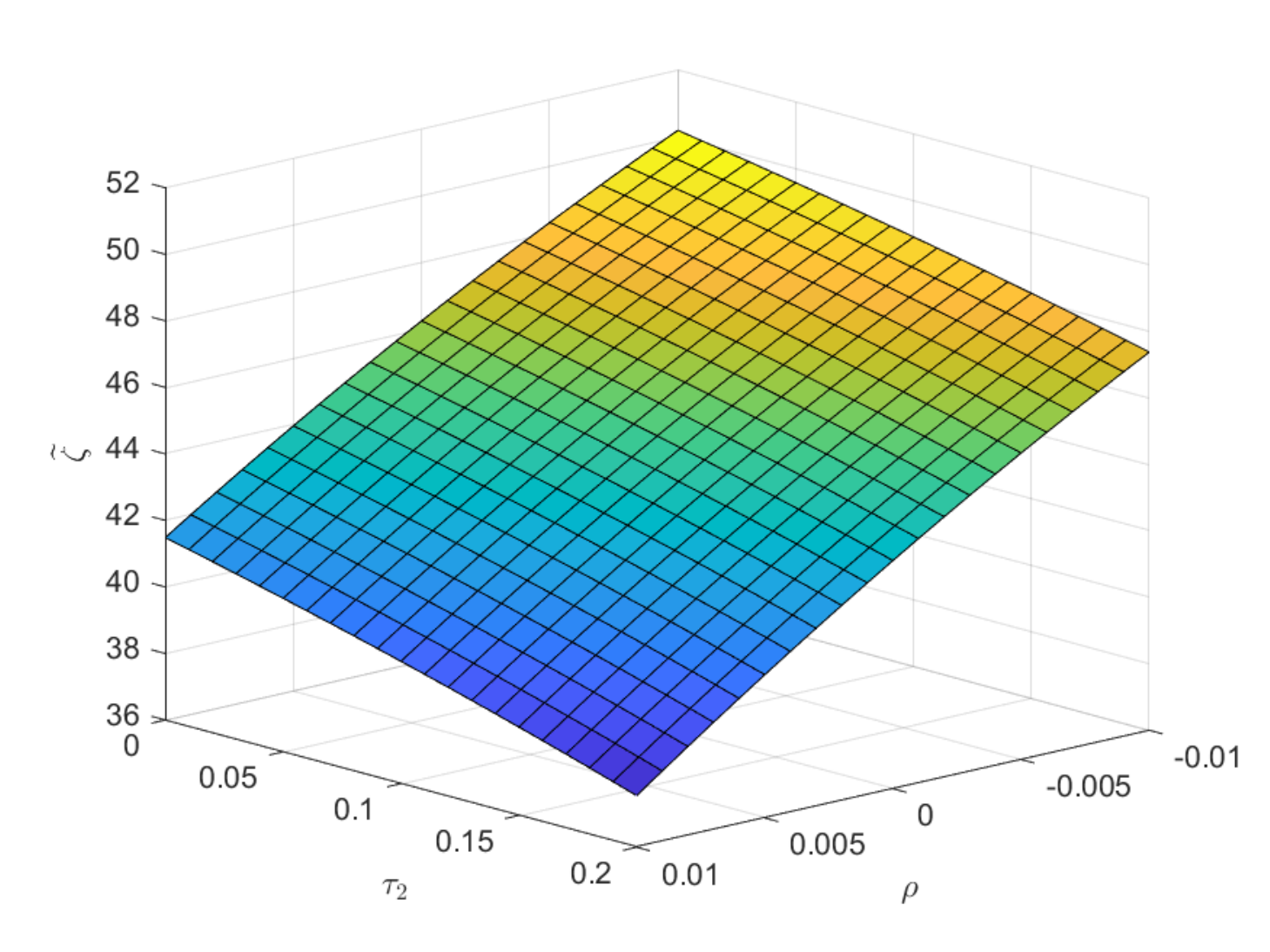}
    \end{subfigure}
    \begin{subfigure}{0.49\textwidth}
        \centering
        \includegraphics[width=1\textwidth]{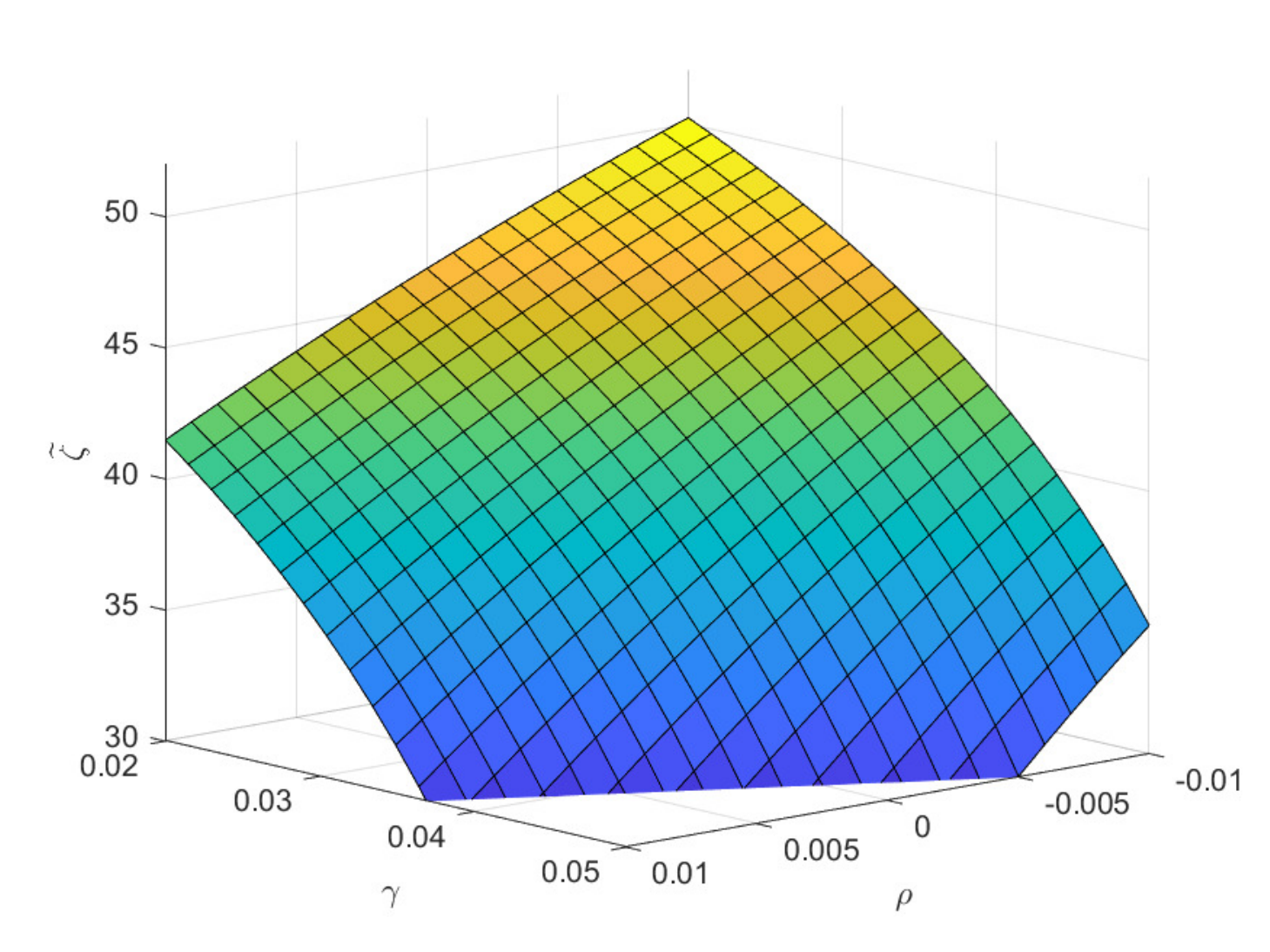}
    \end{subfigure}
    \begin{subfigure}{0.49\textwidth}
        \centering
        \includegraphics[width=1\textwidth]{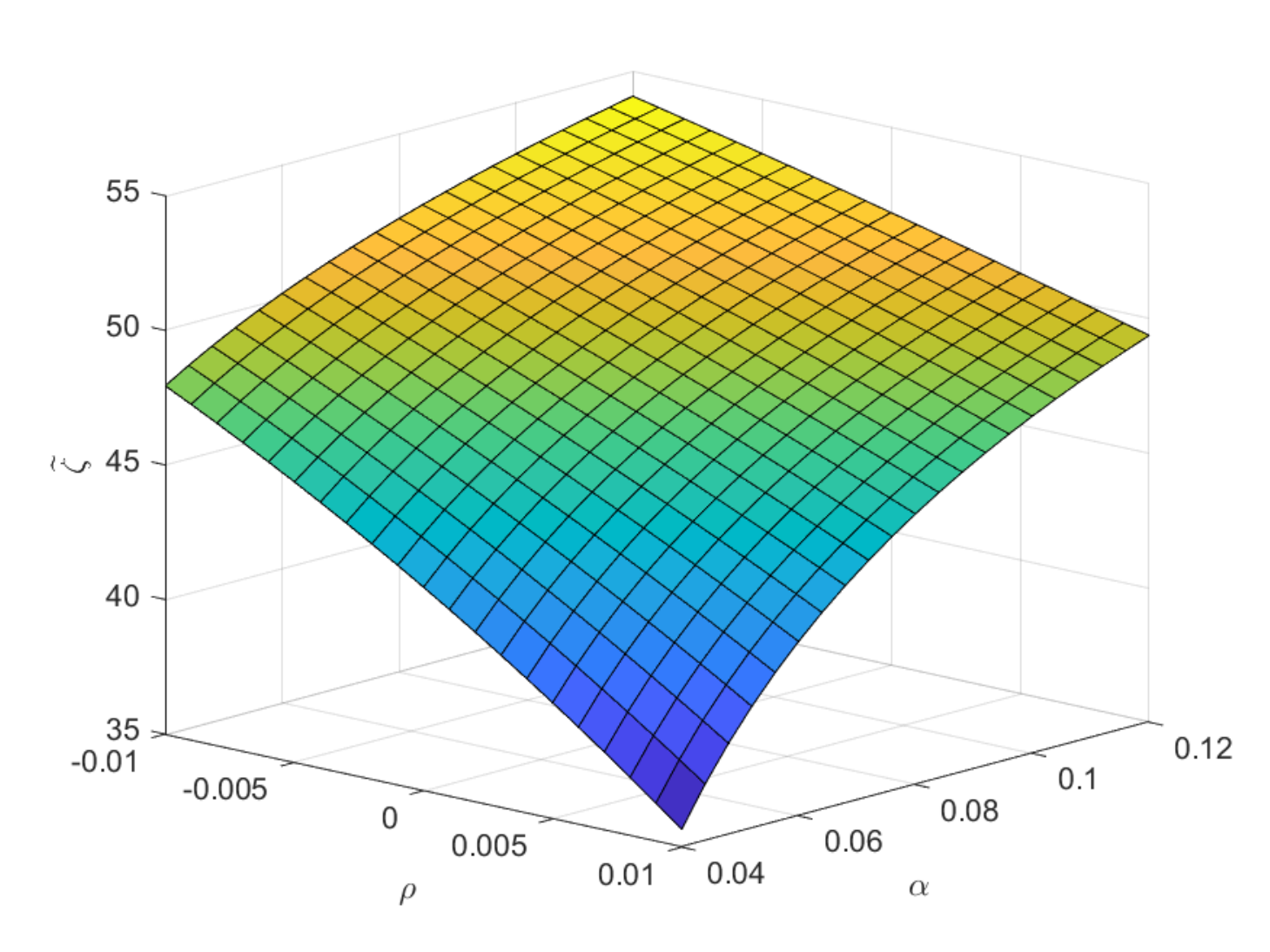}
    \end{subfigure}
    \begin{subfigure}{0.49\textwidth}
        \centering
        \includegraphics[width=1\textwidth]{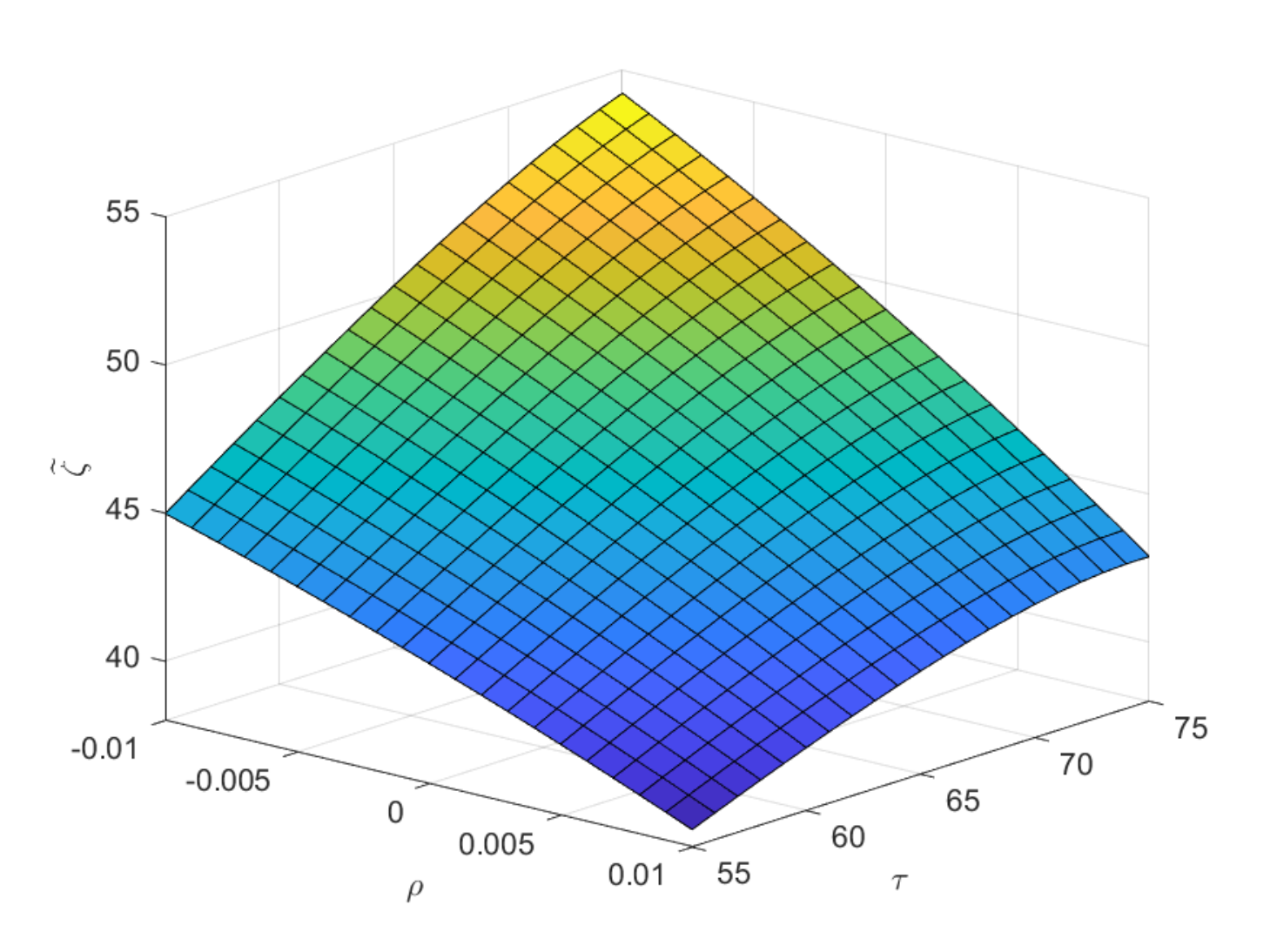}
    \end{subfigure}
    \caption{Impacts of $\rho$ and $\tau_2$, $\rho$ and $\gamma$, $\rho$ and $\alpha$, $\rho$ and $\tau$ on critical age $\widetilde{\zeta}$}
    \label{fig8}
\end{figure}
\begin{figure}[h]
    \centering
    \begin{subfigure}{0.45\textwidth}
        \centering
        \includegraphics[width=0.9\textwidth]{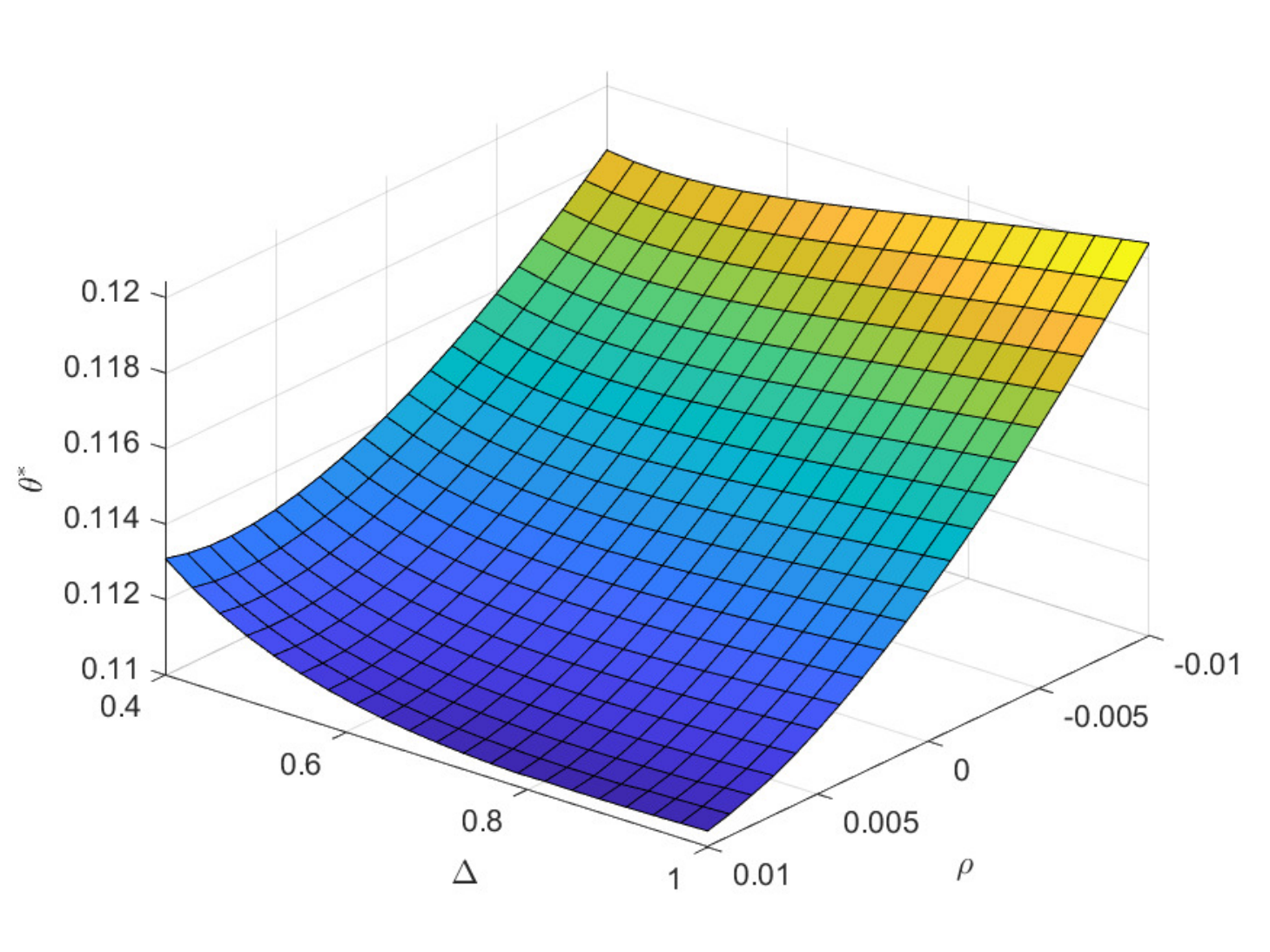}
    \end{subfigure}
    \begin{subfigure}{0.45\textwidth}
        \centering
        \includegraphics[width=0.9\textwidth]{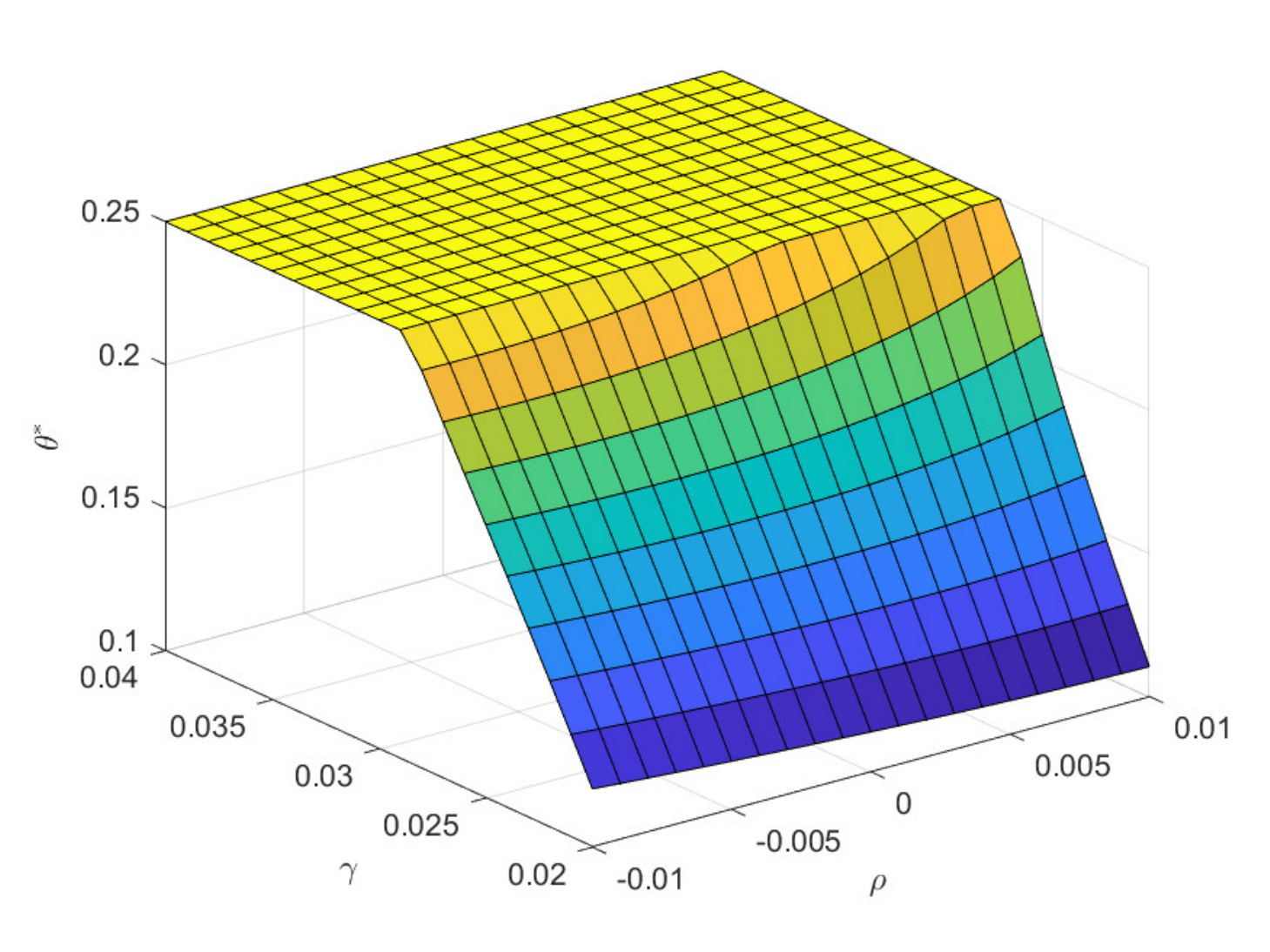}
    \end{subfigure}
\begin{subfigure}{0.45\textwidth}
    \centering
    \includegraphics[width=0.9\textwidth]{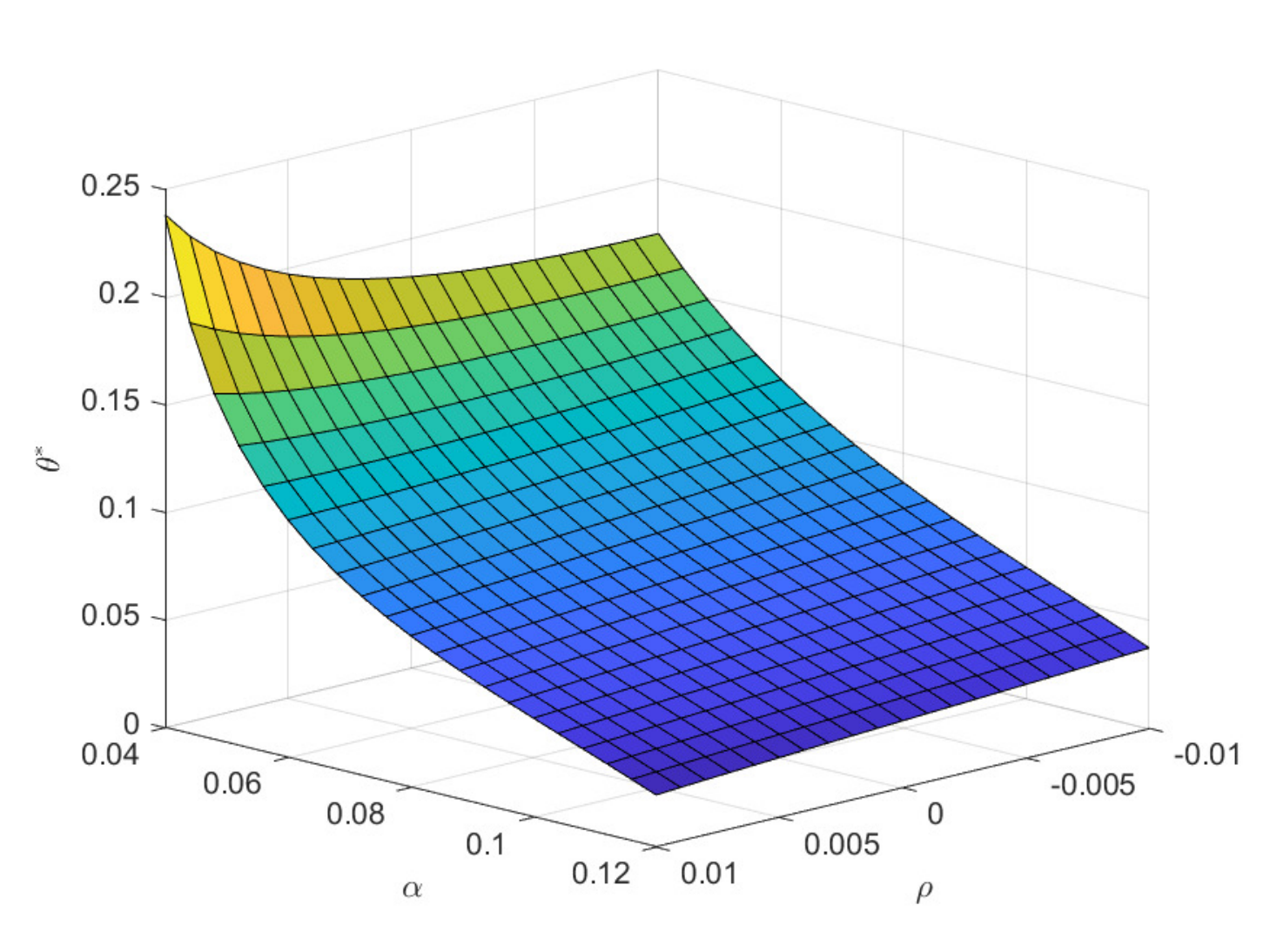}
\end{subfigure}
\begin{subfigure}{0.45\textwidth}
    \centering
    \includegraphics[width=\textwidth]{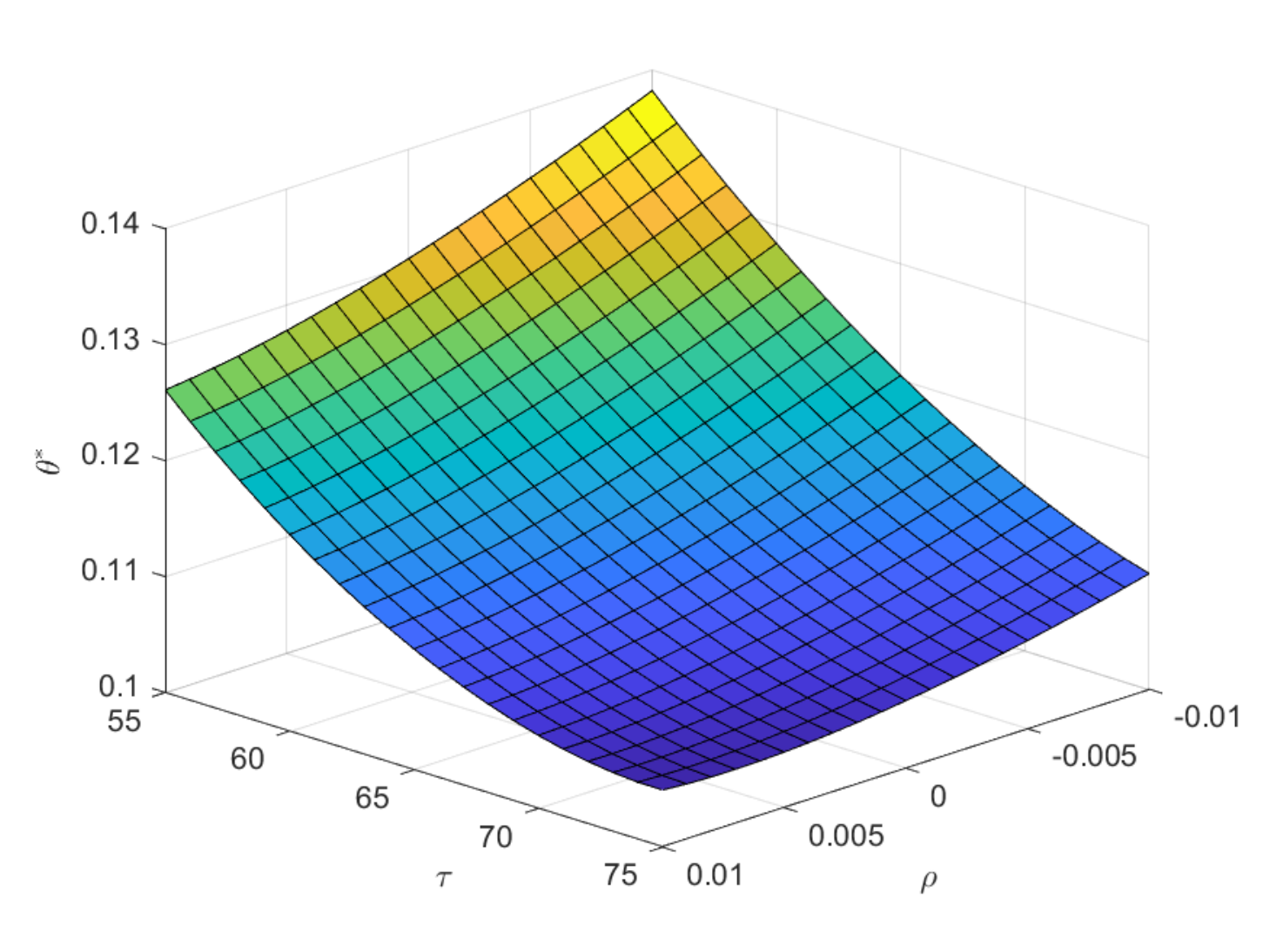}
\end{subfigure}
    \caption{Impacts of $\rho$ and $\Delta$, $\rho$ and $\gamma$, $\rho$ and $\alpha$, $\rho$ and $\tau$ on $\theta^*$}
    \label{fig10}
\end{figure}
\begin{figure}
    \centering
    \begin{subfigure}{0.45\textwidth}
        \centering
        \includegraphics[width=0.9\textwidth]{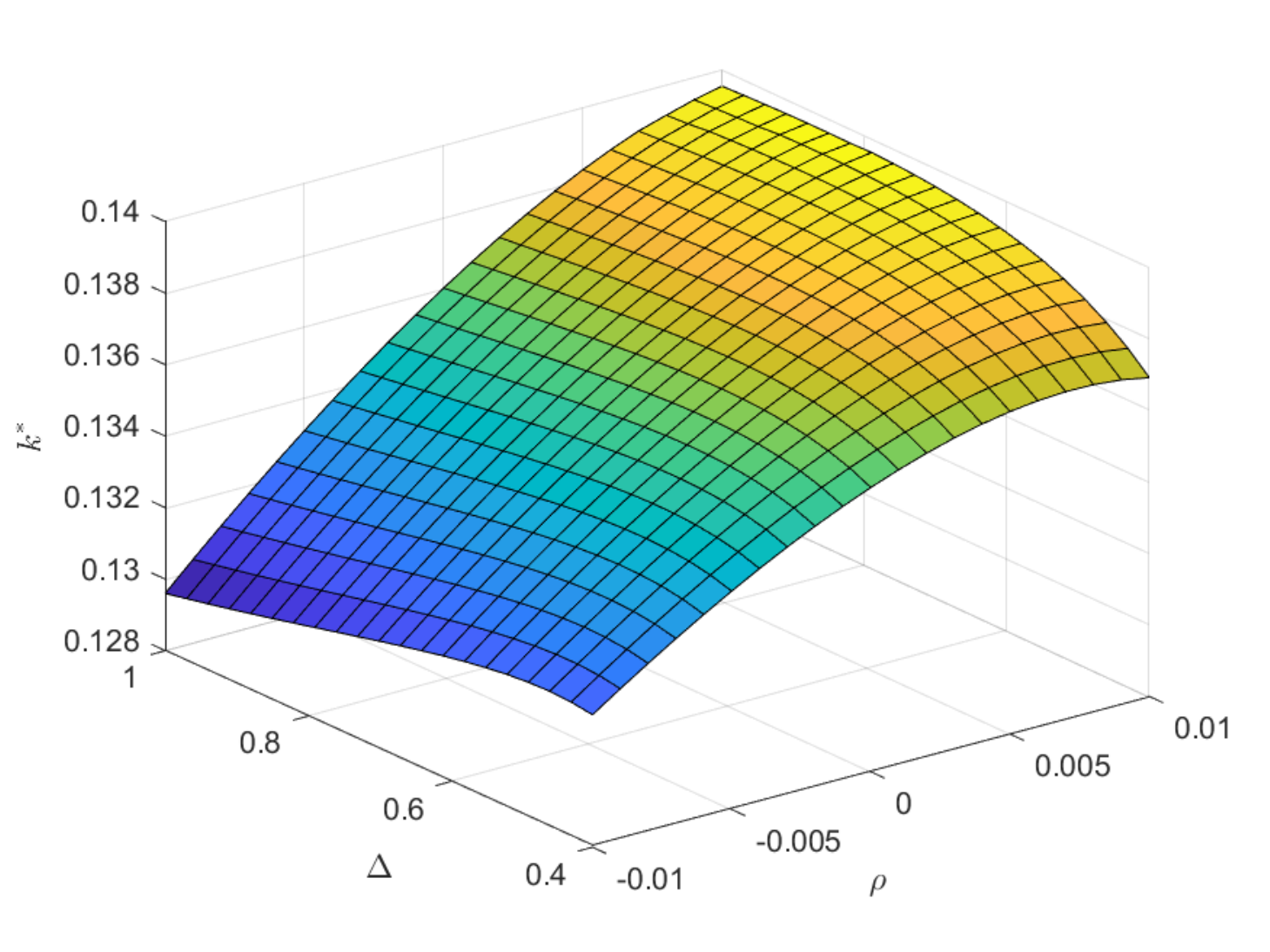}
    \end{subfigure}
    \begin{subfigure}{0.45\textwidth}
        \centering
        \includegraphics[width=0.9\textwidth]{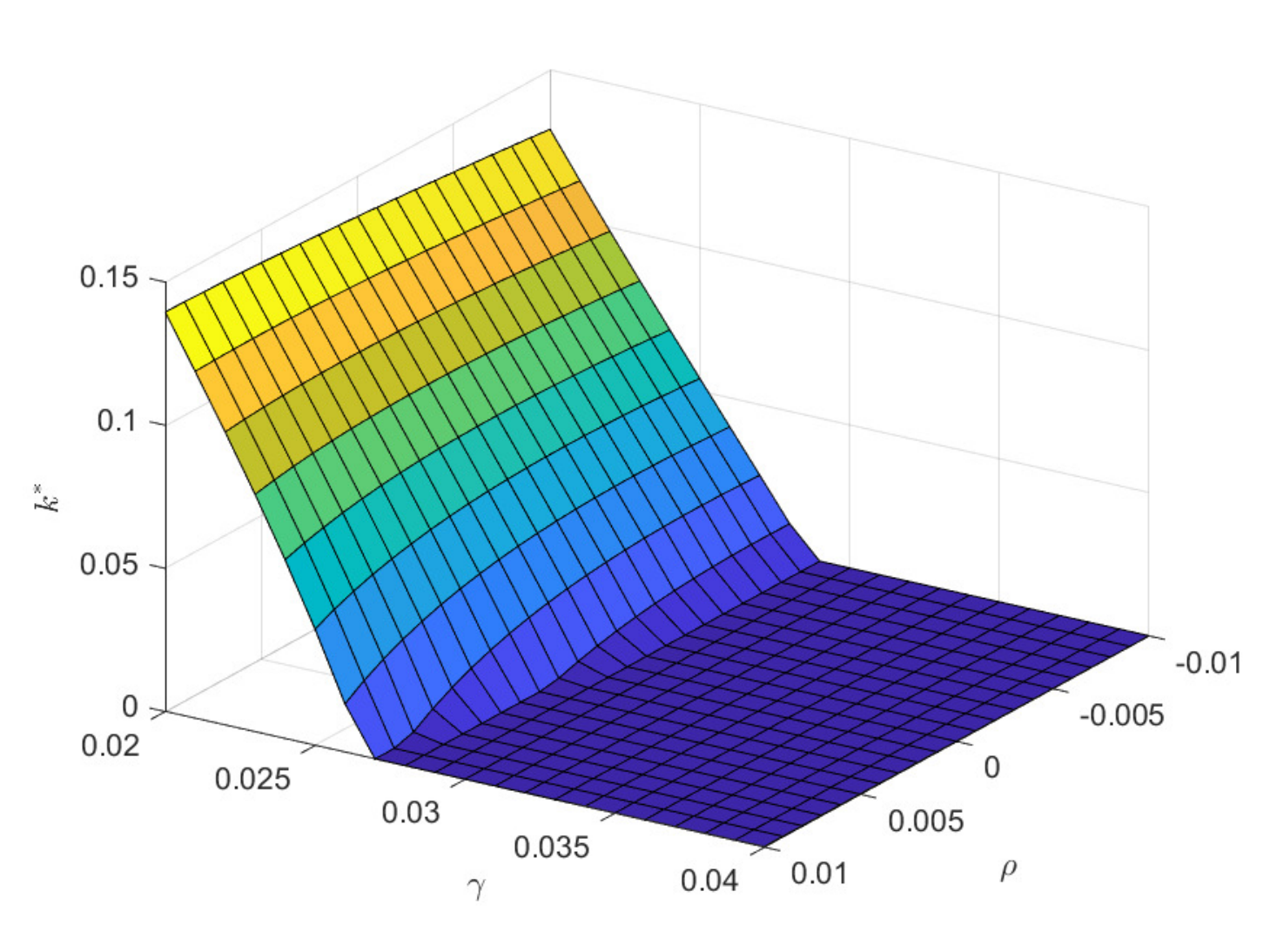}
    \end{subfigure}
    \begin{subfigure}{0.45\textwidth}
        \centering
        \includegraphics[width=0.9\textwidth]{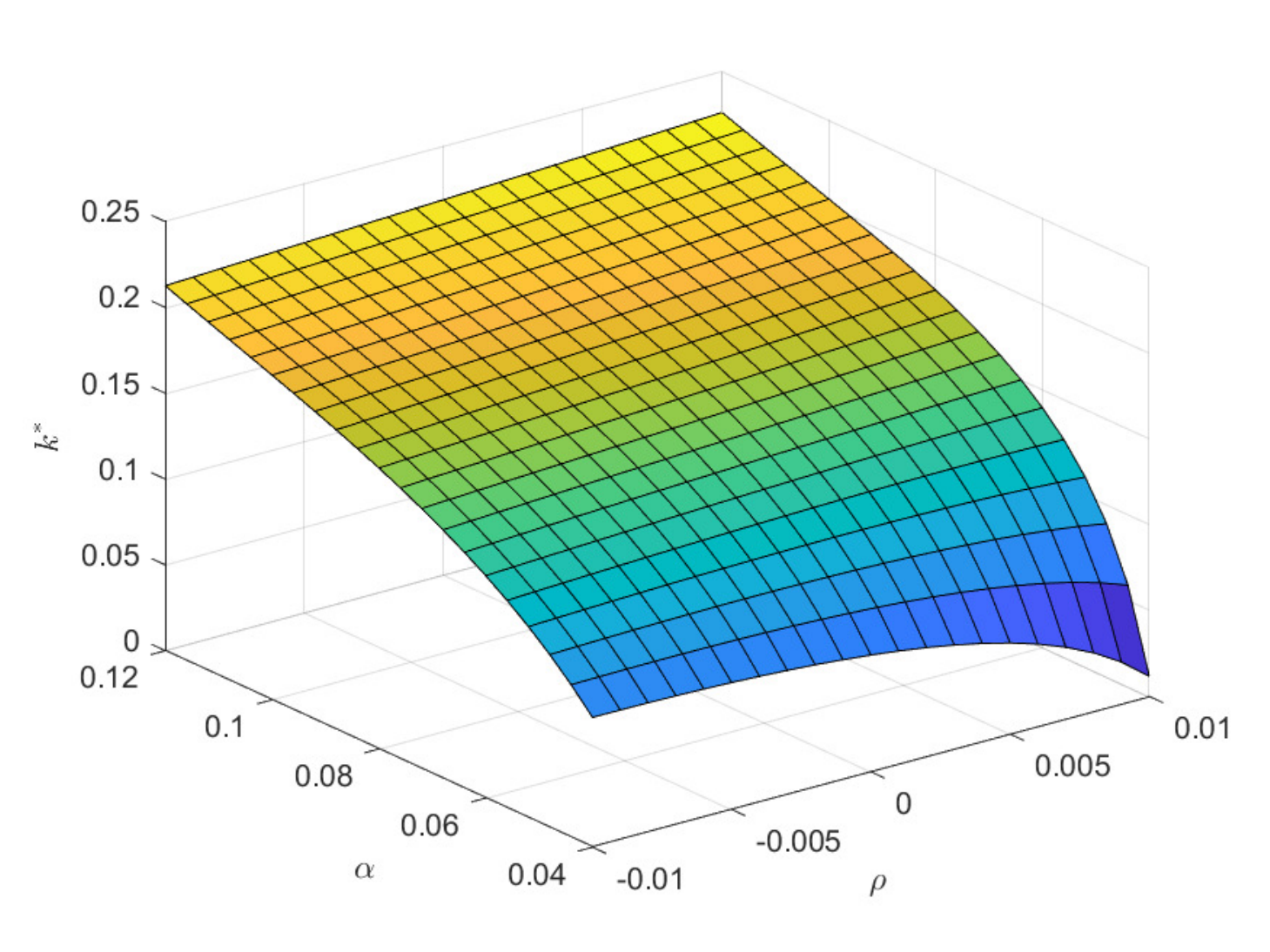}
    \end{subfigure}
    \begin{subfigure}{0.45\textwidth}
        \centering
        \includegraphics[width=0.9\textwidth]{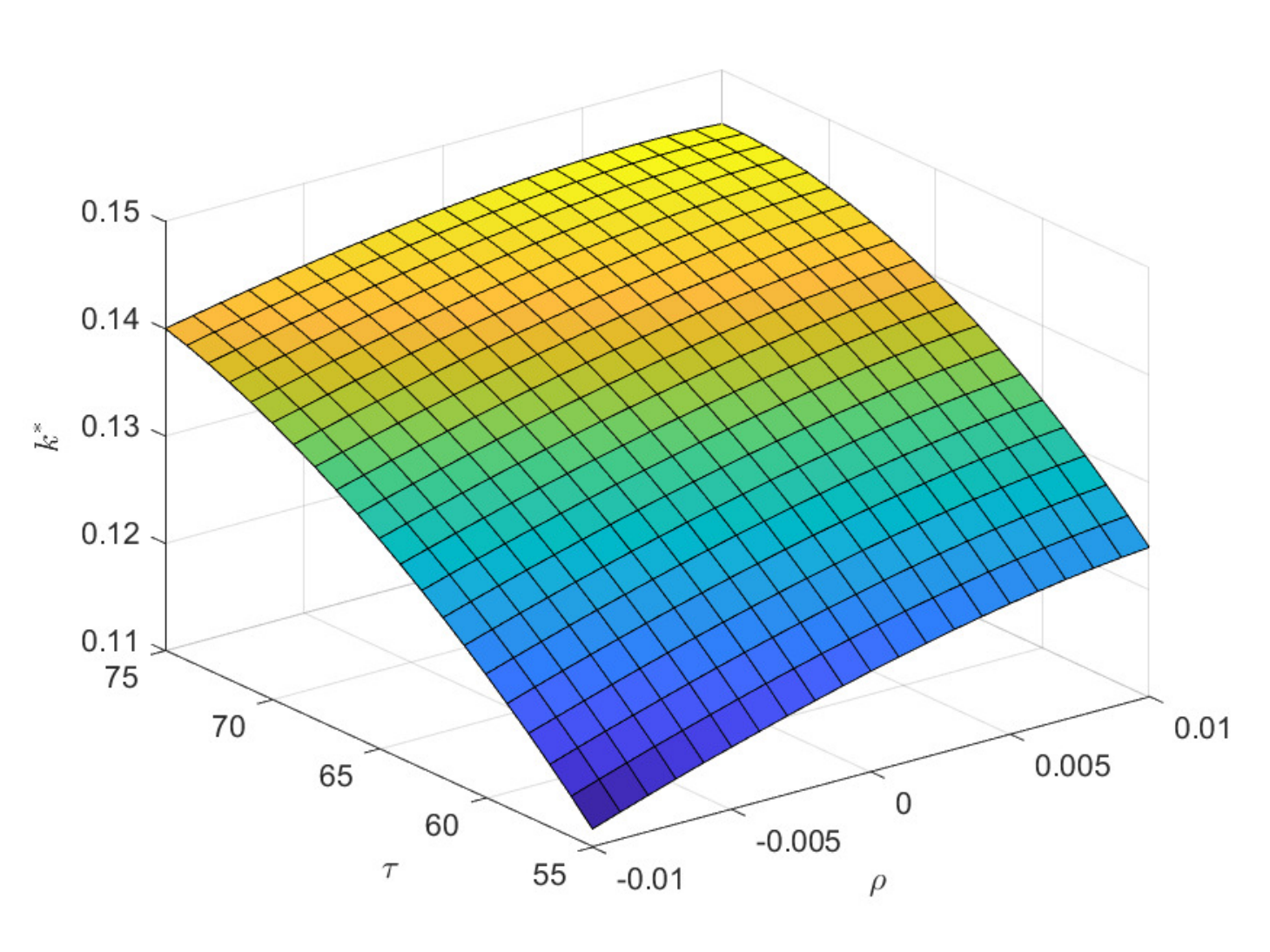}
    \end{subfigure}
    \caption{Impacts of $\rho$ and $\Delta$, $\rho$ and $\gamma$, $\rho$ and $\alpha$, $\rho$ and $\tau$ on $k^*$}
    \label{fig11}
\end{figure}
\vskip 5pt
In Fig. \ref{fig10} and Fig. \ref{fig11}, we study the impacts of the exogenous
parameters on the optimal contribution rates $\theta^{*}$ and $k^*$.
Particularly, the population growth rate $\rho$ plays two important
roles in deciding $\theta^{*}$ and $k^*$. One is that it affects the
preference orderings of different cohorts among the three pensions.
The other is that it changes the weight of the government
decision-making by changing the population of different cohorts.
\vskip 5pt
In the first case of Fig. \ref{fig10}, lower population growth rate $\rho$ decreases the attractiveness of PAYGO pension. Meanwhile, it increases the weight of the older cohorts in government's decision-making. And the latter effect dominates the former one. Thus, the optimal PAYGO contribution rate increases with respect to the decrease of the population growth rate. Moreover, higher longevity risk (smaller $\Delta$) also decreases the attractiveness of PAYGO pension. Particularly, we suppose that the annuity management company is fully aware of the longevity risk and revise the benefit rules of EET pension accordingly. Thus, higher longevity risk also decreases the attractiveness of EET pension. The results show that in the scenario of lower population growth rate, there is larger older population. In this circumstance,
the increase of longevity risk has a greater impact in reducing the attractiveness of PAYGO pension. Thus, we observe a slightly decrease in the optimal PAYGO contribution rate.  The opposite is valid in the scenario of higher population growth rate.

In the second case of Fig. \ref{fig10}, larger salary growth rate $\gamma$ increases the attractiveness of PAYGO pension
dramatically and results in a large area of $\theta^*=25\%$. This is
why PAYGO pension has been working effectively in many developing
countries over the past decades. In the third case of Fig. \ref{fig10},
larger EET return $\alpha$ decreases the attractiveness of PAYGO
pension. Thus, in the extreme case of abnormal high EET return, the
optimal PAYGO contribution rate $\theta^*$ decays to zero. This is
the case in most of the developed countries with professional
management experience of the EET fund. However, in the scenario
combined with low EET return and low population growth rate,
$\theta^*$ is relatively high because of that the older cohorts play
a decisive role in the government decision-making. In the last case
of Fig. \ref{fig10}, we study the impacts of $\tau$ and $\rho$ on $\theta^*$.
Interestingly, postponing retirement decreases the optimal PAYGO contribution rate.
Although postponing retirement leads to higher PAYGO benefit,
the longer contribution period weakens this beneficial effect. Besides, postponing retirement leads to a longer accumulation period of EET pension.
Because of the superior return efficiency of EET pension, this enhances the attractiveness of EET.
\vskip5pt
In the four cases of Fig. \ref{fig11}, we observe the opposite trend
of $k^*$ with respect to the trend of $\theta^*$ in Fig. \ref{fig10}. Because
the two pensions are substitutes for each other, this opposite trend
is naturally valid. In all the scenarios, the
comparative efficiencies of the two obligatory pensions are superior, and $\theta^*+k^*=25\%$ holds.

\begin{figure}[h]
	\centering
	\begin{subfigure}{0.45\textwidth}
		\centering
		\includegraphics[width=0.9\textwidth]{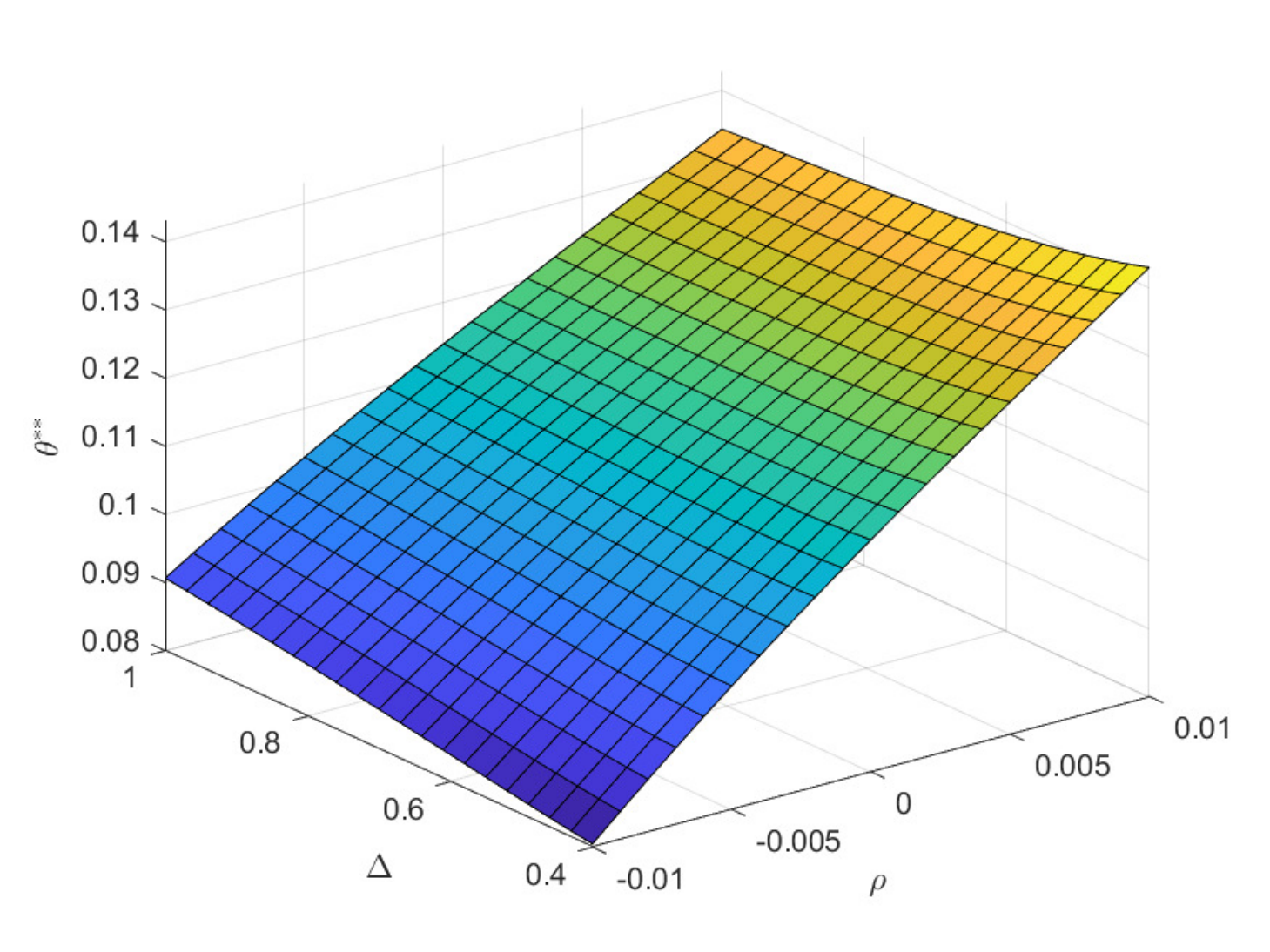}
	\end{subfigure}
	\begin{subfigure}{0.45\textwidth}
		\centering
		\includegraphics[width=0.9\textwidth]{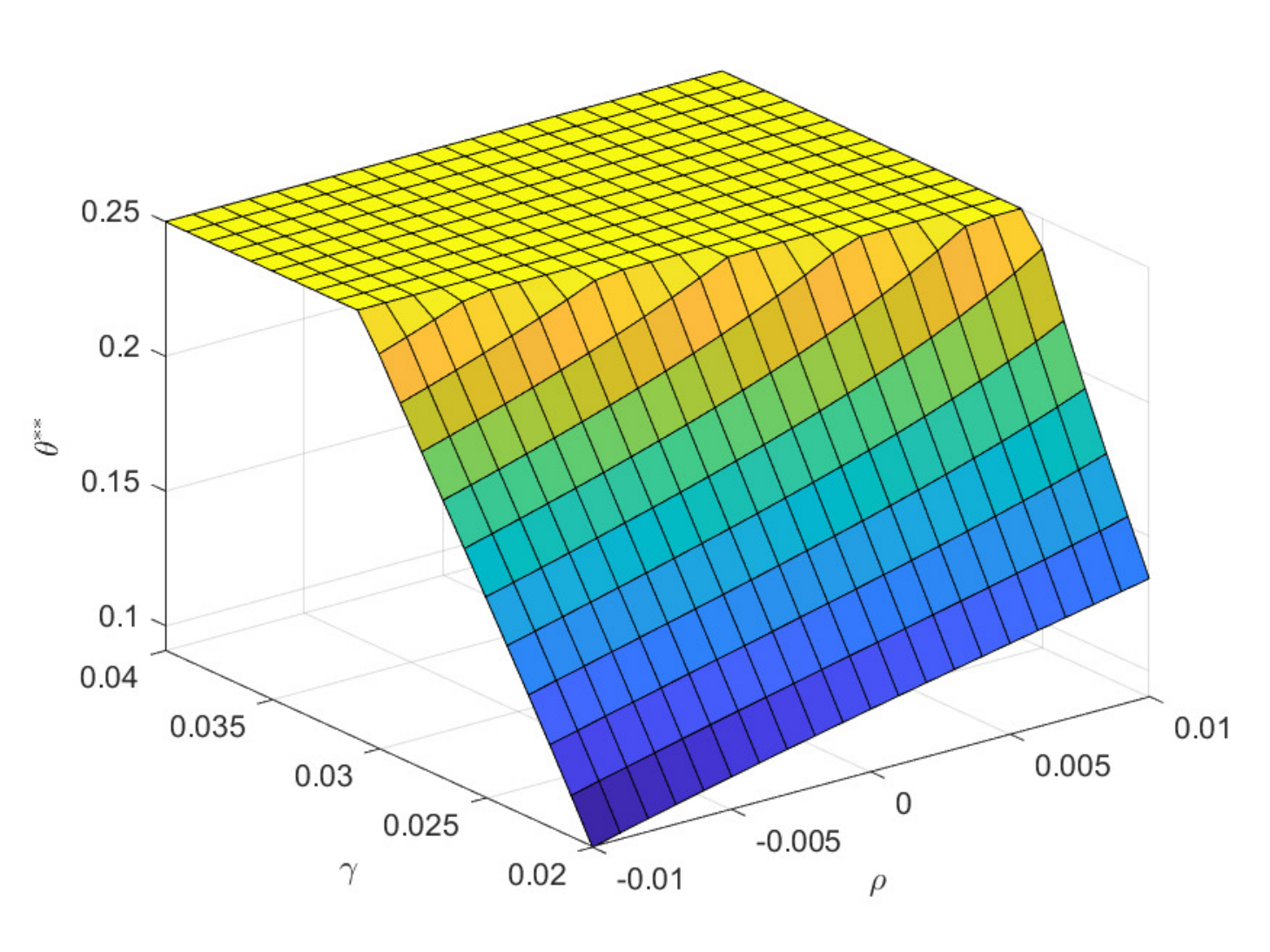}
	\end{subfigure}
	\begin{subfigure}{0.45\textwidth}
		\centering
		\includegraphics[width=0.9\textwidth]{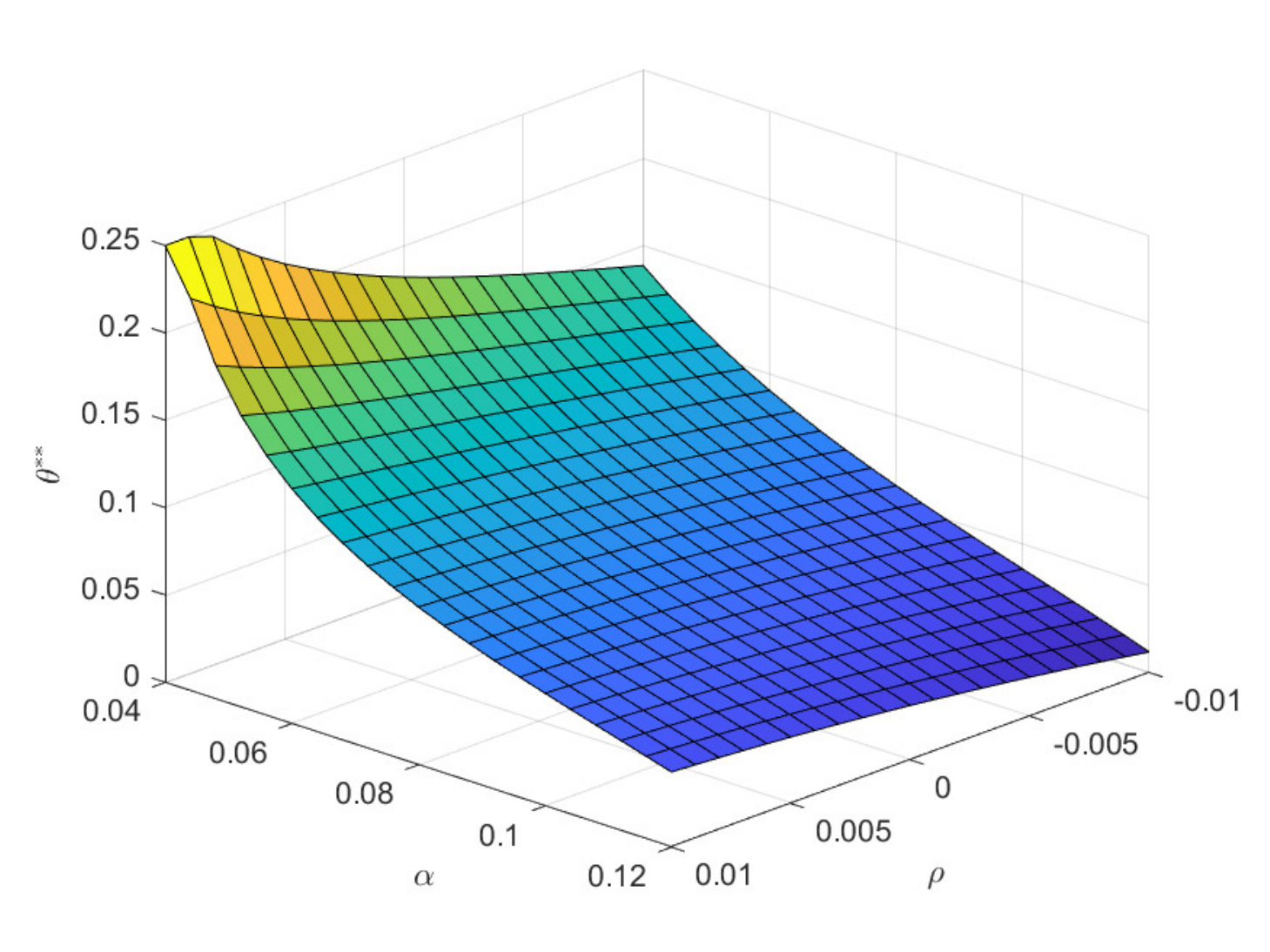}
	\end{subfigure}
	\begin{subfigure}{0.45\textwidth}
		\centering
		\includegraphics[width=\textwidth]{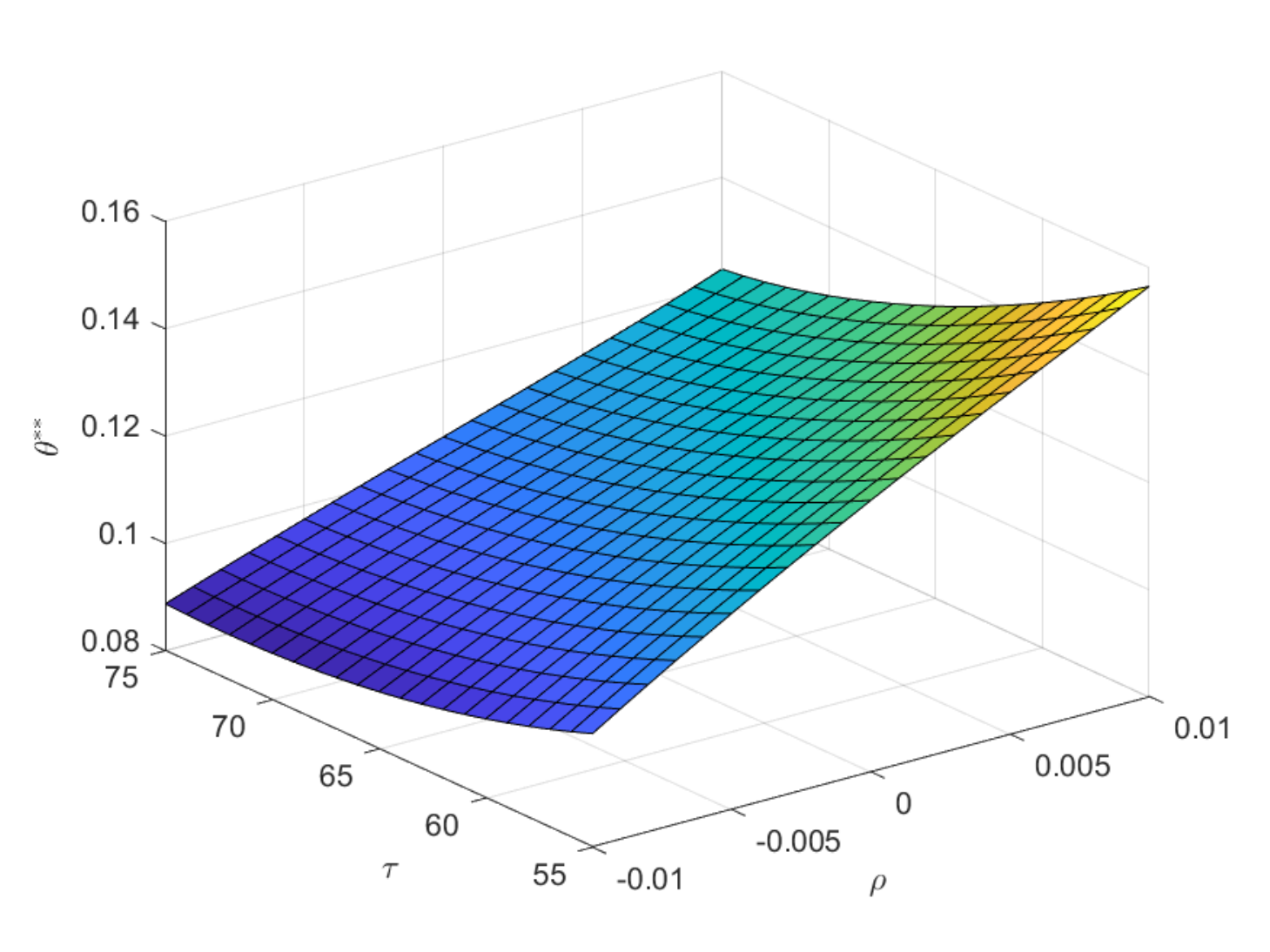}
	\end{subfigure}
	\caption{Impacts of $\rho$ and $\Delta$, $\rho$ and $\gamma$, $\rho$ and $\alpha$, $\rho$ and $\tau$ on $\theta^{**}$}
	\label{figf10}
\end{figure}

\begin{figure}
	\centering
	\begin{subfigure}{0.45\textwidth}
		\centering
		\includegraphics[width=0.9\textwidth]{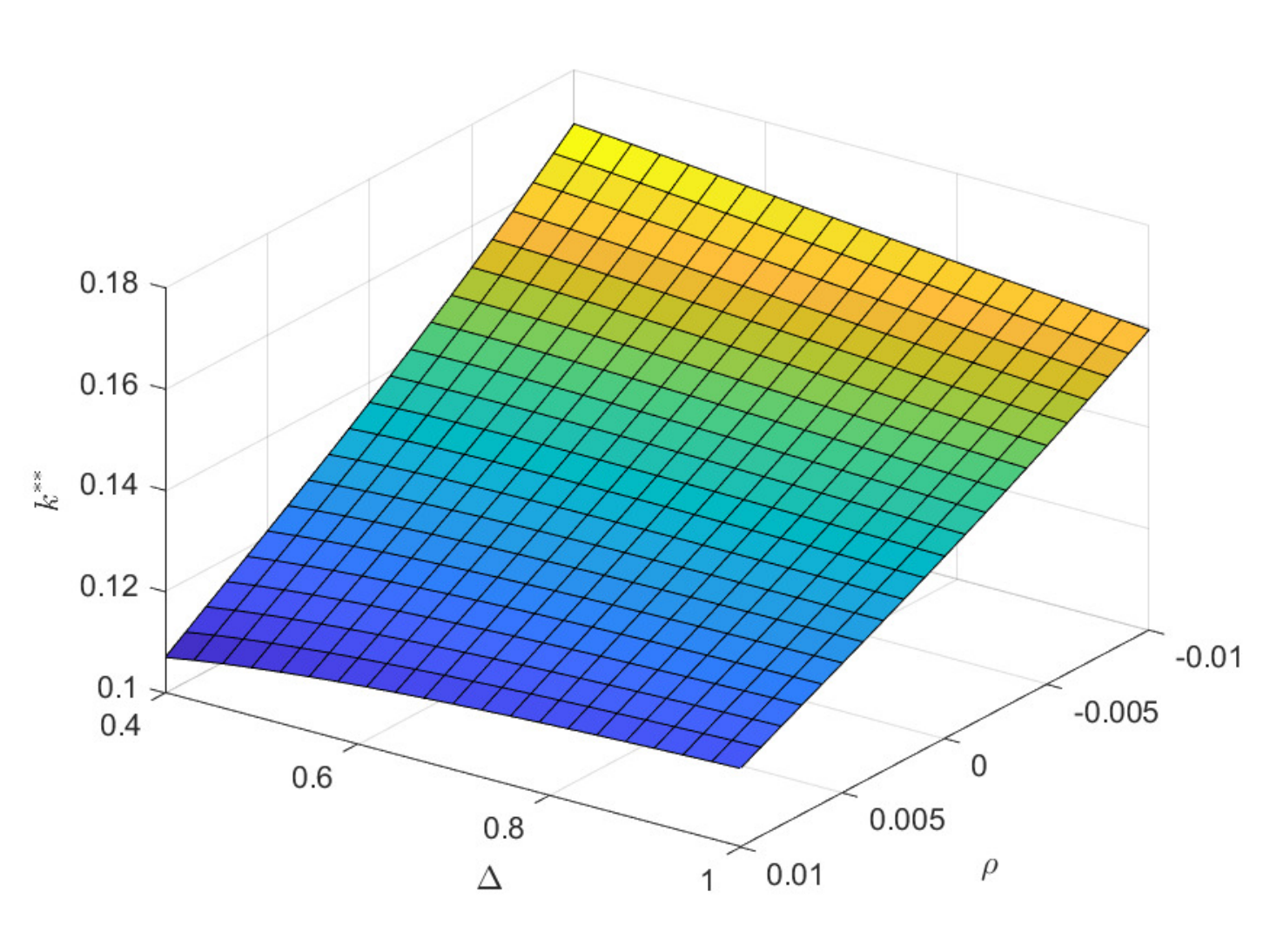}
	\end{subfigure}
	\begin{subfigure}{0.45\textwidth}
		\centering
		\includegraphics[width=0.9\textwidth]{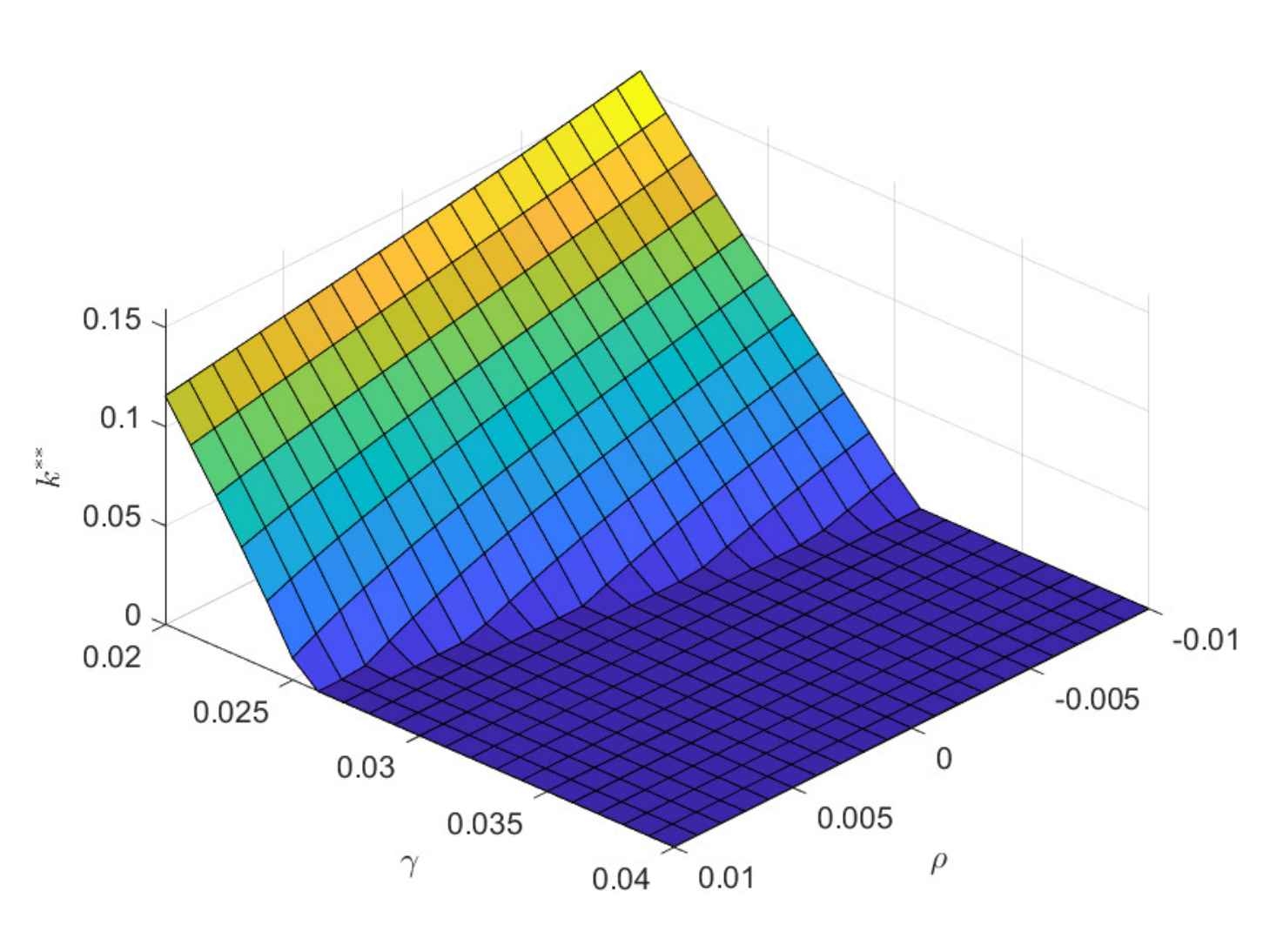}
	\end{subfigure}
	\begin{subfigure}{0.45\textwidth}
		\centering
		\includegraphics[width=0.9\textwidth]{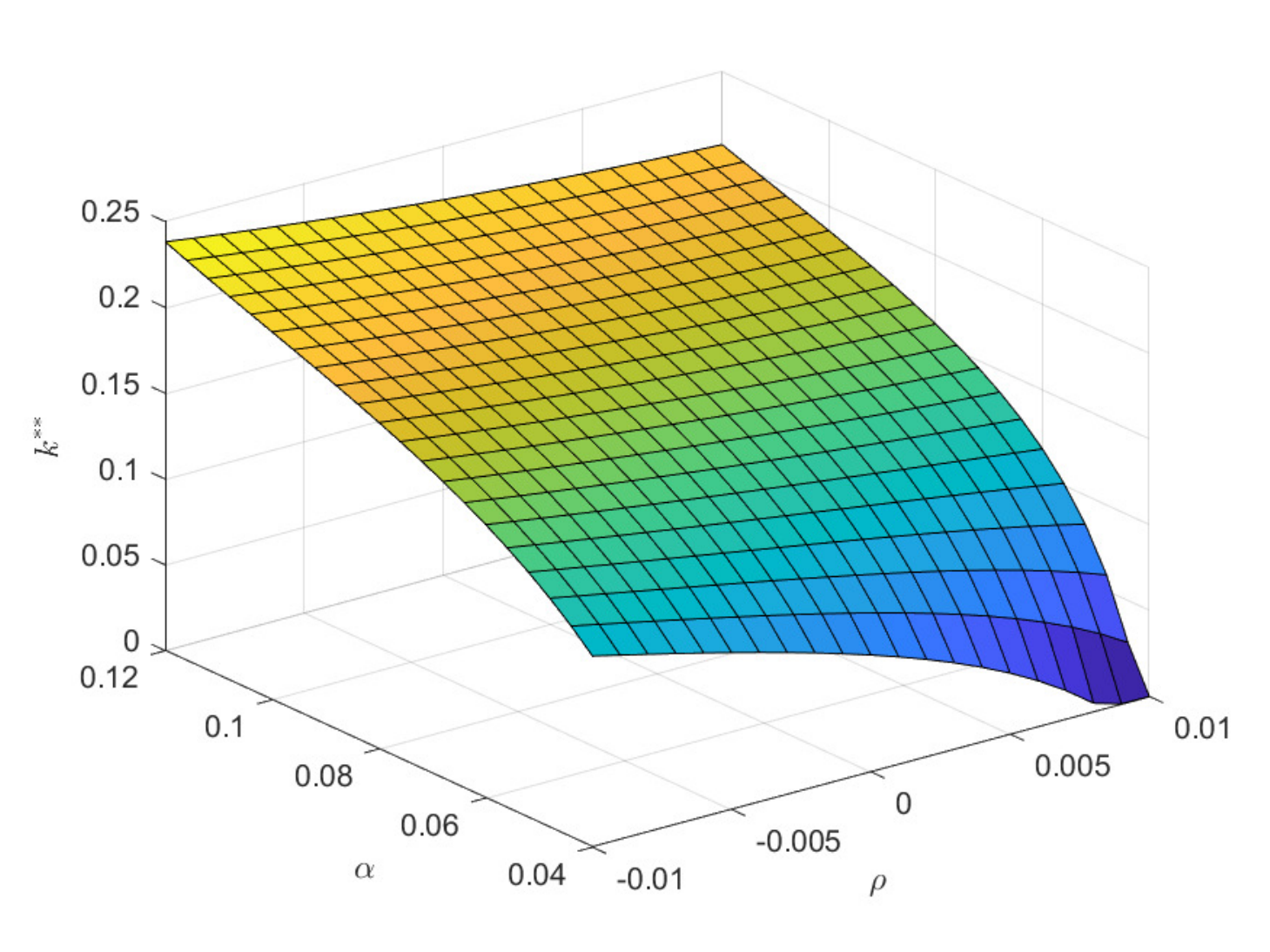}
	\end{subfigure}
	\begin{subfigure}{0.45\textwidth}
		\centering
		\includegraphics[width=0.9\textwidth]{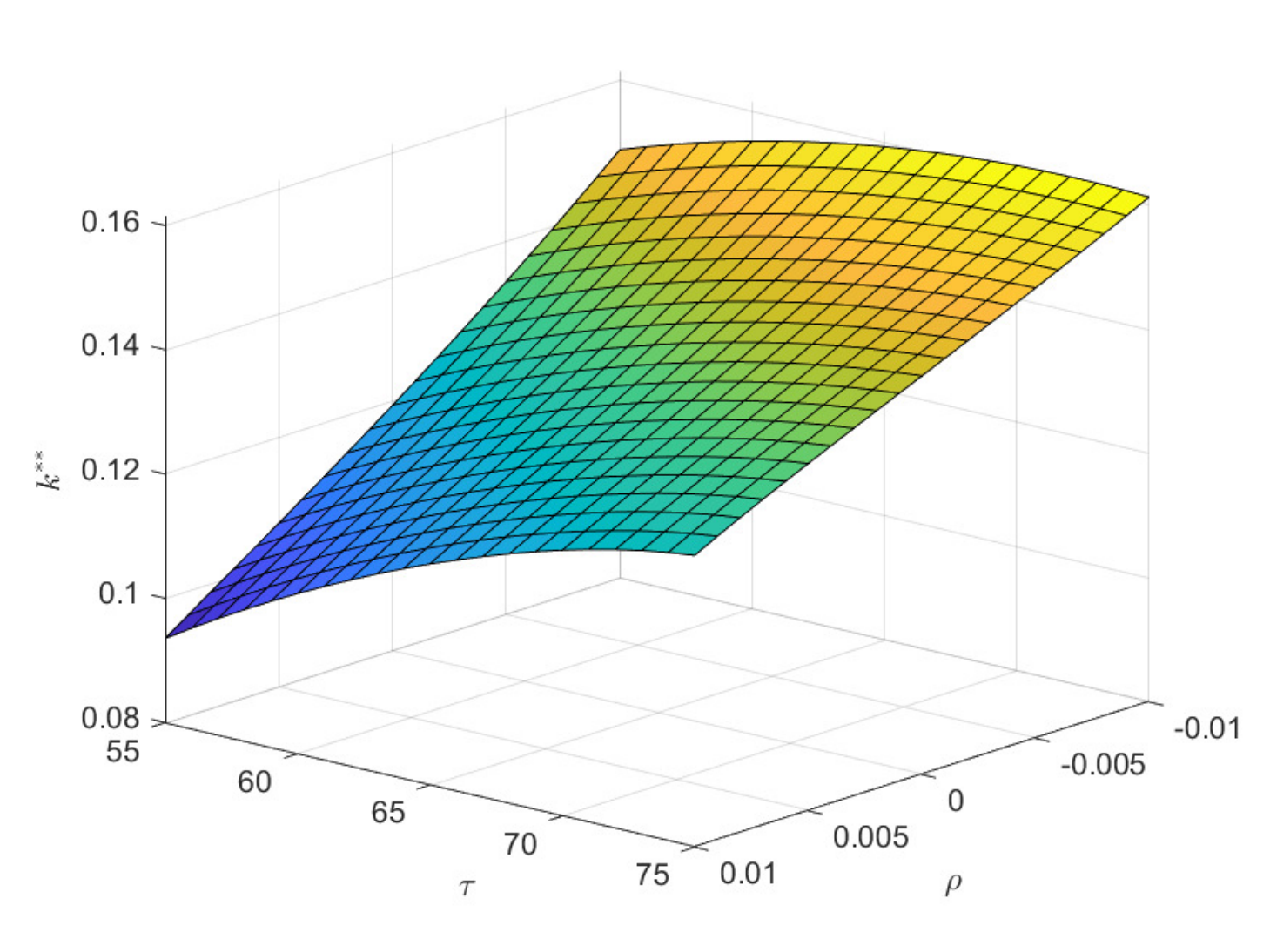}
	\end{subfigure}
	\caption{Impacts of $\rho$ and $\Delta$, $\rho$ and $\gamma$, $\rho$ and $\alpha$, $\rho$ and $\tau$ on $k^{**}$}
	\label{figf11}
\end{figure}
Particularly, we exhibit the results of $\theta^{**}$ and $k^{**}$ under the equal weighted objective in Fig. \ref{figf10} and Fig. \ref{figf11}.
We observed that the impacts of population growth rate $\rho$ on $\theta^{**}$ and $k^{**}$  are 
opposite to the results in Fig. \ref{fig10} and Fig. \ref{fig11}. Under equal weighted objective, lower population growth rate only
decreases the attractiveness of PAYGO pension. Thus, the optimal PAYGO contribution rate decreases and the optimal EET contribution
rate increases accordingly. As such, how the government balances the heterogeneous preference orderings of different cohorts has a decisive impact on the choice of the optimal contribution rates. Therefore, the government needs to carefully consider the issue of intergenerational  differences and fairness, and choose reasonable weight parameters.
\vskip 25pt
\setcounter{equation}{0}
\section{{\bf{Conclusions}}}\label{s7}
In this paper, we study the optimal mix among PAYGO, EET and individual savings. We establish the stochastic differential equations
to depict their heterogeneous characteristics. Particularly, EET pension is included, which has special properties in preferential taxation, return efficiency and longevity risk sharing. According to the participants' value function, each cohort has an exclusive age-dependent preference ordering among the
pensions, which is determined by the comparative efficiencies of the pensions. Accordingly, we establish three critical ages as the
boundaries to distinguish the preference, and eventually obtain the multiple outcomes of the
preference ordering based on the heterogeneous characteristic parameters.  The age-dependent preference orderings are the extension of the ``Samuelson-Aaron'' criterion.
\vskip 5pt
The optimal PAYGO and EET contribution rates are obtained by solving a Nash equilibrium between the participants and the government. Particularly, the objective of the government is to maximize the overall utility of all the participants weighted by the population of each cohort. Naturally, the negative population growth rate reduces the
attractiveness of PAYGO pension. However, it also leads to the rise
of the older cohorts' weight parameter in the government
decision-making. As such, the optimal mix is the comprehensive result of the above two effects. In the countries that suffer from shrinking population and aging problem, the U.S. can further increase the share of PAYGO pension to better improve the welfare of the older cohorts, while China can add a small obligatory share of EET pension to increase the overall utility of the participants.
\vskip 20pt
{\bf Acknowledgements.} The authors acknowledge the support from the National Natural Science Foundation of China (Nos. 12271290, 11871036). The authors are particularly  grateful to the two anonymous referees and the Editor whose suggestions greatly improve  the manuscript's quality. The authors also thank the members of the group of Actuarial Sciences and Mathematical Finance  at the Department of Mathematical Sciences, Tsinghua University for their feedbacks and useful conversations.
\vskip 15pt

\appendix
\renewcommand\thefigure{\Alph{section}\arabic{figure}}
\renewcommand{\theequation}{\thesection .\arabic{equation}}
\section{Proof of Proposition \ref{objective1}}\label{pp4.1}
	Using Eq.(\ref{equ1}) and Eq.(\ref{value}), we have
	\begin{eqnarray*}
		&&\mathrm{E}[V(z,0,W(z),0;z,\theta,k)|W(t_0)=w_0]\\
		&&=\mathrm{E}\left[\frac{1}{\delta}L(z;z)(M_1(z;z)\theta+M_2(z;z)k+M_3(z;z))^{\delta}W^{\delta}(z)|W(t_0)=w_0\right]\\
		&&=\frac{1}{\delta}L(z;z)(M_1(z;z)\theta+M_2(z;z)k+M_3(z;z))^{\delta}\mathrm{E}\left[W^{\delta}(z)|W(t_0)=w_0\right]\\
		&&=\frac{1}{\delta}L_0(M_{01}\theta+M_{02}k+M_{03})^{\delta}w_0^{\delta}e^{\delta[\gamma+\frac{1}{2}(\delta-1)\xi^2](z-t_0)}.
	\end{eqnarray*}
	 Thus, the total utility of the participants who will join the pension after $t_0$ is
	\begin{align*}
		&\int_{t_0}^{\infty}e^{-r(z-t_0)}n(z)\mathrm{E}[V(z,0,W(z),0;z,\theta,k)|W(t_0)=w_0]dz\\
		=&\int_{t_0}^{\infty}\frac{1}{\delta}e^{-r(z-t_0)}n_0e^{\rho z}L_0(M_{01}\theta+M_{02}k+M_{03})^{\delta}w_0^{\delta}e^{\delta\left[ \gamma+\frac{1}{2}(\delta-1)\xi^2\right](z-t_0)}dz\\
		=&\frac{1}{\delta}n_0L_0(M_{01}\theta+M_{02}k+M_{03})^{\delta}w_0^{\delta}e^{\rho t_0}\int_{0}^{\infty}e^{\left[\rho-r+\delta( \gamma+\frac{1}{2}(\delta-1)\xi^2)\right]u}du\\
		=&\left\{
		\begin{array}{lr}
			\infty,~~&\rho+\delta\left[ \gamma+\frac{1}{2}(\delta-1)\xi^2\right]\geq r,\\
			\frac{n_0L_0e^{\rho t_0}w_0^{\delta}}{\delta\left\{r-\rho-\delta\left[ \gamma+\frac{1}{2}(\delta-1)\xi^2\right]\right\}}(M_{01}\theta+M_{02}k+M_{03})^{\delta},~~&\rho+\delta\left[ \gamma+\frac{1}{2}(\delta-1)\xi^2\right]<r.
		\end{array}
		\right.
	\end{align*}
	In order to guarantee that the overall utility will not explode, the optimization problem of the government is well-defined if and only if $\rho+\delta\left[\gamma+\frac{1}{2}(\delta-1)\xi^2\right]<r$ holds. Taking $\delta=\delta_0$, we obtain $\phi_1(\theta,k)$ expressed in Eq.\eqref{utility1}. Similarly, $\phi_2(\theta,k)$ is expressed as Eq.\eqref{utility11}.
\vskip 5pt	
Eventually, we analyze the admissible scope of $\theta$ and $k$. Consider the participants' total equivalent disposable wealth:
\begin{eqnarray*}
		I(z,\theta,k;\theta_0,k_0)&=x_0(z)+[M_1(t_0;z)\theta\!+\! M_2(t_0;z)k \!+\! M_3(t_0;z)]w_0+N(t_0;z)y_0(z),
\end{eqnarray*}
which is supposed to be nonnegative for arbitrary $z\in [t_0+a-\omega, t_0]$. Combining the constraints on the sum of $\theta$ and $k$, we establish the inequalities in Eq.\eqref{equ19} that $\theta$ and $k$ should satisfy.
\vskip 10pt
\section{Estimation of private information $x_{0}(z)$ and $y_{0}(z)$}\label{A.1}
We estimate $x_0(z)$ and $y_0(z)$ in the expectation way. First, according to Eq.\eqref{equ17}, for $z\in[t_0-\omega+a,t_0]$,
$$x_0(z)=\mathrm{E}\{X^*(t_0;z,\theta_0,k_0)\},\ \ y_0(z)=\mathrm{E}\{Y(t_0;z,k_0)\}.$$
Second, solving  SDEs \eqref{equ1}, (\ref{equ2}) and taking expectation, we have
\begin{equation*}
	y_0(z)
	=\left\{
	\begin{array}{lr}
		k_0W_0e^{\alpha t_0}\int_{z}^{t_0}e^{(\gamma-\alpha)s}ds,\phantom{eeee}z\in[t_0-\tau+a,t_0],\\
		\mathrm{E}[Y(z+\tau-a;z,k_0)],
		\phantom{eeee}z\in[t_0-\omega+a,t_0-\tau+a).
	\end{array}
	\right.
\end{equation*}
Last, similar to \cite{HLSY2021}, we use  martingale method to calculate $x_0(z)$. We directly present the final estimation here,
\begin{align*}
	x_0(z)=&-M(t_0;z,\theta_0,k_0)W_0e^{\gamma t_0}-N(t_0;z)y_0(z)\\
	&+\left[\frac{L(t_0;z)}{L(z;z)}\right]^{\frac{1}{1-\delta}}M(z;z,\theta_0,k_0))W_0e^{\gamma t_0}e^{\left[-\gamma+\frac{r}{1-\delta}+\frac{2-\delta}{2(1-\delta)^2}\nu^2\right](t_0-z)}.
\end{align*}
\vskip 10pt
\section{Proof of Proposition \ref{k}}\label{pp5.1}
	We consider the admissible scope of $k(\theta,z)$ for the participants of all ages $\zeta$ given the feasible $\theta$. Taking the fixed $z$ $(z=a+t_0-\zeta)$ in Eq.\eqref{equ19}, we have
	\begin{align}
		k(\theta;z)\in&\left\{
		\begin{array}{lr}
			\left[-\frac{x_0(z)+N(t_0;z)y_0(z)+[M_1(t_0;z)\theta+M_3(t_0;z)]w_0}{M_2(t_0;z)w_0}\vee0,m-\theta\right],~~&M_2(t_0,z)>0,\\
			\left[0,m-\theta\right],&M_2(t_0,z)=0,\\
			\left[0,-\frac{x_0(z)+N(t_0;z)y_0(z)+[M_1(t_0;z)\theta+M_3(t_0;z)]w_0}{M_2(t_0;z)w_0}\wedge(m-\theta)\right],&M_2(t_0,z)<0.
		\end{array}
		\right.
		\label{scope}
	\end{align}
Then, we divide this optimization problem into the following three cases:

 If $\zeta\leq a$, i.e., $z\geq t_0$, the optimization problem of the participants is
	\begin{align*}
		\max_{k} V(z,0,W(z),0;z,\theta,k)
		=\max_k\frac{1}{\delta}L_0(M_{01}\theta+M_{02}k+M_{03})^{\delta}W^{\delta}(z).
	\end{align*}
Taking $z=t_0$ in Eq.\eqref{scope} and  considering the sign of $M_{02}$, we have Eq.\eqref{k1}.
	
	If $\zeta\geq a$, then the participants' optimization problem is
	\begin{align*}
		&\max_k V(t_0,x_0(z),w_0,y_0(z);z,\theta,k)\\
		=&\max_k \frac{1}{\delta}L(t_0;z)\left\{x_0(z)+\left[M_1(t_0;z)\theta+M_2(t_0;z)k+M_3(t_0;z)\right]w_0+N(t_0;z)y_0(z)\right\}^{\delta}.
	\end{align*}

	If $\zeta\geq\tau$, i.e., $z\leq t_0-\tau+a$, ${M_2}(t_0;z)=0$, the change of $k$ does not affect the utility of the participants and the admissible scope of $k(\theta,z)$ is $[0,m-\theta]$. Therefore, the participants arbitrarily choose $k$ between $0$ and  $m-\theta$.
	
	If $a\leq\zeta<\tau$, i.e., $t_0-\tau+a<z\leq t_0$,
	considering the admissible scope in Eq.\eqref{scope} and the sign of $M_2(t_0,z)$, we have Eq.\eqref{k2}.
\vskip 10pt
\section{Proof of Corollary \ref{objective2}}\label{pc5.1}
	The proof is based on Proposition \ref{objective1}. Being fully aware of the optimal feedback $k^*(\theta;z)$ of the participants and taking $k=k^*(\theta,z)$ in Eq.\eqref{utility1}, we derive the government's optimization objective functions $\widetilde{\phi}_{1}(\theta)\triangleq\phi_{1}(\theta,k^*(\theta;z))$ and $\widetilde{\phi}_{2}(\theta)\triangleq\phi_{2}(\theta,k^*(\theta;z))$ expressed in Eqs.\eqref{utility2}-\eqref{utility21}.
	
	Taking $k=k^*(\theta,z)$ in Eq.\eqref{equ19}, we establish the admissible scope of $\theta$ as follows:
	\begin{align*}
		\left\{
		\begin{array}{lr}
			x_0(z)\!+\![(M_1(t_0;z)\!-\! M_2(t_0;z)^+)\theta\!+\! M_2(t_0;z)^+m \!+\! M_3(t_0;z)]w_0\!+\! N(t_0;z)y_0(z)\geq 0,\\
			\phantom{eeeeeeeeeeeeeeeeeeeeeeeeeeeeeeeeeeeeeeeeeeeeeeeeeeeeeeeee}\forall ~z\in[t_0-\omega+a,t_0],\\
			\theta\in[0,m].
		\end{array}
		\right.
	\end{align*}
	
	Denote
	\begin{align*}
		&\underline{\theta}\triangleq\sup_{\{z:M_1(t_0;z)\!-\! M_2(t_0;z)^+>0\}}-\frac{x_0(z)+N(t_0;z)y_0(z)+(M_2(t_0;z)^+m+M_3(t_0;z))w_0}{(M_1(t_0;z)-M_2(t_0;z)^+)w_0},\\
		&\bar{\theta}\triangleq\inf_{\{z:M_1(t_0;z)\!-\! M_2(t_0;z)^+<0\}}-\frac{x_0(z)+N(t_0;z)y_0(z)+(M_2(t_0;z)^+m+M_3(t_0;z))w_0}{(M_1(t_0;z)-M_2(t_0;z)^+)w_0}.
	\end{align*}
We have the admissible scope of $\theta$ is $[0\vee\underline{\theta},m\wedge\bar{\theta}]$.
	
	Furthermore, analyzing the preference orderings in the flowchart of Fig.\ref{flow}, we obtain the specific forms of
	\begin{align*}
		&A_1\triangleq a+t_0-\{z:M_1(t_0;z)\!-\! M_2(t_0;z)^+>0\}=\{\zeta\in[a,\omega]:\widetilde{M_{1}}(\zeta)-\widetilde{M_{2}}(\zeta)^+>0\},\\
		&A_2\triangleq a+t_0-\{z:M_1(t_0;z)\!-\! M_2(t_0;z)^+<0\}=\{\zeta\in[a,\omega]:\widetilde{M_{1}}(\zeta)-\widetilde{M_{2}}(\zeta)^+<0\}.
	\end{align*}
	
	If $\tau\leq\zeta\leq\omega$,  we have $\widetilde{M_{1}}(\zeta)>0$, $\widetilde{M_{2}}(\zeta)=0$ and thus $[\tau,\omega]\subseteq A_1$.
	
	If $a\leq\zeta<\tau$, we need to analyze the cases of the branches in Fig. \ref{flow} respectively under heterogeneous characteristic parameters. Take the last branch as an example. Considering the fact that $\bar{\zeta}<\widetilde{\zeta}<\hat{\zeta}$, we have
	\begin{align*}
		\widetilde{M_{1}}(\zeta)-\widetilde{M_{2}}(\zeta)^+&=(\widetilde{M_{1}}(\zeta)-\widetilde{M_{2}}(\zeta))1_{\{\zeta<\bar{\zeta}\}}+\widetilde{M_{1}}(\zeta)1_{\{\zeta\geq\bar{\zeta}\}}\\
		&=\left\{\begin{array}{lr}
			\widetilde{M_{1}}(\zeta)-\widetilde{M_{2}}(\zeta)<0,\phantom{eee}\zeta<\bar{\zeta},\\
			\widetilde{M_{1}}(\zeta)<0,\phantom{eeeeeeeeeee}\bar{\zeta}\leq\zeta<\hat{\zeta},\\
			\widetilde{M_{1}}(\zeta)=0,\phantom{eeeeeeeeeee}\zeta=\hat{\zeta},\\
			\widetilde{M_{1}}(\zeta)>0,\phantom{eeeeeeeeeee}\zeta>\hat{\zeta}.
		\end{array}
		\right.
	\end{align*}
	Therefore, when the last branch occurs, i.e., $\widetilde{M}_2(a)\geq0,\ \widetilde{M}_2'(\tau)>0,\ \Lambda\leq\Lambda_{FP}  ,\ \hat{\zeta}>\widetilde{\zeta}$, we have $A_1=\left(\hat{\zeta},\omega\right]$ and $A_2=\left[a,\hat{\zeta}\right)$.
	
	Using similar analysis for all the branches in Fig. \ref{flow}, we obtain Eqs.\eqref{A1}-\eqref{A2}.

\vskip 10pt
\section{Estimation of the expected optimal wealth $\E X^{*}(t)$}\label{A.2}
Similar to the method in Appendix \ref{A.1}, we  obtain the expected optimal wealth $\mathrm{E}[X^*(t;z)]$ of the participants at different ages, where $t\in[t_0\vee z,z+\omega-a]$.

 If $t_0<z$, the PAYGO and EET contribution rates are always $\theta^*$ and $k^*$. As such, applying the results of Appendix \ref{A.1} and changing $\theta_0$, $k_0$, $t_0$ into $\theta^*$, $k^*$, $t$,  we have
\begin{equation*}
	\mathrm{E}[Y(t;z,k^*)]
	=\left\{
	\begin{array}{lr}
		k^*W_0e^{\alpha t}\int_{z}^{t}e^{(\gamma-\alpha)s}ds,&z\in[t-\tau+a,t],\\
		\mathrm{E}[Y(z+\tau-a;z,k^*)],&z\in[t-\omega+a,t-\tau+a),
	\end{array}
	\right.
\end{equation*}
and
\begin{align*}
    \mathrm{E}X^*(t;z)
    &=-M(t;z,\theta^*,k^*)W_0e^{\gamma t}-N(t;z)\mathrm{E}[Y(t;z,k^*)]\\
    &+\left[\frac{L(t;z)}{L(z;z)}\right]^{\frac{1}{1-\delta}}M(z;z,\theta^*,k^*)W_0e^{\gamma t}e^{\left[-\gamma+\frac{r}{1-\delta}+\frac{2-\delta}{2(1-\delta)^2}\nu^2\right](t-z)}.
\end{align*}

 If $z\leq t_0\leq z+\omega-a$, the government reselects the optimal PAYGO and EET contribution rates at $t_0$. Under this condition, resolving SDEs \eqref{equ1}, (\ref{equ2}) and taking expectation, we have

If $z\leq t_0\leq z+\tau-a$,
\begin{equation*}
	\mathrm{E}[Y(t;z,k_0,k^*)]=\left\{
	\begin{array}{lr}
		W_0e^{\alpha t}\left[k_0\int_{z}^{t_0}e^{(\gamma-\alpha)s}ds+k^*\int_{t_0}^{t}e^{(\gamma-\alpha)s}ds\right],\phantom{e}t\in[t_0,z+\tau-a],\\
		\mathrm{E}[Y(z+\tau-a;z,k_0,k^*)],\phantom{eeeeeeeeee}t\in(z+\tau-a,z+\omega-a).
	\end{array}
	\right.
\end{equation*}

If $z+\tau-a<t_0\leq z+\omega-a$, 
\begin{eqnarray*}
	\mathrm{E}[Y(t;z,k_0,k^*)]\!=\!k_0W_0e^{\alpha(z+\tau-a)}\!\int_{z}^{z+\tau-a}\! e^{(\gamma-\alpha)s}ds,~~t\in[t_0,z+\omega-a].
\end{eqnarray*}

Using martingale method, we derive the main results which are different from the ones in \cite{HLSY2021}.
 \begin{eqnarray*}
    \mathrm{E}[X^*(t;z)]=-M(t;z,\theta^*,k^*)W_0e^{\gamma t}-N(t;z)\mathrm{E}[Y(t;z,k_0,k^*)]+\left[\frac{\kappa e^{\frac{2-\delta}{2(\delta-1)}\nu^2t}}{e^{rt}L(t;z)}\right]^{\frac{1}{\delta-1}},
\end{eqnarray*}
where
\begin{align*}
\kappa
        &=L(t_0;z)\left[x_0(z)+N(t_0;z)y_0(z)+M(t_0;z,\theta^*,k^*)W_0e^{\gamma t_0}\right]^{\delta-1}.
    \end{align*}
Thus,
    \begin{align*}
       \mathrm{E}[X^*(t;z)] =&-M(t;z,\theta^*,k^*)W_0e^{\gamma t}-N(t;z)\mathrm{E}[Y(t;z,k_0,k^*)]\\
        +&\!\left[\!\frac{L(t_0;z)}{L(t;z)}\!\right]\!^{\frac{1}{\delta-1}}\!\left[\!x_0(z)\!+\! N(t_0;z)y_0(z)\!+\! M(t_0;z,\theta^*,k^*)W_0e^{\gamma t_0}\!\right]\! e^{\left(\frac{r}{1-\delta}+\frac{2-\delta}{2(1-\delta)^2}\nu^2\right)t}.
    \end{align*}
\vskip 10pt
\section{Demographic model with ``baby-boom''}\label{babyboom}
To depict the demographic changes more precisely, we explore a new demographic model with a one-off shock to depict the ``baby-boom" impacts during 1946$\sim$1964. Although there are no closed-form results, we can establish the similar preference orderings  and the optimal mix through numerical methods. We assume that the ``baby-boom" cohorts enter the labor market within the time interval $[t_1,t_2]$. The population of the new entrants $n(t)$ follows the Logistic growth model in the period of $t \in [t_1,t_2]$ and the exponential growth model when $t<t_1$ and $t>t_2$. Besides, the population growth rate takes constant value $\rho_1$ and $\rho_2$ when $t<t_1$ and $t>t_2$, respectively.  We have
\begin{align*}
	dn(t)=
	\left\{
	\begin{array}{llr}
		\rho_1n(t)dt,&t\leq t_1,\\
		\kappa(1-\frac{n(t)}{n_m})n(t)dt,&t_1<t\leq t_2,\\
		\rho_2n(t)dt,&t>t_2,
	\end{array}
	\right.
	\\
	n(t_1)=n_1,\phantom{eeeeeeeeeeeeeeeeeeeeeeeeeeeeee}
\end{align*}
where $\kappa$ is the growth rate without limitation, $n_m$ is the maximal population of the new entrants. Besides, $n(t)$ is continuous at $t_1$ and $t_2$, thus $n(t)$ has the following form:
\begin{align*}
	n(t)=\left\{
	\begin{array}{llr}
		n_1e^{\rho_1(t-t_1)},&t\leq t_1\\
		\frac{n_m}{1+(\frac{n_m}{n_1}-1)e^{-\kappa(t-t_1)}},&t_1<t\leq t_2,\\
		n(t_2)e^{\rho_2(t-t_2)},&t>t_2,
	\end{array}
	\right.
\end{align*}
where $n(t_2)=\frac{n_m}{1+(\frac{n_m}{n_1}-1)e^{-\kappa(t_2-t_1)}}$.
\vskip5pt
Under this new model, the inflow and the outflow of PAYGO pension are as follows:
\begin{align*}
	\Theta(t)&=\int_{a}^{\tau}n(t-u+a)s(u)du\\
	&=\left\{
	\begin{array}{lr}
		\int_{a}^{\tau}n(t_1)e^{\rho_1(t-t_1-u+a)}e^{-A(u-a)-\frac{B}{\ln c}(c^u-c^a)}du,\phantom{eeeeeeeeeeeeeeeeeeeeeee}t\leq t_1,\\
		\int_{a}^{\tau}\big[n(t_1)e^{\rho_1(t-t_1-u+a)}1_{\{t-u+a\leq t_1\}}+\frac{n_m}{1+(\frac{n_m}{n(t_1)}-1)e^{-\kappa(t-t_1-u+a)}}1_{\{t_1<t-u+a<t_2\}}\\
		+n(t_2)e^{\rho_2(t-t_2-u+a)}1_{\{t-u+a\geq t_2\}}\big]e^{-A(u-a)-\frac{B}{\ln c}(c^u-c^a)}du,\phantom{e}t_1<t<t_2+\tau-a,\\
		\int_{a}^{\tau}n(t_2)e^{\rho_2(t-t_2-u+a)}e^{-A(u-a)-\frac{B}{\ln c}(c^u-c^a)}du,\phantom{eeeeeeeeeeeeeeee}t\geq t_2+\tau-a,
	\end{array}
	\right.
\end{align*}
\begin{align*}
	\Sigma(t)&=\int_{\tau}^{\omega}n(t-u+a)s(u)du\\
	&=\left\{
	\begin{array}{lr}
		\int_{\tau}^{\omega}n(t_1)e^{\rho_1(t-t_1-u+a)}e^{-A(u-a)-\frac{B}{\ln c}(c^u-c^a)}du,\phantom{eeeeeeeeeeeeeeee}t\leq t_1+\tau-a,\\
		\int_{\tau}^{\omega}\big[n(t_1)e^{\rho_1(t-t_1-u+a)}1_{\{t-u+a\leq t_1\}}+\frac{n_m}{1+(\frac{n_m}{n(t_1)}-1)e^{-\kappa(t-t_1-u+a)}}1_{\{t_1<t-u+a<t_2\}}\\
		+n(t_2)e^{\rho_2(t-t_2-u+a)}1_{\{t-u+a\geq t_2\}}\big]e^{-A(u-a)-\frac{B}{\ln c}(c^u-c^a)}du,\phantom{e}\\
		\phantom{eeeeeeeeeeeeeeeeeeeeeeeeeeeeeeeeeeeeeeeeeeeee}t_1+\tau-a<t<t_2+\omega-a,\\
		\int_{\tau}^{\omega}n(t_2)e^{\rho_2(t-t_2-u+a)}e^{-A(u-a)-\frac{B}{\ln c}(c^u-c^a)}du,\phantom{eeeeeeeeeeeeeeee}t\geq t_2+\omega-a.
	\end{array}
	\right.
\end{align*}
The influence of the ``baby-boom'' starts at time $t_1$ and ends at time $t_2+\omega-a$. The inverse of dependency ratio $\Lambda(t)\triangleq\frac{\Theta(t)}{\Sigma(t)}$ is related to time $t$.
\vskip 10pt
When $\Lambda(t)$ is related to $t$, the main results in the baseline model are still valid. We can  obtain a similar closed-form solution as in Theorem \ref{them1}. However, different from Theorem \ref{them1}, we have
\begin{align*}
	M_1(t;z)&=\int_{t\vee(z+\tau-a)}^{z+\omega-a}\Lambda(s)e^{\left(r+\frac{1}{\sigma}\xi(\mu-r)-\gamma\right)(t-s)}ds-\int_{t\wedge(z+\tau-a)}^{z+\tau-a}(1-\tau_1)e^{\left(r+\frac{1}{\sigma}\xi(\mu-r)-\gamma\right)(t-s)}ds\\
	&\triangleq M_1^+(t;z)-M_1^-(t;z),
\end{align*}
where
\begin{align*}
	&M_1^+(t;z)=\int_{t\vee(z+\tau-a)}^{z+\omega-a}\Lambda(s)e^{\left(r+\frac{1}{\sigma}\xi(\mu-r)-\gamma\right)(t-s)}ds,\\
	&M_1^-(t;z)=\int_{t\wedge(z+\tau-a)}^{z+\tau-a}(1-\tau_1)e^{\left(r+\frac{1}{\sigma}\xi(\mu-r)-\gamma\right)(t-s)}ds.
\end{align*}
\vskip5pt
The cohort's total contribution $M_1^-(t;z)$ is independent of $\Lambda(\cdot)$, while the cohort's total benefit $M_1^+(t;z)$ depends on $\Lambda(\cdot)$. Therefore, we cannot determine the sign of $M_1(t;z)$ and $M_1(t;z)-M_2(t;z)$  analytically like in Theorems \ref{them2}-\ref{them3} and we need to solve it numerically.

\vskip5pt
The parameter settings are consistent with the ones in the previous baseline model. For the new parameters, we set $t_1=-40$, $t_2=-20$, $n_1=11.91$, $n_m=100$, $\kappa=0.05$, $\rho_1=-0.0025$, $\rho_2=-0.005$. Based on these parameters, the new dependency ratio varies from $\Lambda^{-1}=0.6738$ to $\Lambda^{-1}=0.7271$. The new critical age of PAYGO pension and individual savings is $\hat{\zeta}=36.9301$. The new critical age of PAYGO and EET pensions is $\widetilde{\zeta}=46.6149$.    All non-retired participants prefer EET pension to individual savings. The ``baby-boom'' briefly reduces the pressure of labor shortage and thus makes PAYGO pension more attractive. However, the impacts of ``baby-boom'' on the outcomes are limited.

\setcounter{figure}{0}
\begin{figure}
	\includegraphics[totalheight=5.5cm]{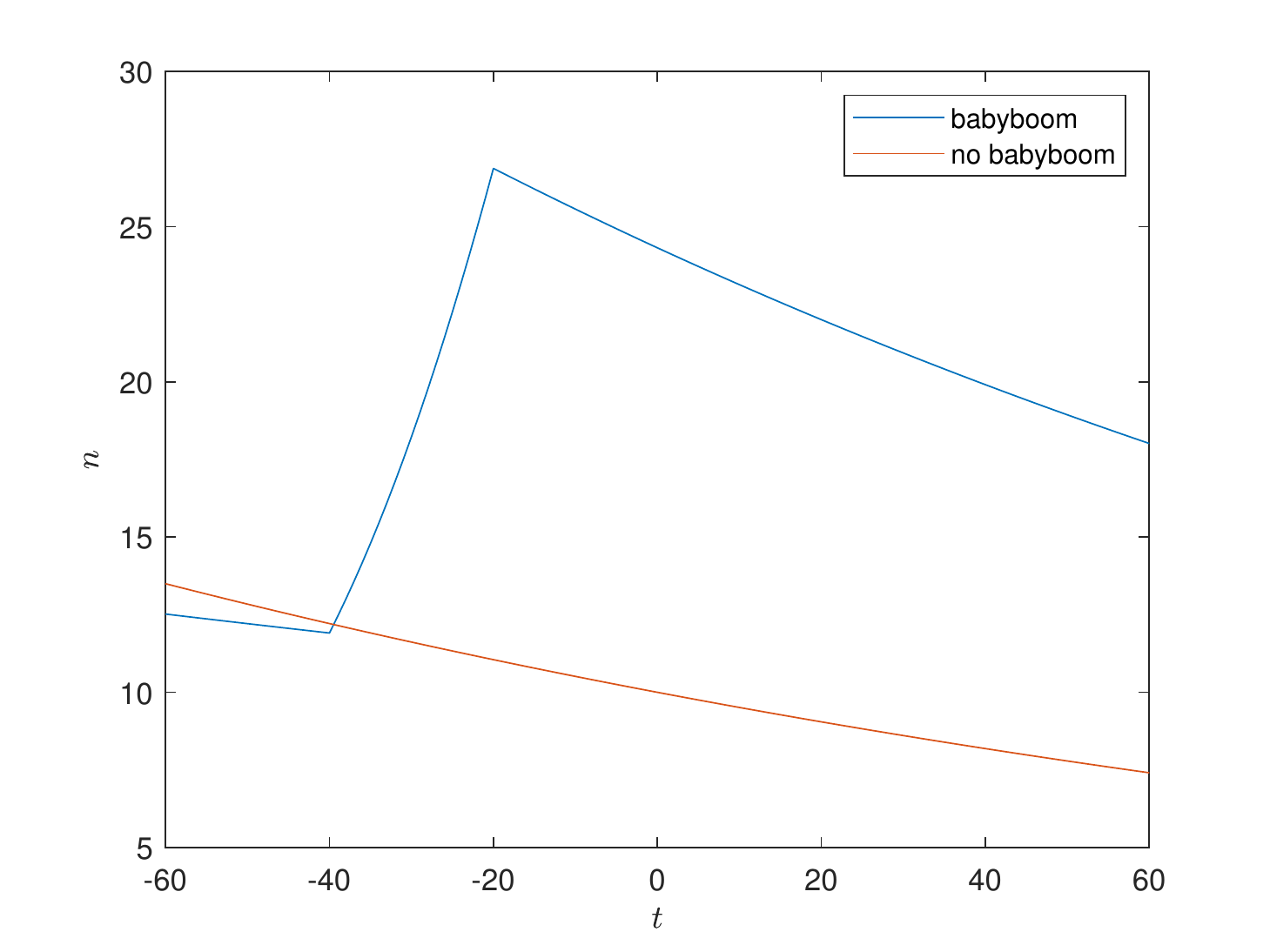}
	\caption{Population of new entrants $n(t)$ in two cases}
	\label{figb1}
\end{figure}
\vskip5pt
Fig. \ref{figb1} shows the population of new entrants $n(t)$ in two cases. In the case without ``baby-boom'', the population declines at a constant negative growth rate. In the ``baby-boom'' case, the population of new entrants goes through a phase of rapid rise, which depicts baby-boomers entering the labor market during the time interval $[t_1,t_2]$. Moreover, the new entrants' population doubles at the end of the ``baby-boom'' period $t_2$.
\begin{figure}
	\includegraphics[totalheight=5.5cm]{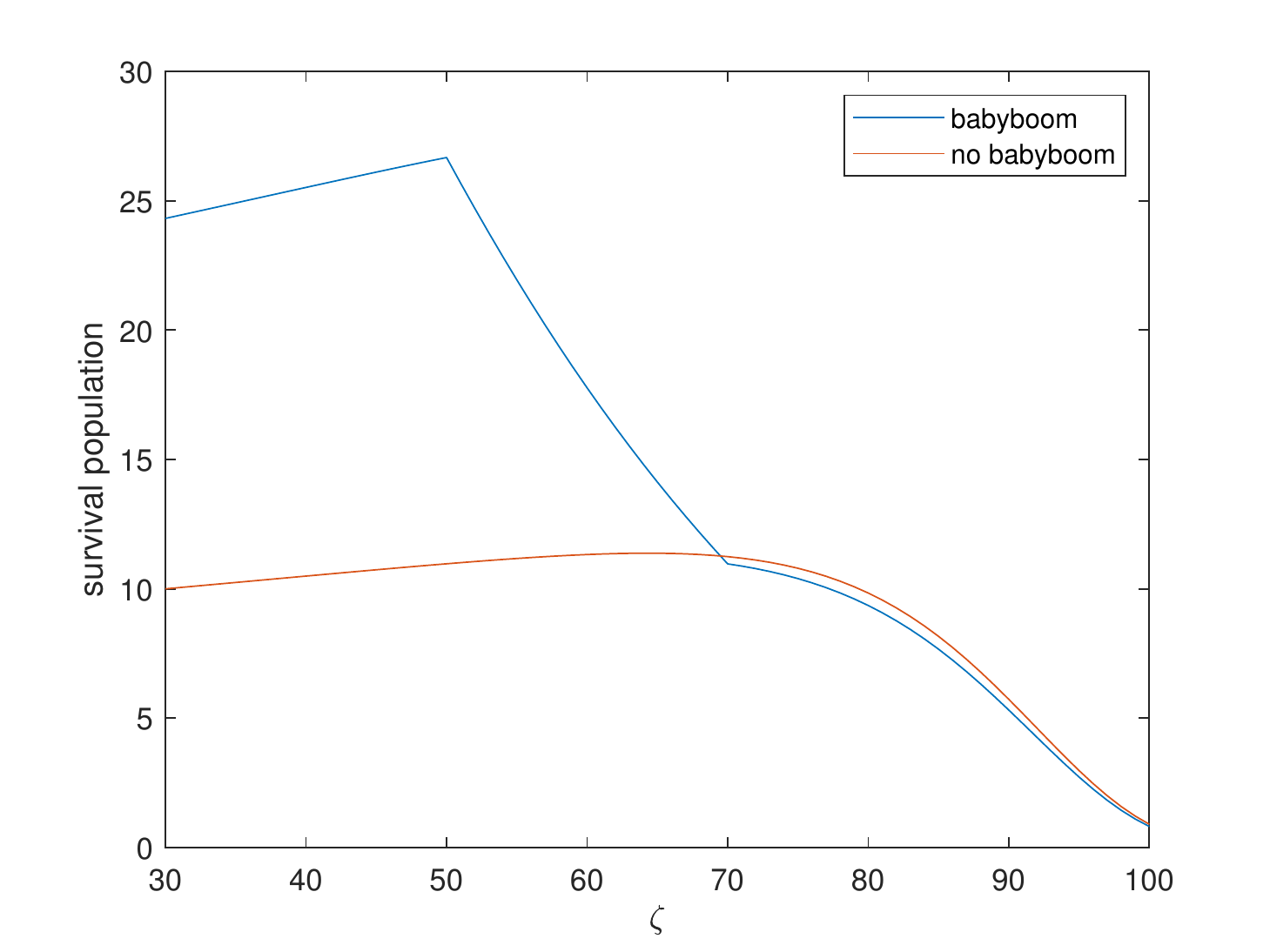}
	\caption{Survival population of different ages at time $t_0$ in two cases}
	\label{figb2}
\end{figure}
\vskip5pt
In Fig. \ref{figb2}, we exhibit the survival population of different ages at time $t_0$ in two cases. Compared to the case without ``baby-boom'',  the working population rises dramatically and the retired population is almost unchanged in the ``baby-boom'' case. Because baby-boomers are between 50 and 70 years old at time $t_0$. They are at the beginning of retirement and most are still working. It is worth noting that when baby boomers begin to retire, the ``demographic dividend" gradually vanishes.
\vskip5pt
\begin{figure}
	\includegraphics[totalheight=5.5cm]{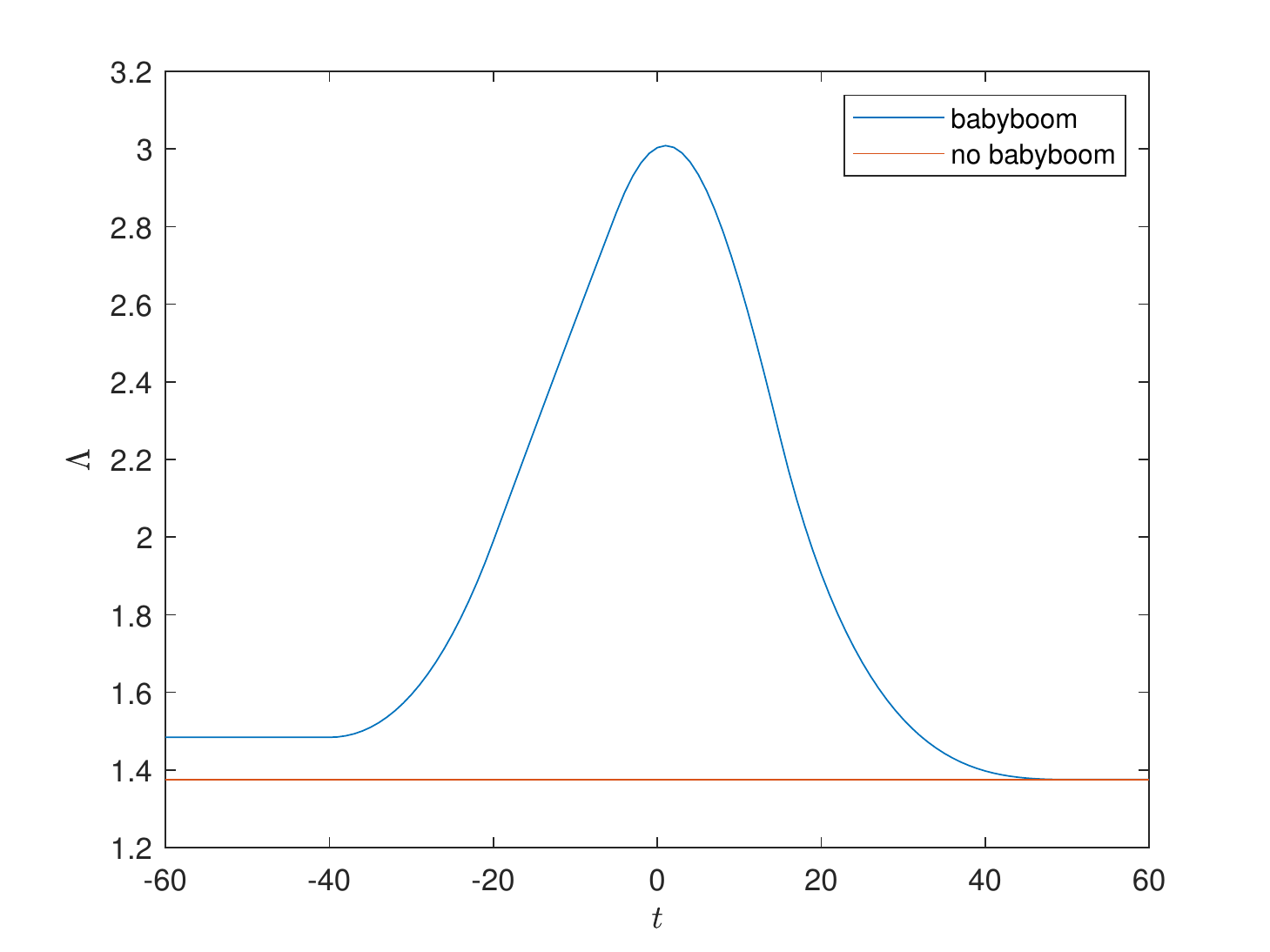}
	\caption{Inverse of dependency ratio $\Lambda(t)$ in two cases}
	\label{figb3}
\end{figure}
In Fig. \ref{figb3}, we study the inverse of dependency ratio $\Lambda(t)$ in two cases. In the case without ``baby-boom'', the inverse of dependency ratio is a constant independent of time $t$. While in the ``baby-boom'' case, the inverse of dependency ratio has a hump shape and the peak is more than twice in the case without ``baby-boom''. This huge peak reflects the abundant labor supply brought by baby-boomers. Moreover, consistent with Fig. \ref{figb2}, the peak is around $t_0$. That is, the ``demographic dividend" begins to disappear at time $t_0$. Therefore, the government had better reselect the pension contribution rates at time $t_0=0$ to cope with this trend. Interestingly, the peak of $\Lambda(t)$ appears later than that of $n(t)$. Because $\Lambda(t)$ is the proportion of the working population to the retirement population, it takes time to reach the peak after the baby-boomers enter the labor market. Notably, $\Lambda(t)$ has impacts on the preference of the cohorts by affecting the total benefit $M_1^+(t;z)$. However, the impacts of ``baby-boom'' will fade away in 50 years. Thus, its impact on the preference of the young cohorts who will retire after a long time is limited.
\begin{figure}
	\centering
	\begin{subfigure}{0.49\textwidth}
		\centering
		\includegraphics[width=1\textwidth]{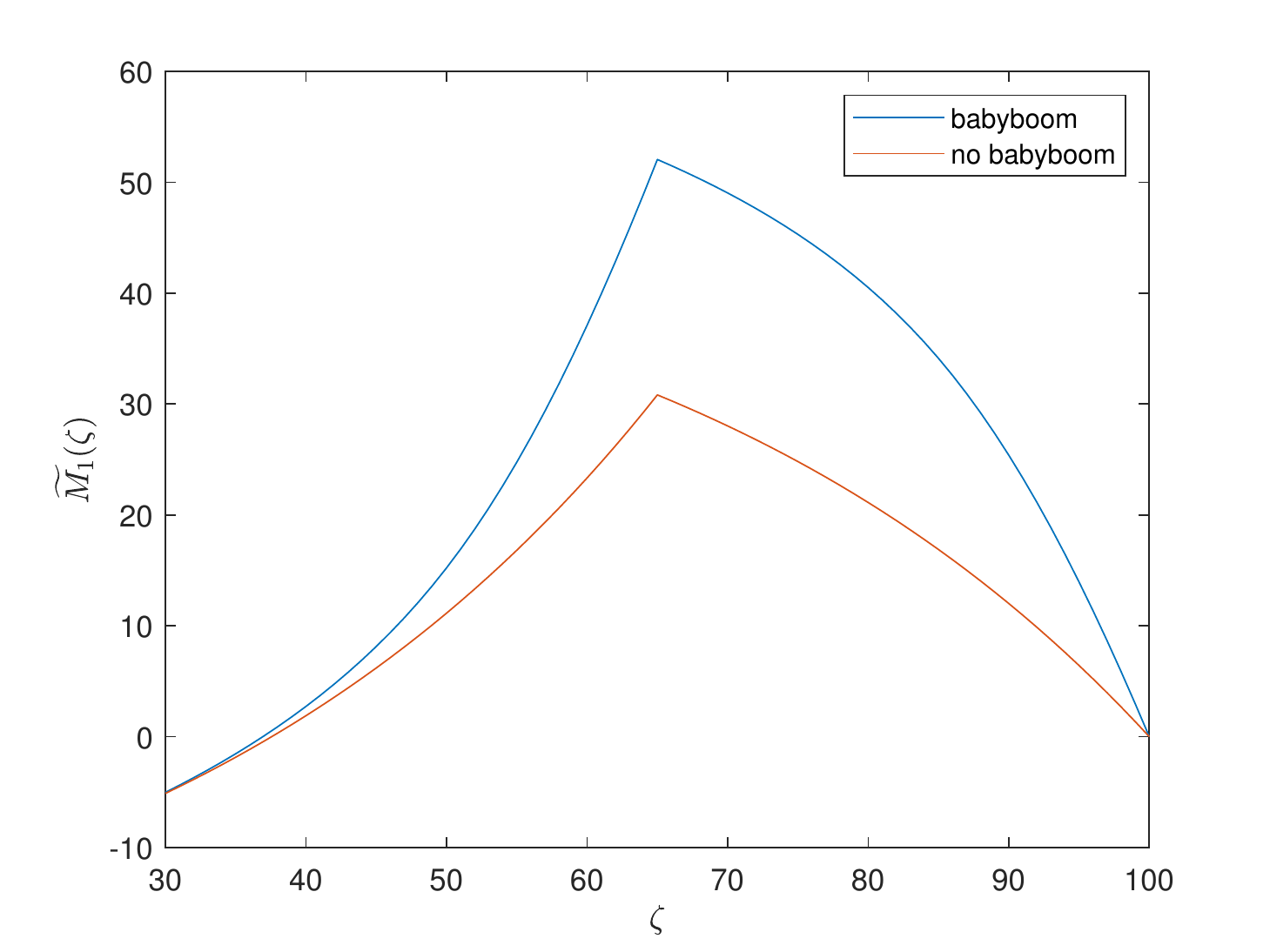}
	\end{subfigure}
	\begin{subfigure}{0.49\textwidth}
		\centering
		\includegraphics[width=1\textwidth]{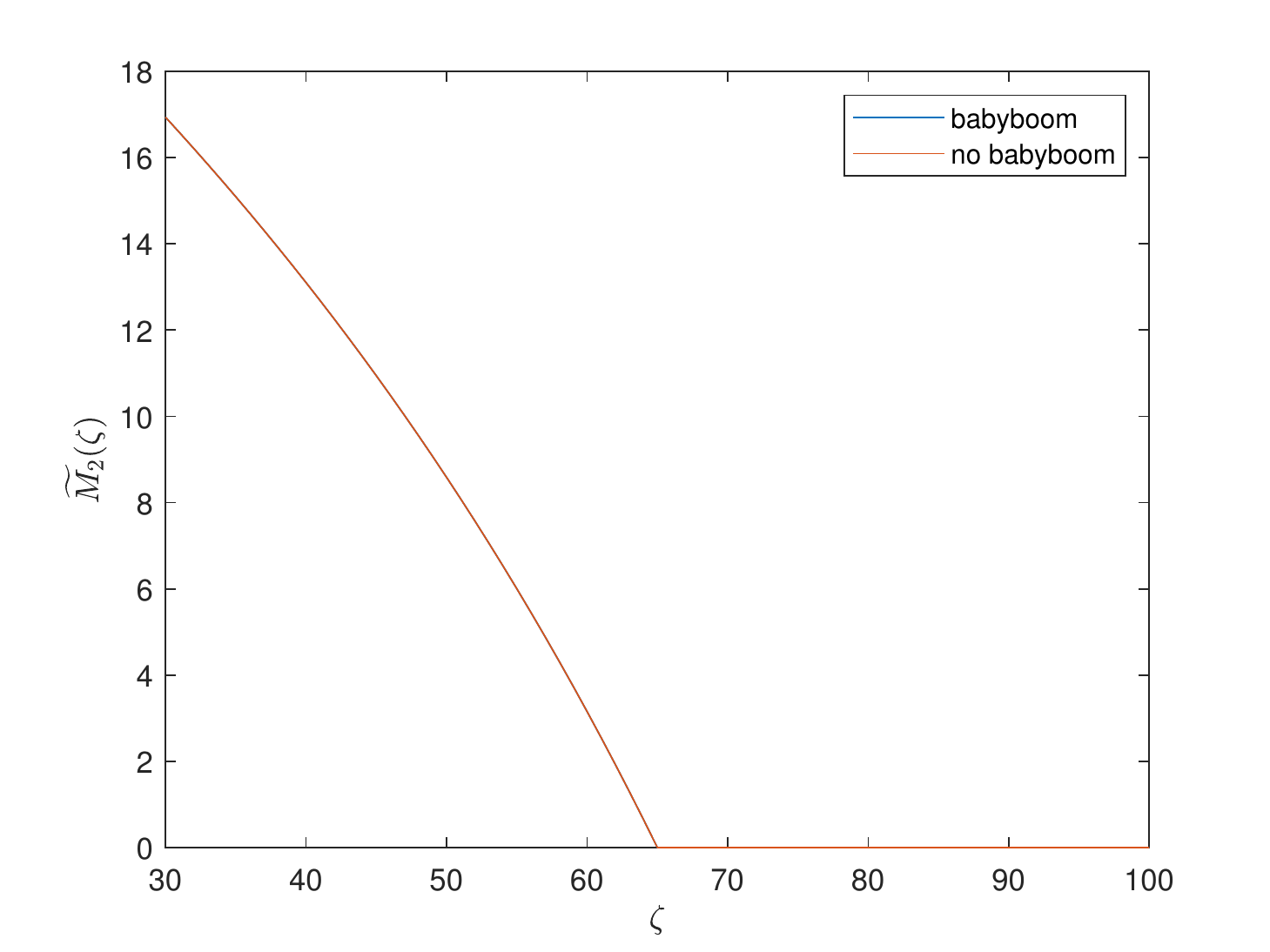}
	\end{subfigure}
	\begin{subfigure}{0.49\textwidth}
		\centering
		\includegraphics[width=1\textwidth]{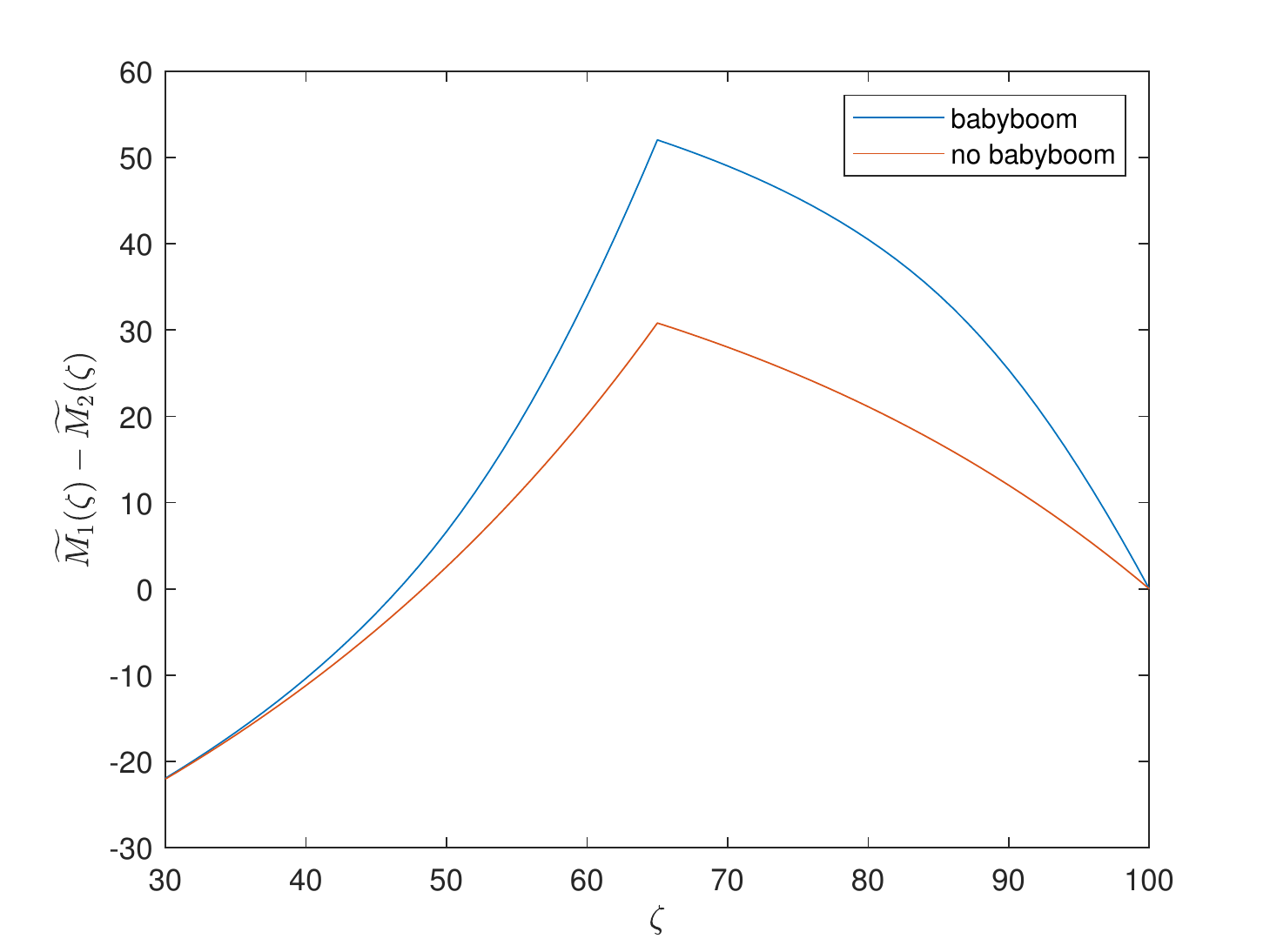}
	\end{subfigure}
	
	\caption{Preference among PAYGO, EET and individual savings in two cases}
	\label{figb4}
\end{figure}
\vskip5pt
Fig. \ref{figb4} shows the preference among PAYGO, EET and individual savings in two cases. In the first figure, although the ``baby-boom" case exhibits a sharper trend, $\widetilde{M_1}(\zeta)$ in two cases have similar hump shapes and both of them have unique null points $\hat{\zeta}\in[36,38]$. And the null point for the ``baby-boom'' model is smaller than the one for the baseline model. Thus, more younger cohorts prefer PAYGO pension in the ``baby-boom'' model. We also observe that ``baby-boom'' has great impacts on the utility of the older cohorts, but has less impacts on the younger cohorts. It is because that the younger cohorts will retire after the ``baby-boom'' impact vanishes. The similar results are valid in the third figure. Moreover, preference between EET pension and individual savings is independent of the demographic model. Therefore, the lines of $\widetilde{M_2}(\zeta)$  of the two cases coincide in the second figure.

\vskip 20pt
\setcounter{equation}{0}

\end{document}